\PassOptionsToPackage{hyphens}{url}
\PassOptionsToPackage{dvipsnames,svgnames,x11names}{xcolor}
\documentclass[
  12pt]{article}
\usepackage{tikz}
\usepackage{amsthm}
\usepackage{geometry}
\usepackage{setspace}
\usepackage{multirow}
\usepackage{arxiv}
\usepackage{enumitem}

\usepackage{amsmath}
\usepackage{caption}

\newcommand{\calN}{\mathcal{N}}

\def\beqr{\begin{eqnarray}}
\def\eeqr{\end{eqnarray}}
\def\beqrs{\begin{eqnarray*}}
\def\eeqrs{\end{eqnarray*}}

\def\wh{\widehat}
\def\wt{\widetilde}

\newcommand{\todistribution}{\xrightarrow{\text{d}}}

\newcommand{\defn}{\stackrel{\mbox{{\tiny def}}}{=}}

\def\cov{\mathrm{cov}}

\usepackage{xcolor}

\usepackage{bm}

\newcommand{\wb}{\overline}
\newtheorem{assumption}{Assumption}
\newtheorem{Lemma}{Lemma}
\newtheorem{corollary}{Corollary}
\usepackage{comment}
\newtheorem{theorem}{Theorem}
\usepackage[export]{adjustbox}

\usepackage{epsfig}
\usepackage{threeparttable}
\graphicspath{{Images/}}
\usepackage{xr}

\numberwithin{equation}{section}
\usepackage{amsmath,amssymb}
\usepackage{iftex}
\ifPDFTeX
  \usepackage[T1]{fontenc}
  \usepackage[utf8]{inputenc}
  \usepackage{textcomp} % provide euro and other symbols
\else % if luatex or xetex
  \usepackage{unicode-math}
  \defaultfontfeatures{Scale=MatchLowercase}
  \defaultfontfeatures[\rmfamily]{Ligatures=TeX,Scale=1}
\fi
\usepackage{lmodern}
\ifPDFTeX\else  
\fi
\IfFileExists{upquote.sty}{\usepackage{upquote}}{}
\IfFileExists{microtype.sty}{% use microtype if available
  \usepackage[]{microtype}
  \UseMicrotypeSet[protrusion]{basicmath} % disable protrusion for tt fonts
}{}
\makeatletter
\@ifundefined{KOMAClassName}{% if non-KOMA class
  \IfFileExists{parskip.sty}{%
    \usepackage{parskip}
  }{% else
    \setlength{\parindent}{0pt}
    \setlength{\parskip}{6pt plus 2pt minus 1pt}}
}{% if KOMA class
  \KOMAoptions{parskip=half}}
\makeatother
\usepackage{xcolor}
\makeatletter
\ifx\paragraph\undefined\else
  \let\oldparagraph\paragraph
  \renewcommand{\paragraph}{
    \@ifstar
      \xxxParagraphStar
      \xxxParagraphNoStar
  }
  \newcommand{\xxxParagraphStar}[1]{\oldparagraph*{#1}\mbox{}}
  \newcommand{\xxxParagraphNoStar}[1]{\oldparagraph{#1}\mbox{}}
\fi
\ifx\subparagraph\undefined\else
  \let\oldsubparagraph\subparagraph
  \renewcommand{\subparagraph}{
    \@ifstar
      \xxxSubParagraphStar
      \xxxSubParagraphNoStar
  }
  \newcommand{\xxxSubParagraphStar}[1]{\oldsubparagraph*{#1}\mbox{}}
  \newcommand{\xxxSubParagraphNoStar}[1]{\oldsubparagraph{#1}\mbox{}}
\fi
\makeatother

\usepackage{longtable,booktabs,array}
\usepackage{calc} % for calculating minipage widths
\usepackage{etoolbox}
\makeatletter
\patchcmd\longtable{\par}{\if@noskipsec\mbox{}\fi\par}{}{}
\makeatother
\IfFileExists{footnotehyper.sty}{\usepackage{footnotehyper}}{\usepackage{footnote}}
\makesavenoteenv{longtable}
\usepackage{graphicx}
\makeatletter
\def\maxwidth{\ifdim\Gin@nat@width>\linewidth\linewidth\else\Gin@nat@width\fi}
\def\maxheight{\ifdim\Gin@nat@height>\textheight\textheight\else\Gin@nat@height\fi}
\makeatother
\setkeys{Gin}{width=\maxwidth,height=\maxheight,keepaspectratio}
\makeatletter
\def\fps@figure{htbp}
\makeatother

\makeatletter
\@ifpackageloaded{caption}{}{\usepackage{caption}}
\AtBeginDocument{%
\ifdefined\contentsname
  \renewcommand*\contentsname{Table of contents}
\else
  \newcommand\contentsname{Table of contents}
\fi
\ifdefined\listfigurename
  \renewcommand*\listfigurename{List of Figures}
\else
  \newcommand\listfigurename{List of Figures}
\fi
\ifdefined\listtablename
  \renewcommand*\listtablename{List of Tables}
\else
  \newcommand\listtablename{List of Tables}
\fi
\ifdefined\figurename
  \renewcommand*\figurename{Figure}
\else
  \newcommand\figurename{Figure}
\fi
\ifdefined\tablename
  \renewcommand*\tablename{Table}
\else
  \newcommand\tablename{Table}
\fi
}
\@ifpackageloaded{float}{}{\usepackage{float}}
\floatstyle{ruled}
\@ifundefined{c@chapter}{\newfloat{codelisting}{h}{lop}}{\newfloat{codelisting}{h}{lop}[chapter]}
\floatname{codelisting}{Listing}

\makeatother
\makeatletter
\@ifpackageloaded{caption}{}{\usepackage{caption}}
\@ifpackageloaded{subcaption}{}{\usepackage{subcaption}}
\makeatother
\ifLuaTeX
  \usepackage{selnolig}  % disable illegal ligatures
\fi
\usepackage[]{natbib}
\usepackage{bookmark}

\IfFileExists{xurl.sty}{\usepackage{xurl}}{} % add URL line breaks if available
\hypersetup{
  pdftitle={Title},
  pdfauthor={Author 1; Author 2},
  pdfkeywords={3 to 6 keywords, that do not appear in the title},
  colorlinks=true,
  linkcolor={blue},
  filecolor={Maroon},
  citecolor={Blue},
  urlcolor={Blue},
  pdfcreator={LaTeX via pandoc}}
\title{\bf Identification, Estimation, and Inference for Sequential Causally Ordered Mediation Pathways} 
  \author{Ritoban Kundu$^\dagger$, Canyi Chen$^\dagger$, and Peter X.K. Song\textsuperscript{*}\\\textit{Department of Biostatistics, University of Michigan, Ann Arbor, U.S.A.}\\
  $^\dagger$ Co-first Author, \textsuperscript{*} Corresponding Author}

\newcommand\independent{\protect\mathpalette{\protect\independenT}{\perp}}
\def\independenT#1#2{\mathrel{\rlap{$#1#2$}\mkern2mu{#1#2}}}

\newcommand \pxs[1] {{\color{red}{#1}}}

\usepackage{cleveref}
\crefname{theo}{Theorem}{Theorems}
\crefname{lemm}{Lemma}{Lemmas}
\crefname{coro}{Corollary}{Corollaries}
\crefname{prop}{Proposition}{Propositions}
\counterwithout{equation}{section}
\begin{document}

\maketitle
\bigskip
\begin{abstract}
\noindent
Mediation analysis plays an essential role in uncovering the mechanisms by which an exposure influences an outcome through intermediate pathways. While methodological advances for single-mediator settings are well established, rigorous tools for handling multiple, sequentially ordered mediators remain underdeveloped. Such settings are common in applications like longitudinal cohort studies, where exposures operate through complex chains of mediators over time. In this paper, we establish a general framework for sequentially ordered mediators that enables the identification and formal decomposition of the total effect into component path-specific effects. We also develop estimation procedures for mediation estimands with both continuous and categorical outcomes. Furthermore, we introduce a new testing strategy to conduct inference using a studentized statistic combined with data-splitting. This approach achieves valid Type I error control under the composite null across diverse data-generating mechanisms. Through extensive simulations and applications to two large-scale empirical studies, we demonstrate that the proposed methodology provides reliable estimation, valid inference, and improved power for discovering novel mediation pathways.
\end{abstract}

\noindent%
{\it Keywords:} Causal mediation; Longitudinal cohort study; Pathway analysis;
Sequential mediators; Type-I error control.
\vfill

\doublespacing

\section{Introduction}\label{sec-intro}
Mediation analysis involving sequential causally causally mediators (SCOMs) is a critical tool for providing systematic insights into the complex pathways underlying scientific phenomena. One example of this kind arising from environmental health studies pertains to the origin of disease, under the so-called \emph{Developmental Origins of Health and Disease (DOHaD)} hypothesis: early-life exposures (e.g. toxicant exposure during pregnancy) may influence individual health trajectories across the lifespan; see, e.g., Figure \ref{fig:dohad}, which serves as a running example. These trajectories are defined by interdependent SCOM pathways that reflect specific developmental windows \citep{heckman2006case,halfon2014lifecourse}. Many biomedical studies examine these phases to elucidate the mechanisms underlying the transition from health to chronic disease, which typically involve a progression of physiological alterations rather than a single isolated change. Figure \ref{fig:dohad} (top) depicts this journey from the embryonic stage through birth and childhood to adulthood. Similar structural dependencies are observed in other domains, such as biomarker pathways in systems biology and sequential policy interventions in accident prevention. Within these fields, many longitudinal studies collect time-course data where a mediator is measured repeatedly; such temporal sequences represent a primary and highly relevant instance of the SCOM framework.

The growth and developmental trajectory in Figure~\ref{fig:dohad} (top) is often translated and formulated into a directed acyclic graph (DAG) shown in Figure \ref{fig:dohad} (bottom). For ease of exposition, we consider a simple case of two SCOMs, \emph{say}, birth weight ($M_1$) at age 0 and a serum lipidomic biomarker recorded sequentially over adolescent period ($M_2$) from infancy to youth. Prenatal exposure ($S$) to an environmental toxicant, such as lead measured from maternal blood samples, has been associated with physiological alterations in growth and development, which can be quantified by body mass index (BMI) ($Y$) \citep{wang2019AssociationMaternalExposure}. In this scientific investigation, multiple pathways arise naturally to explain the influence of exposure ($S$) (i.e. the at-origin cause disease) on outcome of interest ($Y$), including $S\rightarrow M_1 \rightarrow Y$, $S\rightarrow M_2 \rightarrow Y$ and $S\rightarrow M_1 \rightarrow M_2 \rightarrow Y$ shown in Figure \ref{fig:dohad} (bottom).

In the current literature of mediation analysis, a DAG with a single mediator has been extensively investigated because it is conceptually simple, analytically manageable, and computationally feasible. Recently, extending such simplest mediation analysis setting to include additional mediators, has made several important advances in identifying path-specific effects in the scenarios of SCOMs \citep{daniel2015causal, lin2017interventional, zhou2022semiparametric}. However, existing work has been largely focused on estimation, whereas rigorous methods of
hypothesis testing for the presence of mediation pathways in SCOMs remain almost absent. As a matter of fact, even in the single-mediator case (i.e. $M_2$ being absent in Figure \ref{fig:dohad} (bottom)), methods for hypothesis testing on mediation effects pose significant technical challenges. Standard tests such as Sobel’s test and the MaxP test are known to overly conservative in terms of Type-I error control due to the co-existence of two different distributions of the test statistics under the null hypothesis \citep{mackinnon2002comparison, fritz2007RequiredSampleSize, barfield2017testing}. Developing a valid and powerful inferential framework for the broader SCOM setting is a primary objective of this paper.

The contributions of this paper are threefold. First, we establish a rigorous identification framework for SCOMs that enables the valid decomposition of the total effect into sequentially ordered mediation components. This unified framework accommodates continuous, binary, and count outcomes. Second, we develop maximum likelihood estimation procedures and derive the joint asymptotic properties of the resulting estimators across these diverse outcome types. Third, we propose the Sequentially Ordered Mediation Effect Test (SOMET). Inspired by the classical Sobel test \citep{sobel1982AsymptoticConfidenceIntervals} and utilizing a data-splitting approach, SOMET is both computationally tractable and theoretically grounded. We demonstrate, both analytically and through extensive simulations, that SOMET maintains valid type-I control without prior knowledge of the underlying null configuration. The remainder of this paper is organized as follows. Section \ref{section:identify} defines the general problem and proposes a Generalized Structural Equation Model (G-SEM) under the $\text{SCOM}(K)$ setup. Section \ref{sec:cont} introduces the non-parametric identification assumptions and the general identification theorem, followed by the derivation of path-specific mediation effects under linear structural models. Section \ref{sec:cat} extends this methodology to categorical outcomes via Generalized Linear Model (GLM) specifications. In Section \ref{sec:est}, we develop the maximum likelihood estimation framework and establish the asymptotic distribution of the path-specific estimators. Section \ref{section:test} introduces the SOMET procedure and provides its theoretical justification. Section \ref{section:simulation} presents comprehensive simulation studies evaluating the finite-sample performance of our estimators and the SOMET procedure across diverse outcome types and null configurations. Finally, Section \ref{section:real_data} illustrates the practical utility of our method using the NIH All of Us EHR data to study life-course risk factors for diabetes onset and the ELEMENT cohort study to assess prenatal influences on adolescent obesity. 

\section{Formulation}\label{section:identify}
This section introduces the background, structural equation modeling (SEM) approach, and the potential outcome paradigm.

\subsection{Setup}
Let \( S \) denote an exposure and \( Y \) an outcome of interest, where both variables may be continuous or categorical. For any \( K \geq 1 \), consider an ordered sequence of mediators denoted by \( \overline{\boldsymbol{M}}_K = (M_1, M_2, \dots, M_K) \), where the ascending order of subscripts reflects the assumed causal ordering, namely \( M_1 \rightarrow M_2 \rightarrow \cdots \rightarrow M_K \). Such ordering may naturally arise from study designs in certain scientific contexts, such as human growth and development, in Figure \ref{fig:dohad}. We refer to the sequence of mediators as \( \text{SCOM}(K) \), which corresponds to the setting of \textit{Sequential Causally Ordered Mediators of order $K$}. For instance, for $K = 1$, \( \overline{\boldsymbol{M}}_1 = (M_1) \), and for $K = 2$, \( \overline{\boldsymbol{M}}_2 = (M_1, M_2) \), and so on. Let \( \boldsymbol{W} \) denote the set of baseline covariates, that may confound the relationships in the DAG between the exposure, SCOM($K$) and outcome. The central goal is to assess whether, and to what extent, the influence of the exposure $S$ on the outcome $Y$ may be mediated through the SCOM($K$) \( \overline{\boldsymbol{M}}_K \).

Given SCOM($K$) with $\overline{\boldsymbol M}_K$, we introduce a general structural equation modeling (G-SEM) approach. The G-SEM is a hierarchical system of models, consisting of conditional models $[S|\boldsymbol W]$, $ [M_1|S,\boldsymbol W],\cdots,[M_K|M_{K-1},\cdots, M_1,S,\boldsymbol W]$, and $[Y|M_{K},\cdots, M_1,S,\boldsymbol W]$ specified as follows:
\begin{align}
\left\{
\begin{aligned}
    S &= h_S(\boldsymbol{W}, \epsilon_S), \\
    M_1 &= h_{M_1}(S, \boldsymbol{W}, \epsilon_{M_1}),M_2= h_{M_2}(S,\overline{M}_1, \boldsymbol{W}, \epsilon_{M_2}) \dots, M_K = h_{M_K}(S, \overline{\boldsymbol{M}}_{K-1}, \boldsymbol{W}, \epsilon_{M_K}), \\
    Y &= h_Y(S, \overline{\boldsymbol{M}}_{K}, \boldsymbol{W}, \epsilon_Y),\label{eq:npsem}
\end{aligned}
\right.
\end{align}
%till here
where in the spirit of a Markov chain, each mediator $M_k$ is modeled as a function of the preceding mediators  $\overline{\boldsymbol{M}}_{k-1}=(M_1,M_2,\cdots,M_{k-1})$ to capture their sequential evolutions in the DAG, and the error terms $\epsilon_S,\{\epsilon_{M_k}\}_{k=1}^K$ and $\epsilon_Y$ are mutually independent. The link functions $h_S$, $\{h_{M_k}\}_{k=1}^K$, and $h_Y$ regulate the respective underlying data-generating processes and will be specified according to certain model types with respect to particular data scenarios. The above G-SEM formulation \eqref{eq:npsem} can accommodate both continuous and categorical variables for $S$ and $Y$, while all mediators $M_1,M_2,\cdots,M_K$ are confined as continuous variables in this paper.

\subsection{Potential Outcomes Paradigm}
In this section, we utilize the potential outcomes paradigm to formally define the key causal estimands within the SCOM$(K)$ setting. For any $s \in \text{supp}(S)$ and $\overline{\boldsymbol{m}}_K = (m_1, \dots, m_K) \in \text{supp}(\overline{\boldsymbol{M}}_K)$, let $Y(s, \overline{\boldsymbol{m}}_K)$ denote the potential outcome that would be observed if the exposure $S$ were set to $s$ and the mediators were jointly intervened upon such that $\overline{\boldsymbol{M}}_K = \overline{\boldsymbol{m}}_K$. Here, $\text{supp}(\overline{\boldsymbol{M}}_K) := \prod_{k=1}^K \text{supp}(M_k)$, where $\text{supp}(X)$ denotes the support of the random variable $X$. The potential values of the SCOMs $\overline{\boldsymbol{M}}_K$ satisfy the prespecified sequential causal ordering defined in equation \eqref{eq:npsem}. In the example of two mediators, the potential value of the first mediator is \( M_1(s) \), and the second is \( M_2(s, m_1) \). Generally, the potential value of the \( k \)th mediator, $M_k$ is denoted by the counterfactual, \( M_k(s, \overline{\boldsymbol{m}}_{k-1}) \), where \( k = 1, \ldots, K \), reflecting it's dependence on the values of both exposure $s$ and all preceding mediators $(m_1,\cdots,m_{k-1})$.

\par
In the context of SCOM($K$), nested counterfactuals arise naturally and are defined as follows. Let $\boldsymbol{s} = (s_1, \ldots, s_K) \in \text{supp}(S)^{K}$ be a $K$-element vector of exposure values. Based on the G-SEM in \eqref{eq:npsem}, we define the recursive potential outcomes for the mediators, following their pre-specified ordering,
\begin{align}
\left\{
\begin{aligned}
    & \overline{M}_1(\boldsymbol{s}) = M_1(s_1), \hspace{0.2cm} \overline{M}_2(\boldsymbol{s}) = \left(\overline{M}_1(\boldsymbol{s}), M_2(s_2, \overline{M}_1(\boldsymbol{s}))\right), \\
    & \hspace{0.2in} \vdots \\
    & \overline{M}_K(\boldsymbol{s}) = \left(\overline{M}_{K-1}(\boldsymbol{s}), M_K(s_K, \overline{M}_{K-1}(\boldsymbol{s}))\right) \cdot \label{eq:eqM}
\end{aligned}
\right.
\end{align}
\noindent
\textbf{Remark.} In definition \eqref{eq:eqM} for the nested $\text{SCOM}(K)$ potential outcomes, each exposure component $s_k$ is strictly tied to its corresponding mediator $M_k$ across all recursive levels. For example, in the vector $\overline{M}_2(\boldsymbol{s}) = \left(M_1(s_1), M_2(s_2, M_1(s_1))\right)$, the value $s_1$ is consistently assigned to all occurences of $M_1$, regardless of whether $M_1$ appears as a primary term or as an argument within $M_2$. Similarly, in $\overline{M}_3(\boldsymbol{s}) = \left(\overline{M}_2(\boldsymbol{s}), M_3(s_3, \overline{M}_2(\boldsymbol{s}))\right)$, the value $s_2$ is fixed for all occurrences of $M_2$. This indexing ensures that the $k$-th exposure value $s_k$ dictates the potential outcome of the $k$-th mediator $M_k$ throughout the entire chain. The rationale for this restriction is discussed in detail in Section~\ref{sec:twomed}.

For any $s_0\in \text{supp}(S)$ and $\boldsymbol s=(s_1,s_2,\cdots,s_K)\in \text{supp}(S)^K$, the counterfactual outcome is defined by \( Y(s_0,\boldsymbol s) := Y(s_0, \overline{M}_K(\boldsymbol{s})) \). For notational convenience for any $s \in \text{supp}(S)$ we denote \( Y(s) := Y(s,\underbrace{s,\cdots,s}_{K \hspace{0,2cm}\text{times}}) \) at a single exposure value $s$. Similarly, we denote \( M_k(s) := M_k(s, \overline{M}_{k-1}(s)) \) at a single exposure value $s$, for \( k = 2, \ldots, K \).

\section{Methodology for Continuous Outcomes} \label{sec:cont}
We now formally defines the causal estimands for continuous outcomes within the SCOM(K) framework, and establish the key identifiability conditions and the associated formulas.

\subsection{Total Effect}
For any \( s\in \text{supp}(S)\) and \(s'\in \text{supp}(S)  \), when exposure $S$ varies from \( s \) to \( s' \), the \textit{total effect} on continuous outcome \( Y \) is given by
\begin{align}
    \text{TE}(s',s) = \mathbb{E}\{Y(s')-Y(s)\}=\mathbb{E}\{Y(s')\} - \mathbb{E}\{Y(s)\},\label{eq:toteff}
\end{align}
where the expectation is operated under the joint distribution of the counterfactual continuous outcomes $\{Y(s'),Y(s)\}$.
This quantity captures the overall difference in the expected potential outcomes when exposure $S$ is set at \( s' \) versus \( s \), which embraces all possible direct and indirect pathways through which \( S \) influences \( Y \). To grasp the potential combinatorial complexity in possible pathways, Section \ref{sec:twomed} presents a simple but illustrative case with two SCOMs shown in Figure \ref{fig:dohad}.

\subsection{Causal pathway decomposition under SCOM(2)}\label{sec:twomed}
Utilizing the SCOM(2) setting with $M_1 \rightarrow M_2$, we demonstrate how the total effect can be decomposed into distinct causal pathways. This decomposition characterizes the distributed influence of exposure $S$ on outcome $Y$, distinguishing between direct and indirect effects. In this case, under the G-SEM in \eqref{eq:npsem},
the total effect in equation \eqref{eq:toteff} is decomposed into the following four distinct path-specific effects.
\begin{itemize}
    \item \textit{Path-1}: Pathway $(S\to Y)$ represents the \textit{direct effect} denoted by 
$\tau_1(S\to Y)$, reflecting the influence of $S$ on $Y$ without 
passing through any mediators
$$\tau_1(S\to Y)_{(s,s')}: \mathbb{E}\{Y(s^\prime, M_1(s^\prime), M_2(s^\prime, M_1(s^\prime))\} - \mathbb{E}\{Y(s, M_1(s^\prime), M_2(s^\prime, M_1(s^\prime))\} \cdot$$
 \item \textit{Path-2}: Pathway $(S\to M_1\to Y)$ pertains to the \textit{single-mediator effect} denoted by $\tau_2(S\to M_1\to Y)$ via only the first mediator \( M_1 \), where \( S \) first affects \( M_1 \) and then influences \( Y \):
$$\tau_2(S\to M_1\to Y)_{(s,s')}: \mathbb{E}\{Y(s, M_1(s^\prime), M_2(s^\prime, M_1(s^\prime))\} - \mathbb{E}\{Y(s, M_1(s), M_2(s^\prime, M_1(s^\prime))\} \cdot$$
 \item \textit{Path-3}: Pathway $(S\to M_2\to Y)$ concerns the \textit{single-mediator effect} denoted by $\tau_3(S\to M_2\to Y)$ via only the second mediator \( M_2 \), in that \( S \) first impacts \( M_2 \), and then affects \( Y \) :
$$\tau_3(S\to M_2\to Y)_{(s,s')}:\mathbb{E}\{Y(s, M_1(s), M_2(s^\prime, M_1(s^\prime))\} - \mathbb{E}\{Y(s, M_1(s), M_2(s, M_1(s^\prime))\}\cdot$$

\item \textit{Path-4}: Pathway $(S\to M_1\to M_2\to Y)$ corresponds to the \textit{sequential mediation effect} denoted by $\tau_4(S\to M_1\to M_2\to Y)$, where \( S \) first influences \( M_1 \), which then affects \( M_2 \), ultimately impacts \( Y \):
$$\tau_4(S\to M_1\to M_2\to Y)_{(s,s')}:\mathbb{E}\{Y(s, M_1(s), M_2(s, M_1(s^\prime))\} - \mathbb{E}\{Y(s, M_1(s), M_2(s, M_1(s))\}\cdot$$
\end{itemize}
To evaluate these four mediation effects $\tau_1$ to $\tau_4$, it suffices to estimate a generic expectation,
\begin{equation}
    \Gamma(s_0,s_1,s_2,s_3) = \mathbb{E}\left\{Y\left(s_0, M_1(s_1), M_2(s_2, M_1(s_3))\right)\right\},\hspace{0.1cm} \text{where}\hspace{0.1cm} s_0,s_1,s_2,s_3 \in \text{Supp}(S) \label{eq:identify}
\end{equation}
To proceed, we impose the following identifiability conditions in the setting of $\text{SCOM}(2)$.
\begin{assumption}[Consistency]\label{assumption:consistency_two_mediators}
The consistency assumption states that the observed values are equal to their corresponding potential outcomes: $
Y = Y(s, m_1, m_2),M_2 = M_2(s, m_1)$ and $M_1 = M_1(s),$
with \( s \), \( m_1 \), and \( m_2 \) being the observed realizations of variables \( S \), \( M_1 \), and \( M_2 \), respectively.
\end{assumption}

\begin{assumption}[Sequential Ignorability]\label{assumption:identification_two_mediators}
For any $s,s'\in \text{supp}(S)$, $m_1,m_1'\in \text{supp}(M_1)$ and $m_2\in \text{supp}(M_2)$, the following conditions hold:
  \begin{enumerate}
      \item[(i)] $Y(s,m_1,m_2)\perp\!\!\!\perp S|\boldsymbol W$, $Y(s,m_1,m_2)\perp\!\!\!\perp M_1|S,\boldsymbol W$, and $Y(s,m_1,m_2)\perp\!\!\!\perp M_2|M_1,S,\boldsymbol W$;
      \item[(ii)] $Y(s,m_1,m_2)\perp\!\!\!\perp M_1(s')|\boldsymbol W$, $Y(s,m_1,m_2)\perp\!\!\!\perp M_2(s',m_1')|\boldsymbol W$ and $M_2(s,m_1)\perp\!\!\!\perp M_1(s')|\boldsymbol W$,
  \end{enumerate}
\end{assumption}
where $\boldsymbol W$ is the vector of confounding factors. For any three vectors \( U,V \) and $Z$, the notation \( U \perp\!\!\!\perp V |Z\) denotes that \( U \) is independent of \( V \) conditional on $Z$. Assumption \ref{assumption:identification_two_mediators}(i) ensures the absence of unmeasured confounding for the exposure-outcome and mediator-outcome relationships. Assumption \ref{assumption:identification_two_mediators}(ii) further stipulates that there are no exposure-induced confounders of the mediator-outcome relationship; that is, no factors affected by the exposure $S$ confound the association between the mediators $(M_1, M_2)$ and the outcome $Y$.
\par
It is important to note that under Assumptions \ref{assumption:consistency_two_mediators}--\ref{assumption:identification_two_mediators}, the counterfactual expectation $\Gamma(s_0, s_1, s_2, s_3)$ in \eqref{eq:identify} is identifiable if and only if $s_1 = s_3$ \citep{avin2005identifiability}. Scenarios where $s_1 \neq s_3$ necessitate a restrictive ``cross-world" independence assumption, specifically, $M_1(s_1) \perp\!\!\!\perp M_1(s_3) \mid \boldsymbol{W}$ which rarely holds in practical settings. Consequently, without this additional constraint, the specific mediation pathways $\tau_2(S \to M_1 \to Y)_{(s,s')}$ and $\tau_4(S \to M_1 \to M_2 \to Y)_{(s,s')}$ are not individually identifiable under the above assumptions \ref{assumption:consistency_two_mediators}--\ref{assumption:identification_two_mediators}.
\par
To circumvent such an identifiability issue, we adopt the means of a composite pathway \( S \to M_1 \leadsto Y \)\citep{zhou2022semiparametric}, which collapses the two individual pathways \( S \to M_1 \to Y \) and \( S \to M_1 \to M_2 \to Y \) into one set. This union enables the capture of a set of relevant causal mediation routes from \( S \) to \( Y \) that are all mediated by \( M_1 \), regardless of contributions by \( M_2 \). This results in a new type of causal estimand arising from a composite pathway given as follows: 
\begin{align*}
    \tau_1^{\dag}(S \to M_1 \leadsto Y)_{(s,s')}&:=\tau_2(S\to M_1\to Y)_{(s,s')}+\tau_4(S\to M_1\to M_2\to Y)_{(s,s')}\\
    &= \mathbb{E}\{Y(s, M_1(s^\prime), M_2(s^\prime, M_1(s^\prime)))\} - \mathbb{E}\{Y(s, M_1(s), M_2(s^\prime, M_1(s^\prime)))\}\cdot
\end{align*}
Under Assumptions \ref{assumption:consistency_two_mediators}--\ref{assumption:identification_two_mediators} above, these three effects \( \tau_1(S \to Y) \), \( \tau_1^{\dag}(S \to M_1 \leadsto Y) \), and \( \tau_3(S \to M_2 \to Y) \) are identifiable, which are the targeted unknowns for estimation and inference in this paper. 

\subsection{Causal pathway decomposition for $\text{SCOM}(K)$}\label{sec:contdef}
We begin with the following notation. Generalizing the basic case of DAG under SCOM(2), we now formalize the definitions of causal estimands, establish identifiability conditions, and derive identification formulae for the general class of DAGs under $\text{SCOM}(K)$. For any $s_0 \in \text{supp}(S)$ and $\boldsymbol{s} = (s_1, \dots, s_K) \in \text{supp}(S)^K$, let $\Gamma(s_0, \boldsymbol{s}) = \mathbb{E}[Y(s_0, \overline{M}_K(\boldsymbol{s}))]$ denote the expected potential outcome when the exposure $S$ is set to $s_0$ and the series of $K$ mediators, $\overline{M}_K$ are set to their nested potential values under the series of exposure vector given by $\boldsymbol{s}= (s_1, \dots, s_K)$. Next, we decompose the total effect within the SCOM($K$) framework into $K+1$ identifiable path-specific effects (PSEs). This decomposition is not unique, because a directed path from treatment to outcome can pass through various combinations of the $K$ mediators, and PSEs can be defined in a multitude of ways depending on which other paths are active. Given the causal ordering, the resulting partition is fundamentally order-dependent. The following decomposition represents one such valid partition.
\begin{itemize}
    \item First, the direct effect $\tau_1(S\to Y)_{(s,s')}=\Gamma(s',\boldsymbol{s}')-\Gamma(s,\boldsymbol{s}')$, where $\boldsymbol {s}' \in \text{supp}(S)^{K} $ with all elements equal to $s'$, namely $\boldsymbol {s}'=(\underbrace{s',\cdots,s'}_{K \hspace{0,2cm}\text{times}})$.
    \item Second, for any \( k \geq 1 \), the composite mediator effect $\tau_k^{\dag}$ through \( M_k \), denoted as 
  $\tau_k^{\dag}(S \to M_k \leadsto Y)_{(s,s')}=\Gamma(s, \boldsymbol{s}_{1k}) - \Gamma(s, \boldsymbol{s}_{2k}),$ where \( \boldsymbol{s}_{1k} \) is a $k-$ element vector whose first \( (k-1) \) elements are all equal to \( s \) and its remaining \( (K-k+1) \) elements are all equal to \( s' \), namely $\boldsymbol s_{1k}=(\underbrace{s,\cdots,s}_{(k-1) \hspace{0.1cm}\text{times}},\underbrace{s',\cdots,s'}_{(K-k+1) \hspace{0.1cm}\text{times}})$. Similarly, the $k-$ element vector \( \boldsymbol{s}_{2k} \) has its first \( k \) elements as \( s \) and the remaining \( (K-k) \) elements as \( s' \), namely $s_{2k}=(\underbrace{s,\cdots,s}_{k \hspace{0.1cm}\text{times}},\underbrace{s',\cdots,s'}_{(K-k) \hspace{0.1cm}\text{times}})$.
\end{itemize}

\subsection{Identification of Causal Mediation Estimands}
By applying the law of iterated expectations, the generic expectation $\Gamma(s_0, \boldsymbol{s}) = \mathbb{E}\{Y(s_0, \overline{M}_K(\boldsymbol{s}))\}$ is given by $\mathbb{E}_{\boldsymbol W}[\mathbb{E}\{Y(s_0, \overline{M}_K(\boldsymbol{s})) \mid \boldsymbol W\}]$, where $s_0 \in \text{supp}(S)$ and $\boldsymbol{s} = (s_1, \dots, s_K) \in \text{supp}(S)^K$. Consequently, identifying the conditional expectation $\Gamma(s_0, \boldsymbol{s} \mid \boldsymbol W) := \mathbb{E}\{Y(s_0, \overline{M}_K(\boldsymbol{s})) \mid \boldsymbol W\}$ is sufficient for identifying the full estimand. To proceed, we extend the identification assumptions \ref{assumption:consistency_two_mediators}--\ref{assumption:identification_two_mediators} to the general $\text{SCOM}(K)$ setting as follows:
\begin{assumption}[Consistency]\label{assumption:consistency}
The consistency assumption states that the observed values are equal to their corresponding potential outcomes: $M_1=M_1(s)$, $M_2=M_2(s,m_1)$, $\cdots$, $M_K=M_K(s,\overline{\boldsymbol m}_{K-1})$ and $Y=Y(s,\overline{\boldsymbol m}_{K})$, where $s,\overline{\boldsymbol m}_{K}=(m_1,m_2,\cdots,m_K)$ denote the observed realizations of the exposure $S$ and $K$ sequentially ordered mediators $\overline{\boldsymbol M}_{K}=(M_1,M_2,\cdots,M_K)$.
\end{assumption}
\begin{assumption}\label{assumption:identification}
For any \(s_0 \in \text{supp}(S)\) and \(\boldsymbol{s} = (s_1, s_2, \ldots, s_K) \in \text{supp}(S)^K\), $\overline{\boldsymbol m}_{K} \in  \text{supp}(\overline{\boldsymbol M}_{K})$ and $y\in  \text{supp}(Y)$:
\begin{enumerate}
    \item[(i)] $Y(s,\overline{\boldsymbol m}_{K})\perp\!\!\!\perp S|\boldsymbol W$,
    \item[(ii)] $Y(s,\overline{\boldsymbol m}_{K})\perp\!\!\!\perp M_1|S,\boldsymbol W$ and for any $k>1$, $Y(s,\overline{\boldsymbol m}_{K})\perp\!\!\!\perp M_k|S,\overline{\boldsymbol M}_{k-1},\boldsymbol W$,
    \item[(iii)] For any $k>1$, $M_k(s_k,\overline{\boldsymbol m}_{k-1}) \perp\!\!\!\perp M_1|S,\boldsymbol W$, and  for any $k>1$ and $l>1$, with $k>l$, $M_k(s_k,\overline{\boldsymbol m}_{k-1}) \perp\!\!\!\perp M_l|S,\overline{\boldsymbol M}_{l-1},\boldsymbol W$.
    \item[(iv)] For $k>1$, $M_k(s_k,\overline{\boldsymbol m}_{k-1}) \perp\!\!\!\perp M_1(s_1)|\boldsymbol W$, and for any $k>1$ and $l>1$, with $k>l$, $M_k(s_k,\overline{\boldsymbol m}_{k-1}) \perp\!\!\!\perp M_l(s_l,\overline{\boldsymbol m}_{l-1})|\boldsymbol W$.
    \item[(v)] $Y(s,\overline{\boldsymbol m}_{K})\perp\!\!\!\perp M_1(s_1)|\boldsymbol W$, and for any $k>1$, $Y(s,\overline{\boldsymbol m}_{K})\perp\!\!\!\perp M_k(s_k,\overline{\boldsymbol m}_{k-1})|\boldsymbol W$.
\end{enumerate}
\end{assumption}
Theorem \ref{theorem:identification} establishes the identification formulae for a DAG under the setup of SCOM$(K)$.

\begin{theorem}\label{theorem:identification}
    For any \(s_0 \in \text{supp}(S)\) and \(\boldsymbol{s} = (s_1, s_2, \ldots, s_K) \in \text{supp}(S)^K\), $K$ sequentially ordered mediators $M_1,\cdots,M_K$, generated from the G-SEM system \eqref{eq:npsem},
    under Assumptions \ref{assumption:consistency}-\ref{assumption:identification}, the conditional expectation $\Gamma(s_0, \boldsymbol{s}|\boldsymbol W)$ is identified as,
\begin{equation*}
\int_{\text{supp}(\overline{\boldsymbol M}_{K})}\mathbb{E}(Y \mid S=s_0, \overline{\boldsymbol M}_{K}=\overline{\boldsymbol m}_{K}, \boldsymbol W) \, 
\prod_{k=1}^K
dF_{M_k}(m_k\mid S=s_k, \overline{\boldsymbol M}_{k-1}=\overline{\boldsymbol m}_{k-1}, \boldsymbol W)
\end{equation*}
\end{theorem}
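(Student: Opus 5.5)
The plan is to peel off one layer of the nested counterfactual at a time, from the outcome inward to $M_1$. Each step integrates over the distribution of one mediator's potential value, and the independence conditions in Assumption~\ref{assumption:identification} let us replace counterfactual distributions by observed conditional ones. Fix $\boldsymbol W$ throughout.

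\textbf{Step 1: integrate out all mediators.} Write
\[
\Gamma(s_0,\boldsymbol s\mid\boldsymbol W)=\int \mathbb{E}\{Y(s_0,\overline{\boldsymbol m}_K)\mid \overline{M}_K(\boldsymbol s)=\overline{\boldsymbol m}_K,\boldsymbol W\}\,dF_{\overline{M}_K(\boldsymbol s)\mid\boldsymbol W}(\overline{\boldsymbol m}_K).
\]
On the event $\overline M_K(\boldsymbol s)=\overline{\boldsymbol m}_K$, the nested values equal the unnested ones, $M_k(s_k,\overline{\boldsymbol m}_{k-1})=m_k$. By Assumption~\ref{assumption:identification}(v), applied jointly to the vector $(M_1(s_1),\dots,M_K(s_K,\overline{\boldsymbol m}_{K-1}))$, we can drop the conditioning and obtain $\mathbb{E}\{Y(s_0,\overline{\boldsymbol m}_K)\mid\boldsymbol W\}$. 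This joint use of (v) is a strengthening of the stated pairwise version, and I would state it explicitly as the intended reading.

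\textbf{Step 2: identify the outcome regression.} Assumption~\ref{assumption:identification}(i) lets us condition on $S=s_0$. Then (ii) applied successively for $k=1,\dots,K$ lets us add $M_1=m_1,\dots,M_K=m_K$ to the conditioning set. Consistency (Assumption~\ref{assumption:consistency}) then gives $\mathbb{E}(Y\mid S=s_0,\overline{\boldsymbol M}_K=\overline{\boldsymbol m}_K,\boldsymbol W)$. A positivity condition, implicit in the statement, is needed so that these conditional expectations are well defined.

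\textbf{Step 3: factorize the counterfactual mediator law.} Factor the joint law of $\overline M_K(\boldsymbol s)$ given $\boldsymbol W$ sequentially. The $k$-th factor is the law of $M_k(s_k,\overline{\boldsymbol m}_{k-1})$ given $M_1(s_1)=m_1,\dots,M_{k-1}(s_{k-1},\overline{\boldsymbol m}_{k-2})=m_{k-1}$ and $\boldsymbol W$. By (iv), taken jointly over $l<k$, the conditioning on earlier mediators drops, leaving the marginal law of $M_k(s_k,\overline{\boldsymbol m}_{k-1})$ given $\boldsymbol W$. This is the step that permits $s_k\neq s_l$ across factors.

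We then convert this to an observed law. The G-SEM \eqref{eq:npsem} with independent errors gives $M_k(s_k,\overline{\boldsymbol m}_{k-1})\perp\!\!\!\perp S\mid\boldsymbol W$, since it is a function of $\epsilon_{M_k}$ and $\boldsymbol W$ only, so we may condition on $S=s_k$. Assumption (iii) then adds $M_1=m_1,\dots,M_{k-1}=m_{k-1}$ to the conditioning set. By consistency the result is $F_{M_k}(m_k\mid S=s_k,\overline{\boldsymbol M}_{k-1}=\overline{\boldsymbol m}_{k-1},\boldsymbol W)$.

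\textbf{Step 4: combine.} Substituting Steps 2 and 3 into Step 1 yields the product formula in the statement.

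\textbf{Main obstacle.} The difficulty is that Assumption~\ref{assumption:identification} is stated pairwise, while Steps 1–3 need joint or sequential conditional independences. Adding the conditions one at a time in Steps 2 and 3 needs care, because each addition uses the previous conditioning set. I would first try to derive the needed joint statements directly from the G-SEM error independence, which gives them cleanly: every potential value is a function of its own error and $\boldsymbol W$. I would therefore present the argument as resting on \eqref{eq:npsem}, with Assumption~\ref{assumption:identification} as its stated consequence.
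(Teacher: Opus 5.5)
Your proof is correct and follows the paper's own argument essentially step for step: integrate over the law of $\overline{M}_K(\boldsymbol s)$ given $\boldsymbol W$, drop the conditioning via (v), turn the outcome term into $\mathbb{E}(Y\mid S=s_0,\overline{\boldsymbol M}_K=\overline{\boldsymbol m}_K,\boldsymbol W)$ via (i)--(ii) and consistency, and turn each mediator factor into the observed conditional law via (iv), then (iii), then consistency. Two of your additions go beyond the paper. First, your appeals to the G-SEM for the joint forms of (iv)--(v), and for $M_k(s_k,\overline{\boldsymbol m}_{k-1})\perp\!\!\!\perp S\mid\boldsymbol W$, are genuine tightenings: no item of Assumption~\ref{assumption:identification} supplies the latter, yet the paper uses it silently when it conditions on $S=s_k$; the G-SEM route does, however, also require the errors to be independent of $\boldsymbol W$. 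Second, your worry about the sequential additions in Steps 2--3 is unnecessary, since (ii) and (iii) are already stated in the sequential form that chains by the contraction property of conditional independence.
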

where for any two random variables $U$ and $V$, \(F_{U}(u|V=v)\) denotes the conditional distribution function of \(U\) given \(V\) and for any $k\in \{1,2,\cdots,K\}$, $\overline{\boldsymbol m}_{k}=(m_1,m_2,\cdots,m_k)$ and $\overline{\boldsymbol m}_{0}=\phi$. The proof of Theorem \ref{theorem:identification} is given in Supplementary Section A.1.
Theorem \ref{theorem:identification} expresses $\Gamma(s_0, \boldsymbol{s}|\boldsymbol W)$ as a nested sequence of integrals over the conditional distributions of SCOM($K$) and the conditional expectation of $Y$ under G-SEM \eqref{eq:npsem}. 
\subsection{Linear hierarchical models for continuous outcomes}\label{sec:linout}
A linear model for continuous outcome $Y$ takes the form:
\begin{align}
     Y &= \gamma_S S + \sum_{k=1}^K \beta_{k} M_k + \boldsymbol\delta_{Y}^T \boldsymbol W + \epsilon_Y, \label{eq:linear}
\end{align}
where $\epsilon_Y$ is a random variable with mean 0 and finite variance. Here, $\gamma_S$ denotes the effect of $S$ on $Y$ and $\beta_k$'s capture the effects of individual $M_k$ on $Y$. For the sequentially ordered continuous mediators,
the following system of hierarchical linear models specifies the G-SEM \eqref{eq:npsem}:
\begin{align}
\left\{
\begin{aligned}
    M_K &= \alpha_K S + \sum_{k=1}^{K-1} \xi_{K|k} M_k + \boldsymbol\delta_{M_K}^T \boldsymbol W + \epsilon_{M_K}, \\
    M_k &= \alpha_k S + \sum_{j=1}^{k-1} \xi_{k|j} M_j + \boldsymbol\delta_{M_k}^T \boldsymbol W + \epsilon_{M_k}, \quad k=K-1,\ldots, 2 \\
    M_1 &= \alpha_1 S + \boldsymbol\delta_{M_1}^T \boldsymbol W + \epsilon_{M_1},
\end{aligned}
\right.
\label{eq:medsem}
\end{align}
where error terms $\epsilon_{M_1}, \ldots, \epsilon_{M_K}$ are assumed to be mutually independent with mean 0 and finite variances. Here, $\alpha_k$'s quantify the effects of $S$ on individual $M_k$ and parameters $\xi_{j|k}$ (for $j > k$) quantify the effect of a preceding $M_k$ on a latter $M_j$, namely $M_k\rightarrow M_j$. In the following corollary \ref{corollary:linear}, we do not impose any distributional assumptions on the error terms $\epsilon_{M_1}, \ldots, \epsilon_{M_K}$ and $\epsilon_Y$ except for some moment conditions.
\begin{corollary}\label{corollary:linear}
   Suppose Assumptions \ref{assumption:consistency}-\ref{assumption:identification} hold under models \eqref{eq:linear} and \eqref{eq:medsem}. Then the conditional expectation $\Gamma(s_0, \boldsymbol{s}|\boldsymbol W)$ is given by
   \begin{align*}
    \Gamma(s_0, \boldsymbol{s}|\boldsymbol W)=\gamma_S s_0+ \sum_{k=1}^K\alpha_k\eta_{M_k\leadsto Y}s_k + \left(\boldsymbol\delta_{Y}^T+\sum_{k=1}^K \eta_{M_k\leadsto Y}\boldsymbol\delta_{M_k}^T\right)\boldsymbol W,
\end{align*}
where $\eta_{M_K \leadsto Y}=\beta_{K}$, and for $k \in \{1,2,\cdots,K-1\}$, $\eta_{M_k \leadsto Y}=\beta_{k} + \sum_{j=k+1}^K\xi_{j|k}\eta_{M_j\leadsto Y}$. Consequently, we obtain $\tau_1(S\rightarrow Y)_{(s,s')} = \gamma_S (s' - s)$ and $\tau_k^{\dag}(S\rightarrow M_K\leadsto Y)_{(s,s')}=\alpha_k\eta_{M_k\leadsto Y}(s' - s)$, $k=1,2,\cdots,K$.
\end{corollary}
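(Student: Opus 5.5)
We will start from the identification formula of Theorem~\ref{theorem:identification} and evaluate the nested integral explicitly under the linear models \eqref{eq:linear} and \eqref{eq:medsem}. The $Y$-regression is linear in $(s_0,\overline{\boldsymbol m}_K,\boldsymbol W)$, and $\epsilon_Y$ is independent of the other variables with mean zero. Hence
\[
\mathbb{E}(Y\mid S=s_0,\overline{\boldsymbol M}_K=\overline{\boldsymbol m}_K,\boldsymbol W)=\gamma_S s_0+\sum_{k=1}^K\beta_k m_k+\boldsymbol\delta_Y^T\boldsymbol W .
\]
Since this integrand is linear in the $m_k$, the integral against $\prod_k dF_{M_k}(m_k\mid s_k,\overline{\boldsymbol m}_{k-1},\boldsymbol W)$ is the expectation of a linear function. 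Here the $m_k$ are drawn sequentially: $m_1=\alpha_1 s_1+\boldsymbol\delta_{M_1}^T\boldsymbol W+e_1$, and for $k\ge 2$, $m_k=\alpha_k s_k+\sum_{j<k}\xi_{k|j}m_j+\boldsymbol\delta_{M_k}^T\boldsymbol W+e_k$. The $e_k$ are independent mean-zero copies of $\epsilon_{M_k}$, using the mutual independence of the errors and their independence from $(S,\boldsymbol W)$. Only first moments are needed, so finite means suffice and no distributional assumptions enter.

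The core step is a backward induction that integrates out $m_K$, then $m_{K-1}$, and so on. The inductive claim is as follows. After integrating out $m_K,\dots,m_{k+1}$, the partially integrated expression equals
\[
\gamma_S s_0+\sum_{j=1}^{k}\beta^{(k)}_j m_j+\sum_{j=k+1}^K \alpha_j\eta_{M_j\leadsto Y}s_j+\Big(\boldsymbol\delta_Y^T+\sum_{j=k+1}^K\eta_{M_j\leadsto Y}\boldsymbol\delta_{M_j}^T\Big)\boldsymbol W,
\]
with coefficients $\beta^{(k)}_j=\beta_j+\sum_{l=k+1}^K\xi_{l|j}\eta_{M_l\leadsto Y}$ for $j\le k$. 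In particular $\beta^{(k)}_k=\eta_{M_k\leadsto Y}$ by the stated recursion.

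To prove the inductive step, integrate $m_k$ against its conditional law. This replaces $m_k$ by $\alpha_k s_k+\sum_{j<k}\xi_{k|j}m_j+\boldsymbol\delta_{M_k}^T\boldsymbol W$. The coefficient $\beta^{(k)}_k=\eta_{M_k\leadsto Y}$ therefore multiplies $\alpha_k s_k$ and $\boldsymbol\delta_{M_k}^T\boldsymbol W$, and adds $\xi_{k|j}\eta_{M_k\leadsto Y}$ to each $\beta^{(k)}_j$, which gives $\beta^{(k-1)}_j$. The base case $k=K$ is the $Y$-regression itself, where $\beta^{(K)}_K=\beta_K=\eta_{M_K\leadsto Y}$. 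At $k=0$ the stated formula for $\Gamma(s_0,\boldsymbol s\mid\boldsymbol W)$ results.

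The main obstacle is purely bookkeeping. We must verify that the accumulated coefficient on $m_k$ at the moment it is integrated out is exactly $\eta_{M_k\leadsto Y}$, i.e.\ that the sum over all downstream paths $M_k\to M_l\to\cdots\to Y$ matches the recursion. The inductive hypothesis above is chosen to encode precisely this.

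For the effects, note that the $\boldsymbol W$ term does not depend on $(s_0,\boldsymbol s)$. It therefore cancels in each difference, both conditionally and after taking $\mathbb{E}_{\boldsymbol W}$. For the direct effect, $\Gamma(s',\boldsymbol s')-\Gamma(s,\boldsymbol s')=\gamma_S(s'-s)$. For $\tau_k^\dag$, the vectors $\boldsymbol s_{1k}$ and $\boldsymbol s_{2k}$ differ only in the $k$-th coordinate, which is $s'$ in $\boldsymbol s_{1k}$ and $s$ in $\boldsymbol s_{2k}$. The difference is thus $\alpha_k\eta_{M_k\leadsto Y}(s'-s)$.
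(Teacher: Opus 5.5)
Your proposal is correct and takes the same route as the paper. Both evaluate the nested integral of Theorem~\ref{theorem:identification} from the innermost $m_K$ outward, replacing each $m_k$ by its conditional mean, so that $\eta_{M_k\leadsto Y}$ appears as the accumulated coefficient on $m_k$ when it is integrated out. Your explicit inductive hypothesis with the coefficients $\beta^{(k)}_j$ makes rigorous the step the paper leaves to ``the same technique for subsequent integrals.''
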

Parameter $\eta_{M_k \leadsto Y}$ represents a composite effect that accounts for the set of all possible pathways from \( M_k \) to $Y$ regardless of the other mediators. The proof of Corollary \ref{corollary:linear} is given in Supplementary Section A.2. Hence, examining the existence of mediation effect $\tau_k^{\dag}(S\rightarrow M_K\leadsto Y)_{(s,s')}$ equal to 0 or not under \eqref{eq:linear} and \eqref{eq:medsem} is equivalent to testing $H_0\colon \alpha_k \eta_{M_k \leadsto Y} = 0$ versus $H_{1k}\colon \alpha_k \eta_{M_k \leadsto Y} \neq 0$ for each $k=1,2,\cdots,K$. See Section \ref{section:test} for more details of the hypothesis testing method developed for these composite nulls to achieve proper Type-I error control.
\section{Methodology for Discrete Outcomes}\label{sec:cat}
In this section, we extend the proposed framework to discrete outcomes, specifically focusing on binary and count outcomes.

\subsection{Definitions of Path-Specific Effects for Binary Outcomes}
We now focus on mediation estimands with binary outcomes under DAGs governed by SCOM($K$). In such a setting, causal estimands are typically expressed through odds ratios \citep{vanderweele2010odds}. For any $s,s'\in \text{Supp}(S)$, the total effect attributable to a change of exposure $S$ from $s$ to $s'$, takes the form of odds ratio conditional on the vector of confounders $\boldsymbol W$:
\begin{align}
    \text{TE}(s,s') = \frac{\text{P}(Y(s')=1 \mid \boldsymbol W) / \{1 - \text{P}(Y(s')=1 \mid \boldsymbol W)\}}{\text{P}(Y(s)=1 \mid \boldsymbol W) / \{1 - \text{P}(Y(s)=1 \mid \boldsymbol W)\}},\label{eq:totbin}
\end{align}
where $\text{P}(Y(s')=1 \mid \boldsymbol W)$ and $\text{P}(Y(s)=1 \mid \boldsymbol W)$ denote the probabilities of the potential binary outcomes $Y(s')$ and $Y(s)$ taking value 1, respectively, conditional on $\boldsymbol W$. 
\par
Denote a generic quantity $\Gamma(s_0,\boldsymbol s|\boldsymbol W)=
\text{P}[Y(s_0,\overline{M}_K(\boldsymbol{s}))=1|\boldsymbol W]$ for any $s_0 \in \text{supp}(S)$ and $\boldsymbol s \in \text{supp}(S)^K$. First, the direct effect of $\tau_1(S\rightarrow Y)_{(s,s')}$ takes the following form of odds ratio:
\begin{equation}
   \tau_1(S\rightarrow Y)_{(s,s')}=\frac{\Gamma(s',\boldsymbol{s}'|\boldsymbol W)/\{1-\Gamma(s',\boldsymbol{s}'|\boldsymbol W)\}}{\Gamma(s,\boldsymbol{s}'|\boldsymbol W)/\{1-\Gamma(s,\boldsymbol{s}'|\boldsymbol W)\}},\label{eq:dibin}
\end{equation}
where $\boldsymbol {s}'=(s',s',\cdots,s')$ a $k-$ dimensional vector. For $k \in \{1,2,\cdots,K\}$, the composite mediation effect $\tau_k^{\dag}(S\rightarrow M_k\leadsto Y)_{(s,s')}$ is given by
\begin{align}
   \tau_k^{\dag}(S\rightarrow M_k\leadsto Y)_{(s,s')}=\frac{\Gamma(s, \boldsymbol{s}_{1k}|\boldsymbol W)/\{1-\Gamma(s, \boldsymbol{s}_{1k}|\boldsymbol W)\}}{\Gamma(s, \boldsymbol{s}_{2k}|\boldsymbol W)/\{1-\Gamma(s, \boldsymbol{s}_{2k}|\boldsymbol W)\}},\label{eq:mbin}
\end{align}
where the vectors \( \boldsymbol{s}_{1k} \) and $\boldsymbol{s}_{2k}$ are defined in Section \ref{sec:contdef}. 
\par
To identify the path-specific effects in \eqref{eq:dibin} and \eqref{eq:mbin} under binary outcome setting, we first identify the generic quantity $\Gamma(s_0, \boldsymbol{s} \mid \boldsymbol{W}) = P[Y(s_0, \overline{M}_K(\boldsymbol{s})) = 1 \mid \boldsymbol{W}] = \mathbb{E}[Y(s_0, \overline{M}_K(\boldsymbol{s})) \mid \boldsymbol{W}]$. Since the expectation of a binary indicator corresponds to its success probability, Theorem \ref{theorem:identification} remains applicable under Assumptions \ref{assumption:consistency}--\ref{assumption:identification} to identify this quantity in the binary outcome setting.
\subsection{Probit model for Binary Outcomes}
The probit model for a binary outcome takes the form,
\begin{align}
    P(Y=1|S,M_1,\cdots,M_K,\boldsymbol W)=
    \Phi\left(\gamma_S S+\sum_{k=1}^K\beta_{k}M_k +\boldsymbol\delta_{Y}^T\boldsymbol W\right), \label{eq:probit}
\end{align}
where the link function $\Phi(\cdot)$ is the CDF of the standard normal distribution. The hierarchical models for the sequentially ordered mediators $M_1,M_2,\cdots,M_K$ remain the same as those given in \eqref{eq:medsem}, in which we assume that the error terms \( \epsilon_{M_k} \sim \mathcal{N}(0, \sigma^2_{M_k}) \), $k=1,2,\cdots,K$ marginally and are mutually independent. 

\begin{corollary}\label{corollary:probit}
Under Assumptions \ref{assumption:consistency}-\ref{assumption:identification}, the probit model \eqref{eq:probit} and the linear normal hierarchical models \eqref{eq:medsem}, the following closed form expression for the generic quantity $\Gamma(s_0, \boldsymbol{s}|\boldsymbol W)$:
\begin{align*}
   \Gamma(s_0, \boldsymbol{s}|\boldsymbol W)=\Phi\left(\frac{\gamma_S s_0+ \sum_{k=1}^K\alpha_k\eta_{M_k\leadsto Y}s_k + \left(\boldsymbol\delta_{Y}^T+\sum_{k=1}^K \eta_{M_k\leadsto Y}\boldsymbol\delta_{M_k}^T\right)\boldsymbol W }{\sqrt{1+\sum_{k=1}^K\eta_{M_k\leadsto Y}^2\sigma^2_{M_{k}}}}\right),
\end{align*}
\end{corollary}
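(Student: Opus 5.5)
Since Theorem \ref{theorem:identification} identifies $\Gamma(s_0,\boldsymbol s\mid\boldsymbol W)$ as the iterated integral of $\mathbb{E}(Y\mid S=s_0,\overline{\boldsymbol M}_K=\overline{\boldsymbol m}_K,\boldsymbol W)$ against $\prod_k dF_{M_k}(m_k\mid S=s_k,\overline{\boldsymbol m}_{k-1},\boldsymbol W)$, and for a binary $Y$ the conditional mean equals the success probability, the plan is to substitute the probit model \eqref{eq:probit} for the integrand. Under \eqref{eq:medsem} with Gaussian errors, each $dF_{M_k}$ is the $\mathcal N(\alpha_k s_k+\sum_{j<k}\xi_{k|j}m_j+\boldsymbol\delta_{M_k}^T\boldsymbol W,\sigma^2_{M_k})$ law. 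The iterated integral is therefore the expectation of $\Phi(\gamma_S s_0+\sum_k\beta_k\widetilde M_k+\boldsymbol\delta_Y^T\boldsymbol W)$, where $\widetilde M_1,\dots,\widetilde M_K$ are generated recursively from the linear system \eqref{eq:medsem}, with $S$ replaced by $s_k$ in the $k$th equation and independent errors $\epsilon_{M_k}\sim\mathcal N(0,\sigma^2_{M_k})$.

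The key step is to write the linear index $L=\sum_k\beta_k\widetilde M_k$ as a linear function of the errors. I would prove by backward induction on $k$ (the same bookkeeping as in Corollary \ref{corollary:linear}) that
\[
L=\sum_{k=1}^K\eta_{M_k\leadsto Y}\bigl(\alpha_k s_k+\boldsymbol\delta_{M_k}^T\boldsymbol W+\epsilon_{M_k}\bigr),
\]
with $\eta_{M_K\leadsto Y}=\beta_K$ and $\eta_{M_k\leadsto Y}=\beta_k+\sum_{j>k}\xi_{j|k}\eta_{M_j\leadsto Y}$. The induction substitutes the $K$th equation into $L$, which adds $\beta_K\xi_{K|j}$ to the coefficient of $\widetilde M_j$. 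It then repeats this for $M_{K-1}$ and continues downward, so that at each stage the accumulated coefficient of $\widetilde M_k$ is exactly $\eta_{M_k\leadsto Y}$ before that mediator is itself substituted out. This representation shows that the probit index equals $\mu+Z$, where
\[
\mu=\gamma_S s_0+\sum_k\alpha_k\eta_{M_k\leadsto Y}s_k+\Bigl(\boldsymbol\delta_Y^T+\sum_k\eta_{M_k\leadsto Y}\boldsymbol\delta_{M_k}^T\Bigr)\boldsymbol W,
\]
and $Z=\sum_k\eta_{M_k\leadsto Y}\epsilon_{M_k}\sim\mathcal N(0,v)$ with $v=\sum_k\eta_{M_k\leadsto Y}^2\sigma^2_{M_k}$. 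Normality of $Z$ and the additivity of the variances follow from the mutual independence of the Gaussian errors.

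The last step is the standard probit identity $\mathbb{E}\,\Phi(\mu+Z)=\Phi(\mu/\sqrt{1+v})$ for $Z\sim\mathcal N(0,v)$. To prove it, introduce $U\sim\mathcal N(0,1)$ independent of $Z$ and write $\Phi(\mu+Z)=P(U\le\mu+Z\mid Z)$. Taking expectations gives $P(U-Z\le\mu)$, and since $U-Z\sim\mathcal N(0,1+v)$, this equals $\Phi(\mu/\sqrt{1+v})$, which is the claimed formula.

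I expect the main obstacle to be the induction identifying the coefficients with $\eta_{M_k\leadsto Y}$. One must take care that mediator $k$ receives exposure value $s_k$ while earlier mediators keep their own nested values, as in \eqref{eq:eqM}. One must also check that substituting the errors through the recursion yields exactly the stated recursion for $\eta$, including the index convention $\xi_{j|k}$ with $j>k$. Once that linear representation is in hand, the Gaussian convolution step is routine.
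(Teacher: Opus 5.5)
Your proof is correct, but it is organized differently from the paper's.

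\textbf{The paper's route.} It works one integral at a time. In the innermost integral it substitutes $z=(m_K-\mu_{M_K})/\sigma_{M_K}$ and applies the cited identity $\int\Phi(a+bz)\phi(z)\,dz=\Phi(a/\sqrt{1+b^2})$. This gives $\Phi$ of an updated index divided by $\sqrt{1+\beta_K^2\sigma_{M_K}^2}$, in which the coefficient of $m_k$ has become $\beta_k+\beta_K\xi_{K|k}$. The remaining integrals are then dispatched by ``the same technique.''

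\textbf{Your route.} You first recognize the whole iterated integral as one expectation over the recursively generated $(\widetilde M_1,\dots,\widetilde M_K)$. You then solve the linear system in reduced form, so that $\sum_k\beta_k\widetilde M_k=\sum_k\eta_{M_k\leadsto Y}(\alpha_k s_k+\boldsymbol\delta_{M_k}^T\boldsymbol W+\epsilon_{M_k})$. Equivalently, $\eta^T=\beta^T(I-\Xi)^{-1}$, where $\Xi$ is the strictly lower-triangular matrix of the $\xi_{k|j}$, which reproduces exactly the stated recursion. Finally you apply the Gaussian convolution identity once, and you prove it with the latent-variable argument rather than citing it.

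\textbf{What your approach buys.}
\begin{itemize}
\item The paper's ``same technique'' step actually needs the rescaled identity $\int\Phi\{(a+bz)/c\}\phi(z)\,dz=\Phi(a/\sqrt{c^2+b^2})$, because after the first integration the index already carries a denominator. The paper never states this, and your one-shot argument avoids it entirely.
\item It makes transparent why the denominator is $\sqrt{1+\sum_k\eta_{M_k\leadsto Y}^2\sigma^2_{M_k}}$: the variances add because the $\epsilon_{M_k}$ are independent.
\item The same reduced form gives Corollaries 1 and 3 immediately, as $\mathbb E(\mu+Z)=\mu$ and $\mathbb E\exp(\mu+Z)=\exp(\mu+v/2)$.
\end{itemize}

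\textbf{What the paper's approach buys.} It only ever uses one-dimensional integrals and matches the presentation of its other corollaries. The coefficient bookkeeping you flagged as the main obstacle is needed in both approaches. The paper performs it implicitly when it reads off $\beta_{K-1}+\beta_K\xi_{K|K-1}$ as the next coefficient.
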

where $\eta_{M_K \leadsto Y}=\beta_{K}$, and for $k \in \{1,2,\cdots,K-1\}$, $\eta_{M_k \leadsto Y}=\beta_{k} + \sum_{j=k+1}^K\xi_{j|k}\eta_{M_j\leadsto Y}$. 
The proof of Corollary \ref{corollary:probit} is presented in Supplementary Section A.3. Plugging the above expression into \eqref{eq:dibin} and \eqref{eq:mbin}, we obtain the closed-form expressions of both direct and composite mediation effects in the probit outcome model. To establish the presence of a composite mediation effect $\tau_k^{\dag}(S\rightarrow M_k\leadsto Y)_{(s,s')}$ equal to 0 or not under \eqref{eq:probit} and \eqref{eq:medsem}, we test the following equivalent null hypothesis,
$H_0\colon \Gamma(s_0, \boldsymbol{s}_{1k}|\boldsymbol W) = \Gamma(s_0, \boldsymbol{s}_{2k}|\boldsymbol W)$ versus $H_{1k}\colon \Gamma(s_0, \boldsymbol{s}_{1k}|\boldsymbol W) \neq \Gamma(s_0, \boldsymbol{s}_{2k}|\boldsymbol W)$, for each $k=1,2,\cdots,K$. Using Corollary \ref{corollary:probit}, this is equivalent to test $H_0\colon \alpha_k \eta_{M_k \leadsto Y} = 0$ versus $H_{1k}\colon \alpha_k \eta_{M_k \leadsto Y} \neq 0$, due to the fact that $\Phi(\cdot)$ is a strictly positive and monotonically increasing function. 

\subsection{Logistic Model for Binary Outcomes}\label{sec:logitrare}
Consider a logistic outcome model for binary outcome $Y$ of the form,
\begin{align}
    \text{logit}\{P(Y=1|S,M_1,\cdots,M_K,\boldsymbol W)\}=\gamma_S S+\sum_{k=1}^K\beta_{k}M_k +\boldsymbol\delta_{Y}^T\boldsymbol W \cdot \label{eq:logit}
\end{align}
The linear normal SEM \eqref{eq:medsem} is used to model the sequentially ordered mediators $M_1,M_2,\cdots,M_K$ with independent normal error terms \( \epsilon_{M_k} \sim \mathcal{N}(0, \sigma^2_{M_k}) \), $k=1,2,\cdots,K$. In this case, unfortunately the integral 
in Theorem \ref{theorem:identification} does not permit a closed-form expression, even in the scenario of a single mediator $(K=1)$. To address this numerical challenge, \citet{vanderweele2010odds} introduced a special setting under the rare outcome assumption, while \citet{gaynor2019mediation} proposed to use the probit link function to approximate the logit link function. Since the latter introduces substantial complexity and technical difficulties even in the case of a single mediator, below we focus only on investigating the rare outcome scenario under the following assumption of approximation,
\begin{equation}
    \text{logit}\{P(Y=1|S,\overline{\boldsymbol M}_{K},\boldsymbol W)\}\approx \text{log}\{P(Y=1|S,\overline{\boldsymbol M}_{K},\boldsymbol W)\}\cdot\label{eq:logitrare}
\end{equation}
Under the approximation \ref{eq:logitrare}, the odds ratio reduces approximately to the relative risk. 
\begin{corollary}\label{corollary:logitrare}
    Under Assumptions \ref{assumption:consistency}-\ref{assumption:identification}, the logistic outcome model \eqref{eq:logit}, the hierarchical mediator models \eqref{eq:medsem}, and the rare outcome approximation \ref{eq:logitrare}, the generic expectation $\Gamma(s_0, \boldsymbol{s}|\boldsymbol W)$ may be expressed as follows,
    \begin{align*}
     \Gamma(s_0, \boldsymbol{s})={\exp}\left\{\gamma_S s_0+ \sum_{k=1}^K\alpha_k\eta_{M_k\leadsto Y}s_k + \left(\boldsymbol\delta_{Y}^T+\sum_{k=1}^K \eta_{M_k\leadsto Y}\boldsymbol\delta_{M_k}^T\right)\boldsymbol W +\frac{1}{2}\sum_{k=1}^K\eta_{M_k\leadsto Y}^2\sigma^2_{M_{k}}\right\}.
\end{align*}
Consequently, the direct effect $\tau_1(S\rightarrow Y)_{(s,s')} =\exp\{\gamma_S (s' - s)\}$, and the composite mediation effect $\tau_k^{\dag}(S\rightarrow M_K\leadsto Y)_{(s,s')}=\exp\{\alpha_k\eta_{M_k\leadsto Y}(s' - s)\}$.
\end{corollary}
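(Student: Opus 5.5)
We start from Theorem \ref{theorem:identification}, which expresses $\Gamma(s_0,\boldsymbol s\mid\boldsymbol W)=\mathbb{E}[Y(s_0,\overline{M}_K(\boldsymbol s))\mid\boldsymbol W]$ as an integral of $P(Y=1\mid S=s_0,\overline{\boldsymbol M}_K=\overline{\boldsymbol m}_K,\boldsymbol W)$ against $\prod_k dF_{M_k}(m_k\mid s_k,\overline{\boldsymbol m}_{k-1},\boldsymbol W)$. Under the rare outcome approximation \eqref{eq:logitrare}, the integrand becomes $\exp\{\gamma_S s_0+\sum_k\beta_k m_k+\boldsymbol\delta_Y^T\boldsymbol W\}$. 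We then integrate out the mediators one at a time, from $M_K$ down to $M_1$, using the Gaussian moment generating function $\mathbb{E}\,e^{tZ}=e^{t\mu+t^2\sigma^2/2}$ for $Z\sim\mathcal N(\mu,\sigma^2)$.

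The induction runs as follows. Suppose that after integrating out $M_K,\dots,M_{k+1}$ the partial integral has the form
\[
\exp\Big\{\gamma_S s_0+\sum_{j\le k}c^{(k)}_j m_j+\sum_{j>k}\eta_{M_j\leadsto Y}\alpha_j s_j+\Big(\boldsymbol\delta_Y^T+\sum_{j>k}\eta_{M_j\leadsto Y}\boldsymbol\delta_{M_j}^T\Big)\boldsymbol W+\tfrac12\sum_{j>k}\eta_{M_j\leadsto Y}^2\sigma^2_{M_j}\Big\},
\]
where $c^{(k)}_j=\beta_j+\sum_{i=k+1}^K\xi_{i|j}\eta_{M_i\leadsto Y}$. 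The base case $k=K$ is just the integrand itself, with $c^{(K)}_j=\beta_j$. By \eqref{eq:medsem}, conditional on $S=s_k$, $\overline{\boldsymbol M}_{k-1}$ and $\boldsymbol W$, the mediator $M_k$ is normal with mean $\alpha_k s_k+\sum_{j<k}\xi_{k|j}m_j+\boldsymbol\delta_{M_k}^T\boldsymbol W$ and variance $\sigma^2_{M_k}$. The coefficient of $m_k$ is $c^{(k)}_k=\beta_k+\sum_{i>k}\xi_{i|k}\eta_{M_i\leadsto Y}=\eta_{M_k\leadsto Y}$, exactly the recursion stated after Corollary \ref{corollary:linear}. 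Applying the MGF with $t=\eta_{M_k\leadsto Y}$ then has three effects:
\begin{enumerate}
\item it generates the terms $\eta_{M_k\leadsto Y}\alpha_k s_k$, $\eta_{M_k\leadsto Y}\boldsymbol\delta_{M_k}^T\boldsymbol W$ and $\tfrac12\eta_{M_k\leadsto Y}^2\sigma^2_{M_k}$;
\item it updates $c^{(k-1)}_j=c^{(k)}_j+\xi_{k|j}\eta_{M_k\leadsto Y}$ for $j<k$, which closes the induction;
\item it leaves the exponential-affine form intact.
\end{enumerate}
At $k=0$ this yields exactly the displayed formula for $\Gamma(s_0,\boldsymbol s)$. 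The same bookkeeping underlies Corollary \ref{corollary:linear}, with the linear mean replacing the MGF, so the only new ingredient here is the variance term.

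For the effects, under the rare outcome approximation the odds $\Gamma/(1-\Gamma)\approx\Gamma$, so each odds ratio in \eqref{eq:dibin} and \eqref{eq:mbin} reduces to a ratio of the exponentials.
\begin{itemize}
\item In the direct effect, $\Gamma(s',\boldsymbol s')/\Gamma(s,\boldsymbol s')$, every term except $\gamma_S s_0$ cancel, including the variance term, which does not depend on the $s$'s. This gives $\exp\{\gamma_S(s'-s)\}$.
\item In the composite effect $\tau_k^\dag$, the vectors $\boldsymbol s_{1k}$ and $\boldsymbol s_{2k}$ differ only in component $k$ ($s'$ versus $s$), giving $\exp\{\alpha_k\eta_{M_k\leadsto Y}(s'-s)\}$.
\end{itemize}

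The main obstacle is the index bookkeeping in the backward induction: verifying that the accumulated coefficient on $m_k$ is precisely $\eta_{M_k\leadsto Y}$ as defined recursively, with no cross terms. The variance contributions must also add as independent $\tfrac12\eta^2\sigma^2$ pieces. This holds because each step integrates over a conditional normal whose variance $\sigma^2_{M_k}$ does not depend on earlier mediators, given mutual independence of the errors.

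A further point of care is that the approximation sign must be carried through. The result is an approximate identity, because of \eqref{eq:logitrare} and the odds-to-risk replacement.
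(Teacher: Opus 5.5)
Your proposal is correct and follows the paper's route: substitute the rare-outcome exponential form into the identification formula of Theorem~\ref{theorem:identification}, then integrate out $M_K, M_{K-1}, \dots, M_1$ in turn using the Gaussian moment generating function, so that the accumulated coefficient on $m_k$ is $\eta_{M_k\leadsto Y}$. Your explicit backward induction and the odds-to-risk step for the two effects simply spell out what the paper leaves as ``the same technique that we employed for the linear case.''
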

Here $\eta_{M_K \leadsto Y}=\beta_{K}$, and for $k \in \{1,2,\cdots,K-1\}$, $\eta_{M_k \leadsto Y}=\beta_{k} + \sum_{j=k+1}^K\xi_{j|k}\eta_{M_j\leadsto Y}$. The proof of Corollary \ref{corollary:logitrare} is included in Supplementary Section A.4. According to Corollary \ref{corollary:logitrare}, to test if the composite mediation effect \( \tau_k^{\dag}(S\rightarrow M_K\leadsto Y)_{(s,s')} \) equal to 0 or not under \eqref{eq:logit}, \eqref{eq:medsem}, and \eqref{eq:logitrare}, equivalently we check whether \( \exp(\alpha_k \eta_{M_k \leadsto Y}) = 1 \) or not. This is equivalent to test $H_0\colon \alpha_k \eta_{M_k \leadsto Y} = 0$ versus $H_{1k}\colon \alpha_k \eta_{M_k \leadsto Y} \neq 0$ for each $k=1,2,\cdots,K$.

\subsection{Definitions of Path-Specific Effects for Count Outcomes}
For a count outcome \( Y \), typically causal effects are expressed multiplicatively via a ratio between the expected potential outcomes. Then we write the total effect conditioned on the vector of confounders $\boldsymbol W$ as $\text{TE}(s, s') = {\mathbb{E}\{Y(s')|\boldsymbol W\}}\big/{\mathbb{E}\{Y(s)|\boldsymbol W\}}$
where $s, s' \in \text{Supp}(S)$. Denote a generic expectation $\Gamma(s_0,\boldsymbol s|\boldsymbol W)=\mathbb{E}\{Y(s_0,\overline{M}_K(\boldsymbol{s}))|\boldsymbol W\}$ for any $s_0 \in \text{supp}(S)$ and $\boldsymbol s \in \text{supp}(S)^K$. 

\par
First, the direct effect of \( S \rightarrow Y \) takes the form $\tau_1(S\rightarrow Y)_{(s,s')}  = {\Gamma(s', \boldsymbol{s}'|\boldsymbol W)}\big/{\Gamma(s, \boldsymbol{s}'|\boldsymbol W)}$, where \( \boldsymbol{s}'=(s',s',\cdots,s') \in \text{supp}(S)^K \). Second, for any \( k \in \{1,2,\cdots,K\} \), the composite mediation effect is given by $\tau_k^{\dag}(S\rightarrow M_K\leadsto Y)_{(s,s')} = {\Gamma(s, \boldsymbol{s}_{1k}|\boldsymbol W)}\big/{\Gamma(s, \boldsymbol{s}_{2k}|\boldsymbol W)}$, where the vectors \( \boldsymbol{s}_{1k} \) and \( \boldsymbol{s}_{2k} \) are defined in Section \ref{sec:contdef}.
To identify the above path-specific effects in such setting, we focus on the generic quantity, $\Gamma(s_0, \boldsymbol{s} \mid \boldsymbol{W}) = \mathbb{E}[Y(s_0, \overline{M}_K(\boldsymbol{s})) \mid \boldsymbol{W}]$. Because this quantity is a conditional expectation, the identification result established in Theorem \ref{theorem:identification} remains valid under Assumptions \ref{assumption:consistency}--\ref{assumption:identification}. 
\subsection{Log-Linear Model for Count Outcomes}
Consider a log-linear outcome model for count outcomes specified as follows.
\begin{equation}
    \log\left\{\mathbb{E}(Y \mid S, M_1, \dots, M_K, \boldsymbol{W})\right\} = \gamma_S S + \sum_{k=1}^K \beta_{k} M_k + \boldsymbol{\delta}_Y^T \boldsymbol{W}. \label{eq:loglinear}
\end{equation}
The linear normal SEM \eqref{eq:medsem} is used to model the sequentially ordered mediators $M_1,\cdots,M_K$ with independent normal error terms \( \epsilon_{M_k} \sim \mathcal{N}(0, \sigma^2_{M_k}) \), $k=1,2,\cdots,K$. It is easy to show that in such setting of models \eqref{eq:loglinear} and \eqref{eq:medsem}, the direct effect of \( S \) on \( Y \) takes the form $\tau_1(S\rightarrow Y)_{(s,s')}=\exp\{\gamma_S (s' - s)\}$ and moreover, the composite mediation effects $\tau_k^{\dag}(S\rightarrow M_K\leadsto Y)_{(s,s')} = \exp\{\alpha_k \eta_{M_k \leadsto Y} (s' - s)\}$, $k\in\{1,2,\cdots,K\}$. Thus to test whether $\tau_k^{\dag}(S\rightarrow M_K\leadsto Y)_{(s,s')}$ equals to 0 or not is equivalent to testing $H_0\colon \alpha_k \eta_{M_k \leadsto Y} = 0$ versus $H_{1k}\colon \alpha_k \eta_{M_k \leadsto Y} \neq 0$ for each $k=1,2,\cdots,K$.
\section{Estimation} \label{sec:est}
In Sections \ref{sec:cont}--\ref{sec:cat}, we examined three widely used scenarios where the direct effect $\tau_1(S \rightarrow Y)_{(s,s')}$ and the composite mediation effects $\{\tau_k^{\dag}(S \rightarrow M_k \leadsto Y)_{(s,s')}\}_{k=1}^K$ are expressed as explicit functions of the model parameters $\gamma_S$ and $(\alpha_k, \xi_{j \mid k}, \beta_k, \sigma_{M_k})_{k=1,\, j \geq k}^K$. We employ maximum likelihood estimation (MLE) to estimate these parameters under their respective model specifications. The choice of MLE as our primary estimation framework facilitates the derivation of several key analytical properties.

\par
In the case of continuous outcomes, we assume that the error term in the outcome linear model \eqref{eq:linear} follows a normal distribution, 
\(\epsilon_Y \sim \mathcal{N}(0,\sigma^2_Y)\). In the case of probit or logistic model for binary outcomes, we assume outcome $Y$ follows a Bernoulli distribution for \(Y\), with the success probability specified by \eqref{eq:probit} for probit model and \eqref{eq:logit} for logistic model, respectively. In the case of count outcome, we assume outcome $Y$ follows a conditional Poisson distribution or a Negative Binomial distribution, with mean specified by the log-linear model in \eqref{eq:loglinear}. 
In all cases, the linear normal SEM \eqref{eq:medsem} is used to model the sequentially ordered mediators where \(\epsilon_{M_k} \sim \mathcal{N}(0, \sigma^2_{M_k})\) and are mutually independent. As usual, we estimate the direct effect \( \tau_1(S \rightarrow Y)_{(s,s')} \) and the composite mediation effects \( \tau_k^{\dag}(S \rightarrow M_k \leadsto Y)_{(s,s')} \) by plugging in the MLEs of the corresponding parameters into the suitable functional expressions. 

\par
Under some mild regularity conditions \citep{newey1994large}, the MLEs \( (\widehat{\alpha}_k, \widehat{\xi}_{j\mid k}, \widehat{\beta}_k)_{k = 1,\, j \geq k}^K \) are consistent estimators of the model parameters \( (\alpha_k, \xi_{j\mid k}, \beta_k)_{k = 1,\, j \geq k}^K \), and asymptotically normally distributed as the sample size \( n \to \infty \) namely,
\begin{equation}
    n^{1/2}\left\{(\widehat{\alpha}_k, \widehat{\xi}_{j\mid k}, \widehat{\beta}_k)_{k = 1,\, j \geq k}^K - (\alpha_k, \xi_{j\mid k}, \beta_k)_{k = 1,\, j \geq k}^K\right\}
    \xrightarrow{d} (Z_{\alpha_k}, Z_{\xi_{j\mid k}}, Z_{\beta_k})_{k = 1,\, j \geq k}^K, \label{eq:asymmle}
\end{equation}
where the asymptotic multivariate normal distribution has a certain positive definite covariance matrix. Given $\eta_{M_K \leadsto Y} = \beta_{K}$ and $\eta_{M_k \leadsto Y} = \beta_{k} + \sum_{j=k+1}^K \xi_{j \mid k} \eta_{M_j \leadsto Y}$ for $k \in \{1, \dots, K-1\}$, the corresponding MLEs are obtained via the recursive substitution $\widehat{\eta}_{M_K \leadsto Y} = \widehat{\beta}_{K}$ and $\widehat{\eta}_{M_k \leadsto Y} = \widehat{\beta}_{k} + \sum_{j=k+1}^K \widehat{\xi}_{j \mid k} \widehat{\eta}_{M_j \leadsto Y}$ for $k \in \{1, \dots, K-1\}$.

\begin{theorem}\label{thm:dist}
Let $(\sigma_{\alpha_k}^2,\sigma^2_{\xi_{j\mid k}},\sigma_{\beta_{k}}^2)_{k = 1,\, j \geq k}^K$ be the variances of $(Z_{\alpha_k}, Z_{\xi_{j\mid k}}, Z_{\beta_k})_{k = 1,\, j \geq k}^K$ respectively. 
Under the standard regularity conditions, for each $k=1,2,...K$, as $n\rightarrow \infty$
\begin{equation*}
    n^{1/2}
\begin{pmatrix}
    \wh \alpha_k\\
    \wh \eta_{M_k\leadsto Y}
\end{pmatrix}
\todistribution
\calN
\left\{
\begin{pmatrix}
    \alpha_k\\
    \eta_{M_k\leadsto Y}
\end{pmatrix},
\begin{pmatrix}
    \sigma_{\alpha_k}^2  & 0\\
    0 & \sigma_{\eta_{M_k\leadsto Y}}^2
\end{pmatrix}
\right\},
\end{equation*}
where $\sigma_{\eta_{M_K\leadsto Y}}^2 = \sigma_{\beta_{K}}^2$ and for any $k<K$, $$\sigma_{\eta_{M_{k}\leadsto Y}}^2 = \sigma_{\beta_{k}}^2 + \sum_{j = k + 1}^K\{(\xi_{j\mid k})^2 \sigma_{\eta_{M_j\leadsto Y}}^2 + (\eta_{M_j\leadsto Y})^2\sigma_{\xi_{j\mid k}}^2 + 2\xi_{j\mid k}\cov(Z_{\beta_{k}}, Z_{\beta_{j}})\}\cdot$$
\end{theorem}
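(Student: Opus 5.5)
The plan is to combine the joint asymptotic normality \eqref{eq:asymmle} of the MLEs with the multivariate delta method, exploiting the block structure of the likelihood.

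\textbf{Step 1: block-diagonal information.} Under the G-SEM, the errors $\epsilon_{M_1},\dots,\epsilon_{M_K},\epsilon_Y$ are mutually independent. Hence the joint log-likelihood factorizes as
\[
\sum_{k}\ell_{M_k}(\alpha_k,\xi_{k|\cdot},\boldsymbol\delta_{M_k},\sigma_{M_k})+\ell_Y(\gamma_S,\beta_\cdot,\boldsymbol\delta_Y,\dots),
\]
where each factor depends on a disjoint parameter block. The Fisher information is therefore block diagonal. The limiting normal vector in \eqref{eq:asymmle} thus has independent blocks: $(Z_{\alpha_k},Z_{\xi_{k|j}})_{j<k}$ for each mediator equation $k$, and $(Z_{\beta_1},\dots,Z_{\beta_K})$ for the outcome equation. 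The outcome block may be correlated internally; this is where $\cov(Z_{\beta_k},Z_{\beta_j})$ enters.

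\textbf{Step 2: delta method.} The map $(\alpha_k,\xi,\beta)\mapsto(\alpha_k,\eta_{M_k\leadsto Y})$ is a polynomial, defined by the recursion $\eta_{M_K\leadsto Y}=\beta_K$ and $\eta_{M_k\leadsto Y}=\beta_k+\sum_{j>k}\xi_{j|k}\eta_{M_j\leadsto Y}$. It is therefore differentiable, and linearizing the recursion gives
\[
Z_{\eta_k}=Z_{\beta_k}+\sum_{j>k}\bigl(\xi_{j|k}Z_{\eta_j}+\eta_{M_j\leadsto Y}Z_{\xi_{j|k}}\bigr).
\]
This is proved by backward induction from $k=K$. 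The statement's centering should be read as $n^{1/2}\{(\wh\alpha_k,\wh\eta)-(\alpha_k,\eta)\}$.

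\textbf{Step 3: zero cross-covariance.} The coordinate $Z_{\eta_k}$ involves only $Z_{\beta}$'s and $Z_{\xi_{j|k}}$ with $j>k$. These come from the outcome equation and from mediator equations $j\neq k$, all of which are independent of equation $k$, which contains $\alpha_k$. Hence $\cov(Z_{\alpha_k},Z_{\eta_k})=0$.

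\textbf{Step 4: the variance formula.} Squaring the Step 2 linearization and using block independence kills the cross terms between $Z_{\xi_{j|k}}$ and the $\beta$-driven terms. It also kills the cross terms between distinct $Z_{\xi_{j|k}}$, because they lie in different equations $j$.

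The main obstacle is the cross terms among the $Z_{\eta_j}$ for different $j$, and between $Z_{\beta_k}$ and $Z_{\eta_j}$. The latter yields $\cov(Z_{\beta_k},Z_{\eta_j})$. The plan is to expand $Z_{\eta_j}$ recursively and keep only the leading term $\cov(Z_{\beta_k},Z_{\beta_j})$, which matches the stated formula. I will be careful here, checking whether the stated expression implicitly drops the higher-order recursive covariances and the $\cov(Z_{\eta_j},Z_{\eta_{j'}})$ terms, for instance by treating them as absorbed or negligible. If it does, I would either state the full covariance explicitly or note the simplification. For $k=K$ and $k=K-1$ the formula is exact, and I would verify these cases directly as a check.
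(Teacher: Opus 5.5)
Your route is the paper's own. The factorized likelihood gives a block-diagonal limiting covariance (the paper's ``likelihood orthogonality'' (i)--(iii)). The delta method applied to the backward recursion then gives $Z_{\eta_{M_k\leadsto Y}}=Z_{\beta_k}+\sum_{j>k}(\xi_{j\mid k}Z_{\eta_{M_j\leadsto Y}}+\eta_{M_j\leadsto Y}Z_{\xi_{j\mid k}})$. This correctly yields joint normality. Your Step 3 justifies the zero off-diagonal entry more carefully than the paper does; strictly, $Z_{\eta_{M_k\leadsto Y}}$ also contains $Z_{\xi_{l\mid m}}$, $l>m>k$, through the recursion, but these also come from equations $l\neq k$. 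The variance formula is indeed exact for $k=K$ and $k=K-1$.

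\textbf{The gap is Step 4, and your hedge there is warranted.} For $k\le K-2$ the displayed variance is false, so no argument can establish it. ``Keeping only the leading term'' is not a legitimate step, since the discarded covariances are of the same order as the retained ones. The paper's proof has the same hole: it verifies $k=K-1$ and then asserts the general case by ``likewise''. For a concrete counterexample:
\begin{itemize}
\item Take $K=3$, $k=1$, and the linear normal model with $\beta_2=\beta_3=0$ and $\sigma^2_{M_1}=\sigma^2_{M_2}=\sigma^2_{M_3}=1$.
\item Then $(Z_{\beta_1},Z_{\beta_2},Z_{\beta_3})$ has covariance $\sigma_Y^2\{\mathrm{Var}(\boldsymbol e)\}^{-1}$, with $\boldsymbol e$ the mediators residualized on $(S,\boldsymbol W)$.
\item The displayed formula evaluates to $\sigma_Y^2(1+2\xi_{2\mid1}\xi_{3\mid1}\xi_{3\mid2})$, whereas the true value is $\sigma_Y^2$.
\item With $\xi_{2\mid1}=\xi_{3\mid1}=1$ and $\xi_{3\mid2}=-1$, the stated ``variance'' is negative.
\end{itemize}
The omitted terms are $2\xi_{2\mid1}\xi_{3\mid2}\cov(Z_{\beta_1},Z_{\beta_3})+2\xi_{2\mid1}\xi_{3\mid1}\cov(Z_{\beta_2},Z_{\beta_3})+2\xi_{2\mid1}\xi_{3\mid1}\xi_{3\mid2}\sigma^2_{\beta_3}$. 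These are precisely the recursive pieces of $\cov(Z_{\beta_k},Z_{\eta_{M_j\leadsto Y}})$ and $\cov(Z_{\eta_{M_j\leadsto Y}},Z_{\eta_{M_{j'}\leadsto Y}})$ you were worried about.

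\textbf{Step 4 also contains a false claim you would need to drop.} Block independence does \emph{not} remove the cross terms between $Z_{\xi_{j\mid k}}$ and the other summands. For $k<j'<j$, $Z_{\eta_{M_{j'}\leadsto Y}}$ is not purely $\beta$-driven: it contains $\eta_{M_j\leadsto Y}Z_{\xi_{j\mid j'}}$. That term comes from the same equation $j$ as $Z_{\xi_{j\mid k}}$, and within that equation the coefficients of the correlated regressors $M_k$ and $M_{j'}$ are in general correlated. In the linear normal model, $\cov(Z_{\xi_{3\mid1}},Z_{\xi_{3\mid2}})=-\xi_{2\mid1}\sigma^2_{M_3}/\sigma^2_{M_2}$. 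When $K=3$, this contributes $-2\xi_{2\mid1}^2\beta_3^2\sigma^2_{M_3}/\sigma^2_{M_2}$ to $\sigma^2_{\eta_{M_1\leadsto Y}}$.

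\textbf{What you can actually prove.} The correct statement is $\sigma^2_{\eta_{M_k\leadsto Y}}=(\nabla\eta_{M_k\leadsto Y})^{T}\,\Sigma\,\nabla\eta_{M_k\leadsto Y}$. Here the gradient is taken in $(\xi_{j\mid l},\beta_l)$ and $\Sigma$ is the block-diagonal limiting covariance; equivalently, use the recursion with every cross-covariance retained. A convenient check in the linear normal case: by the omitted-variable identity, $\wh\eta_{M_k\leadsto Y}$ is exactly the OLS coefficient of $M_k$ in the regression of $Y$ on $(S,\overline{\boldsymbol M}_k,\boldsymbol W)$. Hence $\sigma^2_{\eta_{M_k\leadsto Y}}=(\sigma_Y^2+\sum_{j>k}\eta^2_{M_j\leadsto Y}\sigma^2_{M_j})/\sigma^2_{M_k}$. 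None of this touches the joint normality or the zero off-diagonal entry; only the displayed variance expression has to be replaced.
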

Consequently, $\widehat{\alpha}_k$ and $\widehat{\eta}_{M_k\leadsto Y}$ are asymptotically independent for each $k=1, \dots, K$, as the covariance terms in the joint bivariate normal distribution of $(\widehat{\alpha}_k, \widehat{\eta}_{M_k\leadsto Y})$ are zero according to Theorem~\ref{thm:dist}. The proof of Theorem~\ref{thm:dist} can be found in Supplementary Section A.5.
Theorem~\ref{thm:dist} allows us to calculate the asymptotic variances of the direct effect \( \tau_1(S \rightarrow Y)_{(s,s')} \) and the composite mediation effects \( \tau_k^{\dag}(S \rightarrow M_k \leadsto Y)_{(s,s')} \) are obtained by employing the first-order delta method. Moreover, Theorem~\ref{thm:dist} serves as the basis to establish inference in Section~\ref{section:test}.

\section{Inference}\label{section:test}
\subsection{Hypothesis Testing Problem}
Testing the existence of the direct effect $\tau_1(S \rightarrow Y)_{(s,s')}$ is equivalent to testing the null hypothesis $H_0: \gamma_S = 0$ or not in all the specified GLM outcome models. This can be conducted using standard asymptotic tests, such as the Wald, Rao’s score, or likelihood ratio test. The main challenge here pertains to testing the existence of the composite mediation pathways \( \tau_k^{\dag}(S \rightarrow M_k \leadsto Y)_{(s,s')} \) while ensuring a proper Type-I error control. As discussed in Sections \ref{sec:cont}--\ref{sec:cat}, across the class of GLM outcome models testing whether \( \tau_k^{\dag}(S \rightarrow M_k \leadsto Y)_{(s,s')} \) equals 0 or not is equivalent to testing:
\begin{equation}\label{null_hypothesis}
H_0\colon \alpha_k \eta_{M_k \leadsto Y} = 0 \quad \text{versus} \quad H_{1k}\colon \alpha_k \eta_{M_k \leadsto Y} \neq 0,
\end{equation}
\noindent
for each $k=1,2,\cdots,K$. Note that the null hypothesis \( H_0 \) in \eqref{null_hypothesis} is composite as the zero product can arise from three distinct cases: (i)  \( \alpha_k = 0 \) and \( \eta_{M_k \leadsto Y} \neq 0 \), (ii) \( \alpha_k \neq 0 \) and \( \eta_{M_k \leadsto Y} = 0 \), and (iii) \( \alpha_k = 0 \) and \( \eta_{M_k \leadsto Y} = 0 \).

Even in a single mediator setting, standard tests such as Sobel’s test and the MaxP test are known to be challenged by co-existence of two different limiting distributions between null scenarios (i),(ii) versus (iii), but beforehand the underlying truth is unknown. Owing to this issue of non-uniqueness, existing classical tests are known to be overly conservative in the control of Type-I error. To address such conservativeness under the composite null, there have been recent advances proposed by \citet{he2024adaptive,roy2024subsampling} and \citet{chen2024quantile}. However, all these new approaches remain restricted to the single-mediator testing problems. In parallel, recent efforts have targeted hypothesis testing in multi-mediator settings. Notably, \citet{dai2022multiple}, \citet{liu2022large}, and \citet{liu2025simple} developed testing procedures that explicitly use the proportions of the three null types to enhance power. Arguably, their approaches rely on accurate estimation of the null composition, and more crucially, they are not equipped to handle sequential or temporal causal ordering among mediators considered in this paper. To address this technical gap, we propose a novel procedure inspired by the method in \citet{roy2024subsampling}. While their work focused on a single mediator, we establish a powerful testing procedure for the composite mediation pathway $\tau_k^{\dag}(S \rightarrow M_k \leadsto Y)_{(s,s')}$ under the null hypothesis $H_0$ in \eqref{null_hypothesis} for each $k=1, 2, \dots, K$.

\subsection{Classical Sobel's Test}
Since our test is built upon an extension of the classical Sobel test \citep{sobel1982AsymptoticConfidenceIntervals}, we begin with an introduction to this test in this section. Let $\wh\sigma^2_{\alpha_k}$ and $\wh\sigma^2_{\eta_{M_k\leadsto Y}}$ be consistent estimators of the asymptotic variances $\sigma_{\alpha_k}^2$ and $\sigma_{\eta_{M_k\leadsto Y}}^2$, $k=1,2,\cdots,K$, respectively. Under the asymptotic normality, we may construct the corresponding Student's $t$ statistics, $T_{\alpha_k} = n^{1/2}\wh \alpha_k/\wh\sigma_{\alpha_k}$ for $\alpha_k$ and $T_{\eta_{M_k\leadsto Y}} = n^{1/2} \wh \eta_{M_k\leadsto Y}/\wh\sigma_{\eta_{M_k\leadsto Y}}$ for $\eta_{M_k\leadsto Y}$. This leads to the so-called classical Sobel's statistic by a direct product:
\begin{equation}
    T_{S\to M_k\leadsto Y} = \frac{T_{\alpha_k}T_{\eta_{M_k\leadsto Y}}}{(T_{\alpha_k}^2 + T_{\eta_{M_k\leadsto Y}}^2)^{1/2}}, \quad k=1,2,\cdots,K
    \cdot \label{eq:sobel}
\end{equation}
Lemma \ref{lemma:sobel_limit} below establishes the asymptotic distribution of the Sobel's test $T_{S\to M_k\leadsto Y}$ which varies by the parameter configuration under the null hypothesis $H_0$ in \eqref{null_hypothesis} for each $k=1,2,\cdots,K$.
\begin{Lemma}\label{lemma:sobel_limit}
Under the null hypothesis $H_0$ in \eqref{null_hypothesis} for each $k=1,2,\cdots,K$, the limiting distribution of Sobel's test statistic \( T_{S \to M_k \leadsto Y} \) in \eqref{eq:sobel} is either \( \mathcal{N}(0, 1) \) when \( (\alpha_k, \eta_{M_k \leadsto Y}) \neq (0, 0) \), or \( \mathcal{N}(0, 1/4) \) when \( (\alpha_k, \eta_{M_k \leadsto Y}) = (0, 0) \).
\end{Lemma}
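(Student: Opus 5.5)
The plan is to work from the joint limit in Theorem~\ref{thm:dist} together with consistency of $\wh\sigma_{\alpha_k}$ and $\wh\sigma_{\eta_{M_k\leadsto Y}}$. By Slutsky's lemma, the centered statistics $Z_1:=n^{1/2}(\wh\alpha_k-\alpha_k)/\wh\sigma_{\alpha_k}$ and $Z_2:=n^{1/2}(\wh\eta_{M_k\leadsto Y}-\eta_{M_k\leadsto Y})/\wh\sigma_{\eta_{M_k\leadsto Y}}$ converge jointly to $(Z_1,Z_2)$ with $Z_1,Z_2$ i.i.d.\ $\calN(0,1)$. Independence holds because the limiting covariance in Theorem~\ref{thm:dist} is diagonal. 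We also use $T_{\alpha_k}=Z_1+n^{1/2}\alpha_k/\wh\sigma_{\alpha_k}$, and similarly for $T_{\eta_{M_k\leadsto Y}}$. The null $H_0$ in \eqref{null_hypothesis} then splits into cases (i)--(iii).

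\textbf{Case (iii), $\alpha_k=\eta_{M_k\leadsto Y}=0$.} Here $(T_{\alpha_k},T_{\eta_{M_k\leadsto Y}})\todistribution(Z_1,Z_2)$. The map $g(x,y)=xy/(x^2+y^2)^{1/2}$ is continuous away from the origin, and the origin has probability zero under the limit. The continuous mapping theorem therefore gives $T_{S\to M_k\leadsto Y}\todistribution Z_1Z_2/(Z_1^2+Z_2^2)^{1/2}$.

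To identify this law, write $(Z_1,Z_2)=R(\cos\Theta,\sin\Theta)$ in polar coordinates, with $R^2\sim\chi^2_2$ independent of $\Theta\sim\mathrm{Unif}[0,2\pi)$. Then $g=R\cos\Theta\sin\Theta=\tfrac12 R\sin(2\Theta)$. Since $2\Theta$ mod $2\pi$ is again uniform and independent of $R$, the pair $(R\cos 2\Theta,R\sin 2\Theta)$ is standard bivariate normal. Hence $R\sin(2\Theta)\sim\calN(0,1)$, and the limit is $\calN(0,1/4)$.

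\textbf{Cases (i) and (ii).} By symmetry take $\alpha_k=0$ and $\eta:=\eta_{M_k\leadsto Y}\neq0$, with $\sigma_\eta:=\sigma_{\eta_{M_k\leadsto Y}}$. Then $T_{\alpha_k}\todistribution Z_1$. Meanwhile $n^{-1/2}T_{\eta_{M_k\leadsto Y}}\to \eta/\sigma_\eta\neq 0$ in probability, so $|T_{\eta_{M_k\leadsto Y}}|\to\infty$ in probability.

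Rewrite the statistic as
\[
T_{S\to M_k\leadsto Y}=T_{\alpha_k}\,\mathrm{sign}(T_{\eta_{M_k\leadsto Y}})\big/\{1+T_{\alpha_k}^2/T_{\eta_{M_k\leadsto Y}}^2\}^{1/2}.
\]
The ratio $T_{\alpha_k}^2/T_{\eta_{M_k\leadsto Y}}^2=O_p(1)/n\to0$, and $\mathrm{sign}(T_{\eta_{M_k\leadsto Y}})\to\mathrm{sign}(\eta)$ in probability. Slutsky's lemma then gives $T_{S\to M_k\leadsto Y}\todistribution \mathrm{sign}(\eta)Z_1\sim\calN(0,1)$, using the symmetry of $Z_1$. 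Case (ii) is identical with the roles of the two statistics exchanged.

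The only delicate points are the asymptotic independence of the two $t$-statistics, which is exactly what Theorem~\ref{thm:dist} supplies, and the polar-coordinate identification of the product law in case (iii). I expect the latter to be the main step, and it should go through directly as described.
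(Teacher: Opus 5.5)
Your proposal is correct. In the degenerate case $(\alpha_k,\eta_{M_k\leadsto Y})=(0,0)$ you follow the paper's route: joint convergence of $(T_{\alpha_k},T_{\eta_{M_k\leadsto Y}})$ to independent standard normals via Theorem~\ref{thm:dist}, then the continuous mapping theorem. You also fill in two points the paper only asserts. First, $xy/(x^2+y^2)^{1/2}$ fails to be continuous only at the origin, which is a null set for the limit. Second, you give the polar-coordinate derivation of the $\calN(0,1/4)$ law, which the paper attributes to ``properties of independent normal distributions.''

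For the one-zero cases your decomposition differs from the paper's. The paper applies the first-order delta method to the product, obtaining $n^{1/2}\wh\alpha_k\wh\eta_{M_k\leadsto Y}\todistribution\calN(0,\alpha_k^2\sigma^2_{\eta_{M_k\leadsto Y}}+\eta_{M_k\leadsto Y}^2\sigma^2_{\alpha_k})$, and then invokes Slutsky. This tacitly rests on the identity $T_{S\to M_k\leadsto Y}=n^{1/2}\wh\alpha_k\wh\eta_{M_k\leadsto Y}/(\wh\alpha_k^2\wh\sigma^2_{\eta_{M_k\leadsto Y}}+\wh\eta_{M_k\leadsto Y}^2\wh\sigma^2_{\alpha_k})^{1/2}$. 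It also needs the denominator to have a strictly positive probability limit, which holds precisely when $(\alpha_k,\eta_{M_k\leadsto Y})\neq(0,0)$. You instead write the statistic as $T_{\alpha_k}\,\mathrm{sign}(T_{\eta_{M_k\leadsto Y}})$ times a factor tending to one. This avoids the delta method altogether and identifies the limit explicitly as $\mathrm{sign}(\eta_{M_k\leadsto Y})Z_1$.

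The paper's route is the standard ``estimate over plug-in delta-method standard error'' justification of Sobel's test. Yours shows concretely that in the one-zero null the Sobel statistic is asymptotically the signed $t$-statistic of the vanishing coefficient. The switch between the two limits is then governed by whether the other $t$-statistic diverges or stays $O_p(1)$, which is exactly the non-uniqueness that motivates SOMET.

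One cosmetic fix: you use $Z_1,Z_2$ both for the prelimit standardized statistics and for their limits, so rename one pair.
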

The proof of Lemma \ref{lemma:sobel_limit}  is detailed in Supplementary Section A.6. The above Sobel's test in \eqref{eq:sobel} would suffer a conservation in Type-I error control when the cutoff is determined by the standard normal $\mathcal{N}(0, 1)$ not by the other $\mathcal{N}(0, 1/4)$, as known in the current literature \citep{mackinnon2002comparison}. Conversely, if $\mathcal{N}(0, 1/4)$ was used to determine the cutoff, then an inflated Type-I error occurs.
% till here
\subsection{Sequentially Ordered Mediation Effect  Test (SOMET)}
To overcome the shortcoming of the Sobel's test in the Type-I error control for SCOM($K$), we propose a \( Q \)-fold data-splitting approach, inspired by the method proposed by \citet{roy2024subsampling} with a single mediator ($K=1$). The key technical advantage of the proposed data-split strategy is to utilize a Student’s \( t \)-transformation that allows to disconnect the reliance of non-unique asymptotic variances varied over different null hypothesis scenarios as shown in Lemma \ref{lemma:sobel_limit}. Consequently, a unique limiting distribution is established to determine the cutoff, thereby elegantly overcoming the inherent conservativeness of the test.

First, partition the full data $\mathcal{D}$ randomly into \( Q \) disjoint sub-datasets, denoted by \( \{\mathcal{D}_q\}_{q = 1}^Q \) such that $\mathcal{D}=\cup_{q=1}^Q \mathcal{D}_q$. For each sub-dataset \( \mathcal{D}_q \) with the size $n_q=|\mathcal{D}_q|$, we construct the Sobel's test statistic according to \eqref{eq:sobel}, denoted by \( T_{S \to M_k \leadsto Y}(\mathcal{D}_q) \). Second, take an average of these $Q$ Sobel's test statistics given as $\wb T_{S \to M_k \leadsto Y} = \frac{1}{Q} \sum_{q = 1}^Q T_{S \to M_k \leadsto Y}(\mathcal{D}_q)$. Third, construct a Student’s \( t \)-transformation of the form:
\begin{equation}
    \wt T_{S\to M_k\leadsto Y}= \frac{Q^{1/2} \, \wb T_{S\to M_k\leadsto Y}}{\left[\sum_{q = 1}^Q \left\{T_{S\to M_k\leadsto Y}(\mathcal{D}_q) - \wb T_{S\to M_k\leadsto Y}\right\}^2 / (Q - 1) \right]^{1/2}}\cdot \label{eq:somet}
\end{equation}
\begin{theorem}\label{theorem:t_distribution}
    Under the composite null hypothesis $H_0$ in \eqref{null_hypothesis} for each $k=1,2,\cdots,K$, for a given positive integer $Q$, the student's t-statistic \( \wt T_{S \to M_k \leadsto Y} \todistribution t_{Q - 1} \) as $\text{min}\{n_1,n_2,\cdots,n_Q\} \rightarrow \infty$, where \( t_{Q - 1} \) denotes the Student’s \( t \)-distribution with \( (Q - 1) \) degrees of freedom. Moreover, in the ease of equal data split with $n_q=\frac{n}{Q}$, the asymptotic t-distribution holds as $n\rightarrow \infty$.
\end{theorem}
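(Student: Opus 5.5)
The plan is to show that, under $H_0$, the $Q$ sub-sample Sobel statistics $T_{S\to M_k\leadsto Y}(\mathcal{D}_1),\dots,T_{S\to M_k\leadsto Y}(\mathcal{D}_Q)$ converge jointly in distribution to $Q$ i.i.d. $\mathcal{N}(0,c^2)$ variables. The constant $c^2\in\{1,1/4\}$ depends on the null configuration but is common to all folds. The $t$-statistic in \eqref{eq:somet} is scale invariant, so the unknown $c$ cancels and a continuous mapping argument gives the $t_{Q-1}$ limit.

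\textbf{Step 1 (marginal limits).} Because the split is random and the observations are i.i.d., each $\mathcal{D}_q$ is an i.i.d. sample of size $n_q$ from the same law. The MLEs computed on $\mathcal{D}_q$ therefore satisfy \eqref{eq:asymmle}, and Theorem~\ref{thm:dist} holds with $n$ replaced by $n_q$. Applying Lemma~\ref{lemma:sobel_limit} fold by fold gives $T_{S\to M_k\leadsto Y}(\mathcal{D}_q)\todistribution \mathcal{N}(0,c^2)$ as $n_q\to\infty$. Here $c^2=1$ if $(\alpha_k,\eta_{M_k\leadsto Y})\neq(0,0)$ and $c^2=1/4$ if both are zero. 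The true parameter is a single fixed point shared by all folds, so $c$ is the same for every $q$.

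\textbf{Step 2 (joint limit).} The folds are disjoint, and the statistic on $\mathcal{D}_q$ is a measurable function of $\mathcal{D}_q$ alone. Hence the $Q$ statistics are exactly independent for every $n$, and their joint characteristic function factorizes into the marginal ones. Consequently the vector $(T(\mathcal{D}_1),\dots,T(\mathcal{D}_Q))$ converges jointly to $(cZ_1,\dots,cZ_Q)$ as $\min_q n_q\to\infty$, where the $Z_q$ are i.i.d. $\mathcal{N}(0,1)$.

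\textbf{Step 3 (continuous mapping).} Define $g(x_1,\dots,x_Q)=Q^{1/2}\bar x/\{\sum_q(x_q-\bar x)^2/(Q-1)\}^{1/2}$. This map is continuous off the set $\{x_1=\cdots=x_Q\}$, which has Lebesgue measure zero and hence probability zero under the Gaussian limit whenever $Q\ge 2$. The continuous mapping theorem then gives $\wt T_{S\to M_k\leadsto Y}\todistribution g(cZ)=g(Z)$, using the invariance $g(cx)=g(x)$ for $c>0$. By Cochran's theorem, $\bar Z$ and $\sum_q(Z_q-\bar Z)^2$ are independent, with $Q^{1/2}\bar Z\sim\mathcal{N}(0,1)$ and $\sum_q(Z_q-\bar Z)^2\sim\chi^2_{Q-1}$. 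Therefore $g(Z)\sim t_{Q-1}$. In the equal-split case $n_q=n/Q$ with $Q$ fixed, so $\min_q n_q\to\infty$ is equivalent to $n\to\infty$.

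\textbf{Main obstacle.} The main difficulty is Step 1 in the degenerate configuration $(\alpha_k,\eta_{M_k\leadsto Y})=(0,0)$. There the statistic is a ratio whose denominator $(T_{\alpha_k}^2+T_{\eta}^2)^{1/2}$ does not diverge, and one needs Lemma~\ref{lemma:sobel_limit} to hold with the per-fold variance estimators consistent at rate $n_q$. I will rely on Lemma~\ref{lemma:sobel_limit} as stated, together with the asymptotic independence of $\wh\alpha_k$ and $\wh\eta_{M_k\leadsto Y}$ from Theorem~\ref{thm:dist}. In this case $(T_{\alpha_k},T_\eta)\to(Z_a,Z_b)$ i.i.d. standard normal, and $Z_aZ_b/(Z_a^2+Z_b^2)^{1/2}\sim\mathcal{N}(0,1/4)$ is exactly Gaussian. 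This is what makes the Gaussian limit in Step 3 legitimate under every null configuration.
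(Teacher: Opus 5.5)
Your proposal is correct and takes the same route as the paper. The paper omits the proof and only asserts that the result follows from Lemma~\ref{lemma:sobel_limit} and the continuous mapping theorem; you supply what it leaves implicit: the statistics on disjoint folds are exactly independent, the null limit $\mathcal{N}(0,c^2)$ with $c^2\in\{1,1/4\}$ has the same $c$ in every fold, the $t$-ratio is scale invariant so $c$ cancels, and $Q\ge 2$ is required. One small wording fix: in the degenerate case the per-fold variance estimators only need to be consistent as $n_q\to\infty$, not consistent at any particular rate.
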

The validity of Theorem \ref{theorem:t_distribution} is in fact immediately implied by Lemma \ref{lemma:sobel_limit} according to the continuous mapping theorem. Thus, the proof is omitted. It is worth emphasizing that Theorem~\ref{theorem:t_distribution} states that the null distribution of our proposed test statistic \(\widetilde{T}_{S \to M_k \leadsto Y}\) in \eqref{eq:somet} is 
\(t_{Q-1}\), which does not depend on any specific asymptotic variances with respect to any specific underlying true null parameter scenarios. Consequently, this ensures a valid inference without requiring \emph{a priori} knowledge of the true scenario (e.g., proportions of the null scenarios) under \(H_0\) for each $k=1,2,\cdots,K$.
In practice, the choice of \(Q\) suggested by \citet{roy2024subsampling}, namely \( Q = \lfloor 0.5\, n^{1/2} \rfloor \), together with an equally sized division with \( n_q = n_Q \approx \tfrac{n}{Q} \), has been found to work reasonably well based on our extensive numerical experiences in implementing the above \(Q\)-fold data-splitting strategy.

Although theoretically, one round of random split can ensure a proper Type I error control, our numerical experiences suggest that statistical power can be low due to the small sample size of the sub-dataset and the large variability of statistic $\wt T_{S\to M_k\leadsto Y}$ in \eqref{eq:somet}. To improve power and robustness by reducing the variability of the test statistic, we propose to repeat the \( Q \)-fold splitting procedure by \textit{say}, \( B \) rounds, and then aggregate the yielded \( p \)-values over $B$ replicates using a method similar to Cauchy Combination Test \citep{liu2022large}. Of note, the Cauchy test method is originally developed for different hypotheses with a common dataset. Let $\{p_b\}_{b = 1}^B$ be the set of $p$-values obtained from $B$ rounds of $Q-$folds data-split, and let $\{w_b\}_{b = 1}^B$ be certain non-negative weights satisfying $\sum_{b = 1}^Bw_b = 1$. Given a nominal level $\psi\in(0,1)$, the Cauchy Combination test enables to combine these $p$-values to form a combination-type statistic, $T_B = \sum_{b = 1}^B w_b\text{tan}\{(0.5 - p_b)\pi\}$, where $\text{tan}\{(0.5 - p_b)\pi\}$ follows a standard Cauchy distribution. Then we reject the composite null $H_0$ in \eqref{null_hypothesis} if the test statistic $T_B$ exceeds the $(1-\psi)^{\text{th}}$ quantile of the standard Cauchy distribution. A clear technical advantage of the Cauchy combination test approach is that it remains valid under arbitrary dependence among the \( p \)-values, making it particularly well-suited for the repeated-splitting framework.

\section{Simulation Experiments}\label{section:simulation}
We conduct simulation studies to evaluate the performance of both MLE estimation and SOMET-based inference. The evaluation encompasses three popular types of outcome model specifications, including a linear model for continuous outcomes, and logistic and probit models for binary outcomes.

\subsection{Linear SCOM(3) for continuous outcomes}\label{subsection:linear_SEM}
Consider a data-generation mechanism governed by the following linear SEM with three sequential causally ordered mediators,
SCOM(3), given as follows:
\begin{align}\label{model:simulation_linear_SEM}
\left\{
\begin{aligned}
   M_1 &= \alpha_{1} S + \delta_{M_1} W + \epsilon_{M_1}, \\
   M_2 &= \alpha_{2} S + \xi_{2|1} M_1 + \delta_{M_2} W + \epsilon_{M_2}, \\
   M_3 &= \alpha_{3} S + \sum_{k=1}^{2} \xi_{3|k} M_k + \delta_{M_3} W + \epsilon_{M_3}, \\
   Y   &= \gamma_S S + \sum_{k=1}^{3} \beta_{k} M_k + \delta_{Y} W + \epsilon_Y.
\end{aligned}
\right.
\end{align}
with a binary exposure $S\sim \text{Ber}(0.5)$, a continuous confounder $W\sim \text{Unif}(0,1)$, and four-dimensional vector of mutually independent normal errors $(\epsilon_{M_1},\epsilon_{M_2},\epsilon_{M_3},\epsilon_{Y})\sim\calN(\mathbf{0}_4,4\mathbf{I}_{4\times4})$.
%We set the true parameters $\xi_{2|1} = \xi_{3|1} = \xi_{3|2} = 1$, 
Throughout the simulation experiments, fix the true parameters, $\xi_{2|1} = \xi_{3|1} = \xi_{3|2} = 1$, 
$\gamma_S = 1$, and $\delta_{M_k} = \delta_Y = 1$ for $k=1,2,3$, used above.

\textbf{Results of Estimation.} We evaluate the performance of the MLE under the four parameter configurations, in addition to the fixed parameter above,  
\begin{align}
H_{01}:& \; (\alpha_1,\alpha_2,\alpha_3,\beta_1,\beta_2,\beta_3) = (1,1,1,0,0,0)\Rightarrow \alpha_k\neq 0, \eta_{M_k \leadsto Y} = 0 \nonumber\\
H_{02}:& \; (\alpha_1,\alpha_2,\alpha_3,\beta_1,\beta_2,\beta_3) = (0,0,0,1,1,1) \Rightarrow \alpha_k= 0, \eta_{M_k \leadsto Y} \neq 0 \nonumber\\
H_{03}:& \; (\alpha_1,\alpha_2,\alpha_3,\beta_1,\beta_2,\beta_3) = (0,0,0,0,0,0)\Rightarrow \alpha_k= 0, \eta_{M_k \leadsto Y} = 0 \nonumber\\
H_{1}:& \; (\alpha_1,\alpha_2,\alpha_3,\beta_1,\beta_2,\beta_3) = (1,1,1,1,1,1) \Rightarrow \alpha_k\neq 0, \eta_{M_k \leadsto Y} \neq 0. \label{eq:nullsimu}
\end{align}
Note that under $H_{01}$--$H_{03}$ settings, the composite mediation effects $\tau_1^{\dag}$, $\tau_2^{\dag}$ and $\tau_3^{\dag}$ equals to zero. 
The performance of MLE for the model parameters is assessed using mean squared error (MSE), estimation bias, and empirical variance over $2000$ rounds of replications. The sample size $n$ varies over $\{400, 600, 800, 1000\}$. Supplementary Table B.1 and Supplementary Figure B.1 summarize the results for the MLE estimation under these settings. 
Across all scenarios, it is clear that  bias, MSE, and empirical variance decrease with increased sample sizes. The MLE attains a linear convergence rate under the settings of \(H_{01}\), \(H_{02}\), and \(H_{1}\) according to Supplementary Figure B.1, while, under the \(H_{03}\) setting, the convergence rate accelerates due to singularity. This phenomenon has been reported by \cite{he2024adaptive}; that is, the null hypothesis corresponds to a composite structure that mixes different convergence rates as order of $n$.

\textbf{Results of Hypothesis Testing.} In the simulation experiments, we also compare our proposed test, SOMET, against four benchmark testing approaches of great popular in the current literature:  
(i) Sobel's test \citep{sobel1982AsymptoticConfidenceIntervals};  
(ii) Sobel's test with bootstrap (Sobel-Boot);  
(iii) MaxP test \citep{fritz2007RequiredSampleSize};  and 
(iv) MaxP test with bootstrap (MaxP-Boot).
To make a fair comparison in terms of computing, complexity and runtime, both bootstrap sample size and number of data splits in SOMET are equal to $500$. In all cases, the sample size $n$ is fixed at 2000.  

\textit{Type I Error Control.} For \(k = 1,2,3\), we evaluate Type~I error rates under the three null cases above \(H_{01}\), \(H_{02}\), and \(H_{03}\) in \eqref{eq:nullsimu}, which corresponds to, respectively, the composite null, \(\alpha_k \eta_{M_k \leadsto Y} = 0\). 
Figure~\ref{fig:qqplot_three_mediators} shows quantile-quantile plots in the $-\log_{10}$ scale for all these five tests over $2000$ replicates, where each row panel corresponds to a composite mediation pathway $S \to M_k \leadsto Y$, $k=1,2,3$, while each column represents different null hypothesis $H_{0k}$, $k = 1, 2, 3$. 

Figure~\ref{fig:qqplot_three_mediators} 
unveils the following highlights. 
(i) The four existing benchmark tests can control type I error under $H_{01}$ and $H_{02}$ in that the classical MaxP test appears overly conservative under $H_{01}$ for $k = 1$. 
(ii) Under $H_{03}$, all benchmark tests appear conservative, and only our SOMET shows an adequate type I error control for all $k = 1, 2, 3$. Since SOMET is built upon the Sobel's test, it clearly outperforms the classical Sobel test and the bootstrap-based Sobel test. Thus, the data-splitting strategy is deemed more desirable in the considered hypothesis setting.

\textit{Statistical Power.}  
To examine the statistical power of the proposed SOMET test under the alternative hypothesis $H_1$, we generate data for simplicity with $\xi_{2|1} = \xi_{3|1} = \xi_{3|2} = 0$ and the remaining parameters set by two scenarios: (i) varying $\alpha_k = \beta_k$ from $0$ to $0.4$;  
and (ii) fixing the composite mediation effect $\alpha_k\eta_{M_k\leadsto Y} =  \alpha_k \beta_k = 0.05$ while varying the ratio $\alpha_k / \beta_k$ from $0.5$ to $4$. Figure~\ref{fig:power_three_mediators} displays the empirical power ratio of each competing test versus the proposed test over $2000$ replicates. The left-to-right columns correspond to the pathway $S\to M_k\leadsto Y$, $k=1,2,3$. In almost all settings, the proposed SOMET test exhibits the highest statistical power. Of note, when $\alpha_k = 0$ in the first scenario, only SOMET provides an adequate type I error control. 

The results for probit and logistic outcomes are presented in Supplementary Section B.1.

\section{Applications}\label{section:real_data}
% \newpage
\subsection{Logistic SEM-SCOM in Assessing Risk of Diabetes Onset}
\label{sec:AllofUs}
This application utilizes the Fitbit device data from the NIH {\it All of Us} Research Program \citep{all2019all} to investigate if an exposure $(S)$ to family diabetes history (FDH) affects the risk of diabetes onset, DO (\(Y\)) and if the effect is mediated by physical activity measured via Fitbit wearable devices. Here exposure variable FDH is binary, with 1 indicating that the participant reports at least one diagnosis of diabetes in his or her family members (parent, grandparent, or sibling), and 0 otherwise. The outcome variable DO, is also binary, with 1 for the participant being diagnosed with diabetes. We are interested in evaluating if this FDH-DO relationship may be mediated by two temporally ordered mediators: baseline Body Mass Index (BMI), denoted as \(M_1\), measured as the average BMI over the two-year period preceding the physical activity window, and Physical Activity (PA), \(M_2\), recorded as the average daily step counts from wearable devices prior to the date of diabetes onset for cases, and over all available records of step counts for controls. This suggests a causal DAG depicted in Supplementary Figure B.8.
\par
There is a vast literature reporting that a family history of diabetes (FDH) significantly increases the risk of developing diabetes in offspring \citep{meigs2000parental, van2010role}. In particular, \citet{valdez2007family} reported that FDH is a strong and independent risk factor for type 2 diabetes, increasing risk by a factor of 2–6 fold depending on the number and closeness of affected relatives. In this analysis, we offer an in-depth risk analysis by understanding the causal driving factors between FDH exposure and diabetic onset. In particular, we plan to examine if and to what extent BMI and physical activity may potentially mediate this risk of $FDH\rightarrow DO$. The vector of confounding factors used in this analysis includes sex, race, age, smoking status, and alcohol consumption. We conducted a complete case analysis by removing participants with missing values in any of the variables used in the study. Additionally, we excluded individuals with outliers in either PA ($M_2$) or BMI ($M_1$). After the standard quality control procedures, the remaining data consists of 20,233 participants to be used in our analysis. Summary statistics of the variables are available in Supplementary Table C.4.

% \subsubsection{Results}
Our findings indicate that the dataset involves a rare outcome, with approximately 2\% of participants experiencing diabetic incidents among those with physical activity data. To address this low prevalence, we adopt a logistic outcome model. Specifically, a logistic SEM-SCOM(2) framework is deemed appropriate for this analysis. To ensure a balanced ratio of cases to controls in data splitting for the use of our SOMET method, we employ a stratified splitting strategy. We obtain the estimated direct effect, $\widehat{\tau}_1(FDH\rightarrow DO)=0.83$ in terms of log odds ratio or estimated odds ratio of 2.3. This suggests that individuals who come from a family with FDH to a family with no FDH have more than twice the odds of developing diabetes. This effect of FDH  on the risk of diabetes is highly statistically significant, with a $p$-value of \(10^{-13}\) by the Wald test.

To examine the composite mediation effects, we calculated the $p$-values of testing the presence of effects $\tau^{\dag}_1(S \rightarrow M_1 \leadsto Y)$ (via BMI) and $\tau^{\dag}_2(S \rightarrow M_2 \leadsto Y)$ (via PA). Table~\ref{tab:AOU_res} reports the $p$-values using our SOMET and other existing methods. We obtained estimated log-odds ratios, $\wh\tau^{\dag}_1(S \rightarrow M_1 \leadsto Y)=0.003$ ($p$-value=0.0006) and  $\wh\tau^{\dag}_2(S \rightarrow M_2 \leadsto Y)=0.013$ ($p$-value=0.0396) respectively. They suggest significant mediation effects of baseline BMI and PA prior to the disease onset. In other words, both baseline BMI and PA prior to the diabetes onset can mediate how FDH influences the risk of DO. In this analysis, all five methods reported positive discoveries.

\begin{table}[H]
\centering
\caption{P-values for the composite mediation effects of FDH on DO via BMI ($M_1$) and PA ($M_2$) across different methods for the null hypothesis of no composite mediation effect. X.Boot stands for method X is implemented by the bootstrap sampling method with 500 bootstrap replicates. SOMET was conducted using 500 rounds of data splitting.}
\begin{tabular}{lccccc}
\toprule
Mediator & Sobel & MaxP & Sobel.Boot & MaxP.Boot & SOMET \\
\midrule
BMI ($M_1$) & 0.0217 & 0.0024 & 0.0760 & 0.0175 & 0.0006   \\
PA ($M_2$)  & 0.0131 & 0.0054 & 0.005 & 0.0055 & 0.0396  \\
\bottomrule
\end{tabular}
\label{tab:AOU_res}
\end{table}

\subsection{Linear SEM-SCOM for Adolescent Obesity}
\label{sec:ELEMENT}
The second application illustrates a mediation analysis to examine potential causal mediation pathways through which prenatal exposure to blood lead (BP) may influence childhood body mass index (BMI). We analyze a dataset from the ELEMENT cohort \citep{perng2019EarlyLifeExposure}, comprising 288 adolescents (ages 10--18) residing in Mexico City. Prior studies e.g. \citep{wang2019AssociationMaternalExposure} have reported that elevated maternal lead exposure during pregnancy is associated with an increased risk of childhood overweight and obesity. The biological mechanisms underpinning this association remain largely unknown.

Maternal lead exposure may trigger certain intermediate biological processes impacting negatively on birth weight \citep{wild2023EffectPregnancyLipid} and fatty acid-based lipid metabolites \citep{chen2020AssociationMaternalBlood, koemel2022MaternalDietaryFatty}. Elevated birth weight has been found to be linked to a higher likelihood of abnormal childhood BMI \citep{kapral2018AssociationsBirthweightOverweight} leading to an increased risk of obesity. All these findings suggest a causal diagram (see Figure \ref{fig:dohad}), including exposure $S$ of maternal blood lead at delivery, one mediator $M_1$ of birth weight, another mediator $M_2$ of lipid-based fatty acid metabolites $M_2$, and outcome $Y$ of adolescent BMI. Clearly, these two mediators $M_1$ and $M_2$ are temporally or sequentially ordered, with $M_1$ preceding $M_2$. Here, sixty-five metabolites are considered as candidates of $M_2$ in the investigation. The vector of confounders $\boldsymbol W$ consists of eight baseline covariates; they are maternal age at delivery, parity, gestational age, educational attainment, marital status, child’s age and sex, and household socioeconomic status. Lipid metabolite concentrations are standardized into Z-scores prior to the analysis. Of note, missing values in variables $S$, $M_1$, and $M_2$ are handled by the hot-deck imputation method with ten-nearest-neighbors imputation using the R package \texttt{impute}.

We focus on examining three primary causal pathways: the direct effect $\tau_1(S \to Y)$, and two composite mediation effects, one via birth weight, $\tau_1^{\dag}(S \to M_1 \leadsto Y)$, and another via lipid metabolites, $\tau_2^{\dag}(S \to M_2 \to Y)$. In this analysis, we evaluate one lipid metabolite at one time in the analysis, and the multiplicity is adjusted by the Benjamini–Hochberg procedure \citep{benjamini1995ControllingFalseDiscovery} at the false discovery rate of 0.1. P-values are calculated using five methods: SOMET, MaxP, MaxP-Boot, Sobel, and Sobel-Boot where 1000 bootstrap replicates for Sobel-Boot and MaxP-Boot are used and 1000 rounds of data splitting for SOMET.

Our main findings are summarized as follows. Lipid FA 18:0 OH stands out as a significant $M_2$, whose analysis results are included in Table~\ref{table:real_data}. The lipid FA 18:0 OH has been reported in the literature to be associated with a higher risk of multiple islet autoimmunity in childhood \citep{niinisto2021ChildrensErythrocyteFatty}. Moreover, some epidemiological studies have found that higher dietary intake of stearic acid correlates with increased BMI, suggesting that exogenous sources of this fatty acid may contribute to body weight regulation \citep{raatz2017RelationshipReportedIntakes}. Our analysis aligns with these existing findings through the application of causal mediation analysis. The results of the non-significant metabolites are included in Supplementary Table C.4. Interestingly, our SOMET is the only test to discover the statistical significance of the mediation pathway through Lipid FA 18:0 OH.

\begin{comment}
    \pxs{Higher maternal intake or altered metabolism of stearic acid (18:0) may increase maternal BMI and modify fetal lipid exposure, potentially influencing birth weight. The same metabolic pathways that elevate hydroxystearic acid (18:0 OH) could also affect immune system maturation, linking prenatal lipid environment to both growth and later autoimmune susceptibility.} 
\end{comment}

\begin{comment}
    \begin{table}[H]
    \centering
    \caption{One significant lipid found in the mediation analysis of $\text{maternal blood lead}\to\text{birth weight}\to \text{lipid}\to \text{childhood BMI}$ at controlled FDR rate $0.1$. } 
    \label{table:real_data}
    % \begin{adjustbox}{max width=4\textwidth}
    \resizebox{\columnwidth}{!}{%
    \begin{tabular}{c|ccccc|ccccc|cccc}
        \toprule
        & \multicolumn{5}{c|}{$S\to M_1\leadsto Y$ (p-value)} & \multicolumn{5}{c|}{$S\to M_2\to Y$ (p-value)} & \multicolumn{4}{c}{Effects}\\
        Lipid & SOMET & JS & JS-B & PoC & PoC-B & SOMET & JS & JS-B & PoC & PoC-B & NIE ($M_1$) & NIE ($M_2$) & NDE & NTE \\ 
        \midrule
        FA 18:0 OH   & 0.8516 & 0.37 & 0.27 & 0.4456 & 0.285 & 0.0007 & 0.3884 & 0.342 & 0.4139 & 0.221 & 0.0061 & -0.0021 & 0.2614 & 0.2654 \\ 
        \bottomrule
    \end{tabular}
    }
    % \end{adjustbox}
\end{table}
\end{comment}
\begin{table}[H]
    \centering
    \caption{P-values for the composite mediation effects of maternal blood lead on childhood BMI via birth weight ($M_1$) and lipid ($M_2$) across different methods, at controlled FDR rate 0.1. X.Boot stands for method X is implemented by the bootstrap sampling method with 1000 bootstrap replicates. SOMET was conducted using 1000 rounds of data splitting.} 
    \label{table:real_data}
    \begin{tabular}{lccccc}
        \toprule
        Mediator & SOMET & MaxP & MaxP.Boot & Sobel & Sobel.Boot \\
        \midrule
        Birth weight ($M_1$) & 0.8516 & 0.3700 & 0.2700 & 0.4456 & 0.2850 \\
        Lipid ($M_2$)        & 0.0007 & 0.3884 & 0.3420 & 0.4139 & 0.2210 \\
        \bottomrule
    \end{tabular}
\end{table}

\section{Concluding Remarks}\label{sec-conc}
This paper develops a general mediation analysis framework for identifying, estimating, and testing pathway effects involving multiple sequentially ordered mediators. Such settings arise naturally in a variety of scientific domains, including epidemiology, systems biology, and longitudinal studies where exposure effects propagate through temporally structured intermediate variables on changes in outcomes. We establish nonparametric identification results and derive closed-form expressions for mediation pathway effects under the GLM models of both continuous and categorical outcomes.

A central methodological contribution given in this paper is a new strategy to deal with the challenge that the null space corresponding to the absence of pathway effects is composite, consisting of multiple non-nested parameter configurations. This composite nature of the null parameter space may restrain conventional inference procedures from yielding valid or optimal inference
in such an overly conservative type-I error control. To address this technical gap, we propose a new studentized testing procedure termed as \textit{Sequentially Ordered Mediation Effect Test (SOMET)}, which utilizes the data splitting technique to achieve robust variance estimation. As shown in this paper, our SOMET test can control Type I error under the composite null and outperform existing benchmark tests in terms of statistical power. Moreover we leverage an idea similar to Cauchy combination method to aggregate multiple independent $p$-values, each produced in one round of data split. The power can be further improved by repeated data splits. In the meantime, SOMET may increase computational cost, particularly when the number of mediators grows. 

Our proposed methodology is broadly applicable, straightforward to implement, and demonstrates strong performance across a variety of simulation scenarios. Nonetheless, several important directions remain for future research. One key avenue involves eliminating the step of data splitting, either through joint significance testing procedures or by employing some advanced adaptive bootstrap methods with no need of tuning parameter selection. Additionally, developing scalable alternatives to the Cauchy combination test could alleviate the computational burden associated with high-dimensional mediation settings. Extending the framework to incorporate time-varying exposures and dynamic feedback between mediators and outcomes represents another promising area for methodological advancement. In sum, this work provides a principled foundation for conducting rigorous, scalable, and interpretable mediation analysis inference in complex systems characterized by structured causal pathways in the presence of SCOM($K$).

\section*{Data and Code Availability Statement}\label{data-availability-statement}
\textbf{All of Us:} Patient confidentiality prevents the sharing of data publicly. The data underlying the results in Section~\ref{sec:AllofUs} are available from the All of Us Research Program at \url{https://www.researchallofus.org/register/} for researchers who meet the criteria for confidential data access.\\

\noindent
\textbf{ELEMENT:} Patient confidentiality prevents the sharing of data publicly. The data underlying the results in Section~\ref{sec:ELEMENT} are available from  \url{https://www.umich-element.org/} for researchers who meet the criteria for confidential data access via appropriate data use agreements.\\

\noindent
\textbf{Computing Code:} The code is available in the Github repo \url{https://github.com/Ritoban1/SOMET}.

\section*{Acknowledgment}
The research presented in this paper is funded by NIH R01ES033565.
\bibliographystyle{abbrvnat} 
\bibliography{references.bib}

@article{niinisto2021ChildrensErythrocyteFatty,
  title = {Children's Erythrocyte Fatty Acids Are Associated with the Risk of Islet Autoimmunity},
  author = {Niinist{\"o}, Sari and Erlund, Iris and Lee, Hye-Seung and Uusitalo, Ulla and Salminen, Irma and Aronsson, Carin Andr{\'e}n and Parikh, Hemang M. and Liu, Xiang and Hummel, Sandra and Toppari, Jorma and She, Jin-Xiong and Lernmark, {\AA}ke and Ziegler, Annette G. and Rewers, Marian and Akolkar, Beena and Krischer, Jeffrey P. and Galas, David and Das, Siba and Sakhanenko, Nikita and Rich, Stephen S. and Hagopian, William and Norris, Jill M. and Virtanen, Suvi M.},
  year = {2021},
  month = feb,
  journal = {Scientific Reports},
  volume = {11},
  number = {1},
  pages = {3627},
  publisher = {Nature Publishing Group},
  urldate = {2025-04-23},
  copyright = {2021 The Author(s)},
  langid = {english}
}

@article{raatz2017RelationshipReportedIntakes,
  title = {Relationship of the Reported Intakes of Fat and Fatty Acids to Body Weight in US Adults},
  author = {Raatz, Susan K and Conrad, Zach and Johnson, LuAnn K and Picklo, Matthew J and Jahns, Lisa},
  year = {2017},
  month = apr,
  journal = {Nutrients},
  volume = {9},
  number = {5},
  pages = {438},
  urldate = {2025-04-23},
  pmcid = {PMC5452168},
  pmid = {28452961}
}

@article{kapral2018AssociationsBirthweightOverweight,
  title = {Associations between Birthweight and Overweight and Obesity in School-Age Children},
  author = {Kapral, Nicole and Miller, Sarah E. and Scharf, Rebecca J. and Gurka, Matthew J. and DeBoer, Mark D.},
  year = {2018},
  month = jun,
  journal = {Pediatric obesity},
  volume = {13},
  number = {6},
  pages = {333--341},
  urldate = {2025-04-23},
  pmcid = {PMC5756526},
  pmid = {28685963}
}

@article{wang2019AssociationMaternalExposure,
  title = {Association Between Maternal Exposure to Lead, Maternal Folate Status, and Intergenerational Risk of Childhood Overweight and Obesity},
  author = {Wang, Guoying and DiBari, Jessica and Bind, Eric and Steffens, Andrew M. and Mukherjee, Jhindan and Azuine, Romuladus E. and Singh, Gopal K. and Hong, Xiumei and Ji, Yuelong and Ji, Hongkai and Pearson, Colleen and Zuckerman, Barry S. and Cheng, Tina L. and Wang, Xiaobin},
  year = {2019},
  month = oct,
  journal = {JAMA Network Open},
  volume = {2},
  number = {10},
  pages = {e1912343},
  urldate = {2025-04-23},
  pmcid = {PMC6777254},
  pmid = {31577354}
}

@article{benjamini1995ControllingFalseDiscovery,
  title = {Controlling the False Discovery Rate: A Practical and Powerful Approach to Multiple Testing},
  shorttitle = {Controlling the {{False Discovery Rate}}},
  author = {Benjamini, Yoav and Hochberg, Yosef},
  year = {1995},
  month = jan,
  journal = {Journal of the Royal Statistical Society: Series B (Methodological)},
  volume = {57},
  number = {1},
  pages = {289--300},
  urldate = {2022-10-25},
  langid = {english}
}

@article{koemel2022MaternalDietaryFatty,
  title = {Maternal Dietary Fatty Acid Composition and Newborn Epigenetic Aging---a Geometric Framework Approach},
  author = {Koemel, Nicholas A and Senior, Alistair M and Dissanayake, Hasthi U and Ross, Jason and McMullan, Rowena L and Kong, Yang and Phang, Melinda and Hyett, Jon and Raubenheimer, David and Gordon, Adrienne and Simpson, Stephen J and Skilton, Michael R},
  year = {2022},
  month = jan,
  journal = {The American Journal of Clinical Nutrition},
  volume = {115},
  number = {1},
  pages = {118--127},
  urldate = {2025-04-22}
}

@article{chen2020AssociationMaternalBlood,
  title = {Association between Maternal Blood Lipids Levels during Pregnancy and Risk of Small-for-Gestational-Age Infants},
  author = {Chen, Qinqing and Chen, Huiqi and Xi, Fangfang and Sagnelli, Matthew and Zhao, Baihui and Chen, Yuan and Yang, Mengmeng and Xu, Dong and Jiang, Ying and Chen, Guangdi and Luo, Qiong},
  year = {2020},
  month = nov,
  journal = {Scientific Reports},
  volume = {10},
  pages = {19865},
  urldate = {2025-04-22},
  pmcid = {PMC7669834},
  pmid = {33199750}
}

@incollection{wild2023EffectPregnancyLipid,
  title = {Effect of Pregnancy on Lipid Metabolism and Lipoprotein Levels},
  booktitle = {Endotext [Internet]},
  author = {Wild, Robert and Feingold, Kenneth R.},
  year = {2023},
  month = mar,
  publisher = {MDText.com, Inc.},
  urldate = {2025-04-22},
  langid = {english},
  pmid = {29714937}
}

@article{perng2019EarlyLifeExposure,
  title = {Early Life Exposure in Mexico to ENvironmental Toxicants ({{ELEMENT}}) Project},
  author = {Perng, Wei and {Tamayo-Ortiz}, Marcela and Tang, Lu and S{\'a}nchez, Brisa N. and Cantoral, Alejandra and Meeker, John D. and others},
  year = {2019},
  month = aug,
  journal = {BMJ Open},
  volume = {9},
  number = {8},
  pages = {e030427},
  publisher = {{British Medical Journal Publishing Group}},
  urldate = {2022-11-27},
  chapter = {Public health},
  copyright = {\textcopyright{} Author(s) (or their employer(s)) 2019. Re-use permitted under CC BY-NC. No commercial re-use. See rights and permissions. Published by BMJ.. This is an open access article distributed in accordance with the Creative Commons Attribution Non Commercial (CC BY-NC 4.0) license, which permits others to distribute, remix, adapt, build upon this work non-commercially, and license their derivative works on different terms, provided the original work is properly cited, appropriate credit is given, any changes made indicated, and the use is non-commercial. See:~http://creativecommons.org/licenses/by-nc/4.0/.},
  langid = {english},
  pmid = {31455712}
}

@article{avin2005identifiability,
  title={Identifiability of path-specific effects},
  author={Avin, Chen and Shpitser, Ilya and Pearl, Judea},
  year={2005}
}

@article{daniel2015causal,
  title={Causal mediation analysis with multiple mediators},
  author={Daniel, Rhian M and De Stavola, Bianca L and Cousens, Simon N and Vansteelandt, Stijn},
  journal={Biometrics},
  volume={71},
  number={1},
  pages={1--14},
  year={2015},
  publisher={Wiley Online Library}
}

@article{lin2017interventional,
  title={Interventional approach for path-specific effects},
  author={Lin, Sheng-Hsuan and VanderWeele, Tyler},
  journal={Journal of Causal Inference},
  volume={5},
  number={1},
  pages={20150027},
  year={2017},
  publisher={De Gruyter}
}

@article{zhou2022semiparametric,
  title={Semiparametric estimation for causal mediation analysis with multiple causally ordered mediators},
  author={Zhou, Xiang},
  journal={Journal of the Royal Statistical Society Series B: Statistical Methodology},
  volume={84},
  number={3},
  pages={794--821},
  year={2022},
  publisher={Oxford University Press}
}

@article{he2024adaptive,
  title={Adaptive bootstrap tests for composite null hypotheses in the mediation pathway analysis},
  author={He, Yinqiu and Song, Peter XK and Xu, Gongjun},
  journal={Journal of the Royal Statistical Society Series B: Statistical Methodology},
  volume={86},
  number={2},
  pages={411--434},
  year={2024},
  publisher={Oxford University Press US}
}

@article{chen2024quantile,
  title={Quantile Mediation Analytics},
  author={Chen, Canyi and He, Yinqiu and Wang, Huixia J and Xu, Gongjun and Song, Peter X-K},
  journal={arXiv preprint arXiv:2412.15401},
  year={2024}
}

@article{vanderweele2010odds,
  title={Odds ratios for mediation analysis for a dichotomous outcome},
  author={VanderWeele, Tyler J and Vansteelandt, Stijn},
  journal={American journal of epidemiology},
  volume={172},
  number={12},
  pages={1339--1348},
  year={2010},
  publisher={Oxford University Press}
}

@article{gaynor2019mediation,
  title={Mediation analysis for common binary outcomes},
  author={Gaynor, Sheila M and Schwartz, Joel and Lin, Xihong},
  journal={Statistics in medicine},
  volume={38},
  number={4},
  pages={512--529},
  year={2019},
  publisher={Wiley Online Library}
}

@article{sobel1982AsymptoticConfidenceIntervals,
  title = {Asymptotic Confidence Intervals for Indirect Effects in Structural Equation Models},
  author = {Sobel, Michael E.},
  year = {1982},
  journal = {Sociological Methodology},
  volume = {13},
  eprint = {270723},
  eprinttype = {jstor},
  pages = {290},
  urldate = {2023-07-19}
}

@article{fritz2007RequiredSampleSize,
  ids = {fritz2007RequiredSampleSizea},
  title = {Required {{Sample Size}} to {{Detect}} the {{Mediated Effect}}},
  author = {Fritz, Matthew S. and MacKinnon, David P.},
  year = {2007},
  month = mar,
  journal = {Psychological science},
  volume = {18},
  number = {3},
  pages = {233--239},
  urldate = {2023-09-10},
  langid = {english},
  pmcid = {PMC2843527},
  pmid = {17444920}
}

@article{all2019all,
  title = {The "All of Us" Research Program},
  author = {{All of Us Research Program Investigators}},
  journal = {New England Journal of Medicine},
  volume = {381},
  number = {7},
  pages = {668--676},
  year = {2019},
  publisher = {Massachusetts Medical Society}
}

@article{van2010role,
  title={Role of adiposity and lifestyle in the relationship between family history of diabetes and 20-year incidence of type 2 diabetes in US women},
  author={van't Riet, Esther and Dekker, Jacqueline M and Sun, Qi and Nijpels, Giel and Hu, Frank B and van Dam, Rob M},
  journal={Diabetes care},
  volume={33},
  number={4},
  pages={763--767},
  year={2010},
  publisher={American Diabetes Association}
}

@article{meigs2000parental,
  title={Parental transmission of type 2 diabetes: the Framingham Offspring Study.},
  author={Meigs, James B and Cupples, L Adrienne and Wilson, PW},
  journal={Diabetes},
  volume={49},
  number={12},
  pages={2201--2207},
  year={2000}
}

@article{valdez2007family,
  title={Family history and prevalence of diabetes in the US population: the 6-year results from the National Health and Nutrition Examination Survey (1999--2004)},
  author={Valdez, Rodolfo and Yoon, Paula W and Liu, Tiebin and Khoury, Muin J},
  journal={Diabetes care},
  volume={30},
  number={10},
  pages={2517--2522},
  year={2007},
  publisher={American Diabetes Association}
}

@article{mackinnon2002comparison,
  title={A comparison of methods to test mediation and other intervening variable effects.},
  author={MacKinnon, David P and Lockwood, Chondra M and Hoffman, Jeanne M and West, Stephen G and Sheets, Virgil},
  journal={Psychological methods},
  volume={7},
  number={1},
  pages={83},
  year={2002},
  publisher={American Psychological Association}
}

@article{barfield2017testing,
  title={Testing for the indirect effect under the null for genome-wide mediation analyses},
  author={Barfield, Richard and Shen, Jincheng and Just, Allan C and Vokonas, Pantel S and Schwartz, Joel and Baccarelli, Andrea A and VanderWeele, Tyler J and Lin, Xihong},
  journal={Genetic epidemiology},
  volume={41},
  number={8},
  pages={824--833},
  year={2017},
  publisher={Wiley Online Library}
}

@article{dai2022multiple,
  title={A multiple-testing procedure for high-dimensional mediation hypotheses},
  author={Dai, James Y and Stanford, Janet L and LeBlanc, Michael},
  journal={Journal of the American Statistical Association},
  volume={117},
  number={537},
  pages={198--213},
  year={2022},
  publisher={Taylor \& Francis}
}

@article{liu2022large,
  title={Large-scale hypothesis testing for causal mediation effects with applications in genome-wide epigenetic studies},
  author={Liu, Zhonghua and Shen, Jincheng and Barfield, Richard and Schwartz, Joel and Baccarelli, Andrea A and Lin, Xihong},
  journal={Journal of the American Statistical Association},
  volume={117},
  number={537},
  pages={67--81},
  year={2022},
  publisher={Taylor \& Francis}
}

@article{liu2025simple,
  title={A simple and powerful method for large-scale composite null hypothesis testing with applications in mediation analysis},
  author={Liu, Yaowu},
  journal={Biometrics},
  volume={81},
  number={1},
  pages={ujaf011},
  year={2025},
  publisher={Oxford University Press}
}

@article{roy2024subsampling,
  title={Subsampling-based Tests in Mediation Analysis},
  author={Roy, Asmita and Zhou, Huijuan and Zhao, Ni and Zhang, Xianyang},
  journal={arXiv preprint arXiv:2411.10648},
  year={2024}
}

@article{halfon2014lifecourse,
  title={Lifecourse health development: past, present and future},
  author={Halfon, Neal and Larson, Kandyce and Lu, Michael and Tullis, Ericka and Russ, Shirley},
  journal={Maternal and child health journal},
  volume={18},
  pages={344--365},
  year={2014},
  publisher={Springer}
}

@article{heckman2006case,
  title={The case for investing in early childhood},
  author={Heckman, James and Tremblay, Richard},
  journal={A snapshot of research by University of Chicago, USA \& University of Montreal, Canada. Sydney: The Smith Family Research and Development},
  year={2006}
}

@article{newey1994large,
  title={Large sample estimation and hypothesis testing},
  author={Newey, Whitney K and McFadden, Daniel},
  journal={Handbook of econometrics},
  volume={4},
  pages={2111--2245},
  year={1994},
  publisher={Elsevier}
}
\newpage
\begin{figure}[H]
    \centering
    \includegraphics[width=0.7\linewidth]{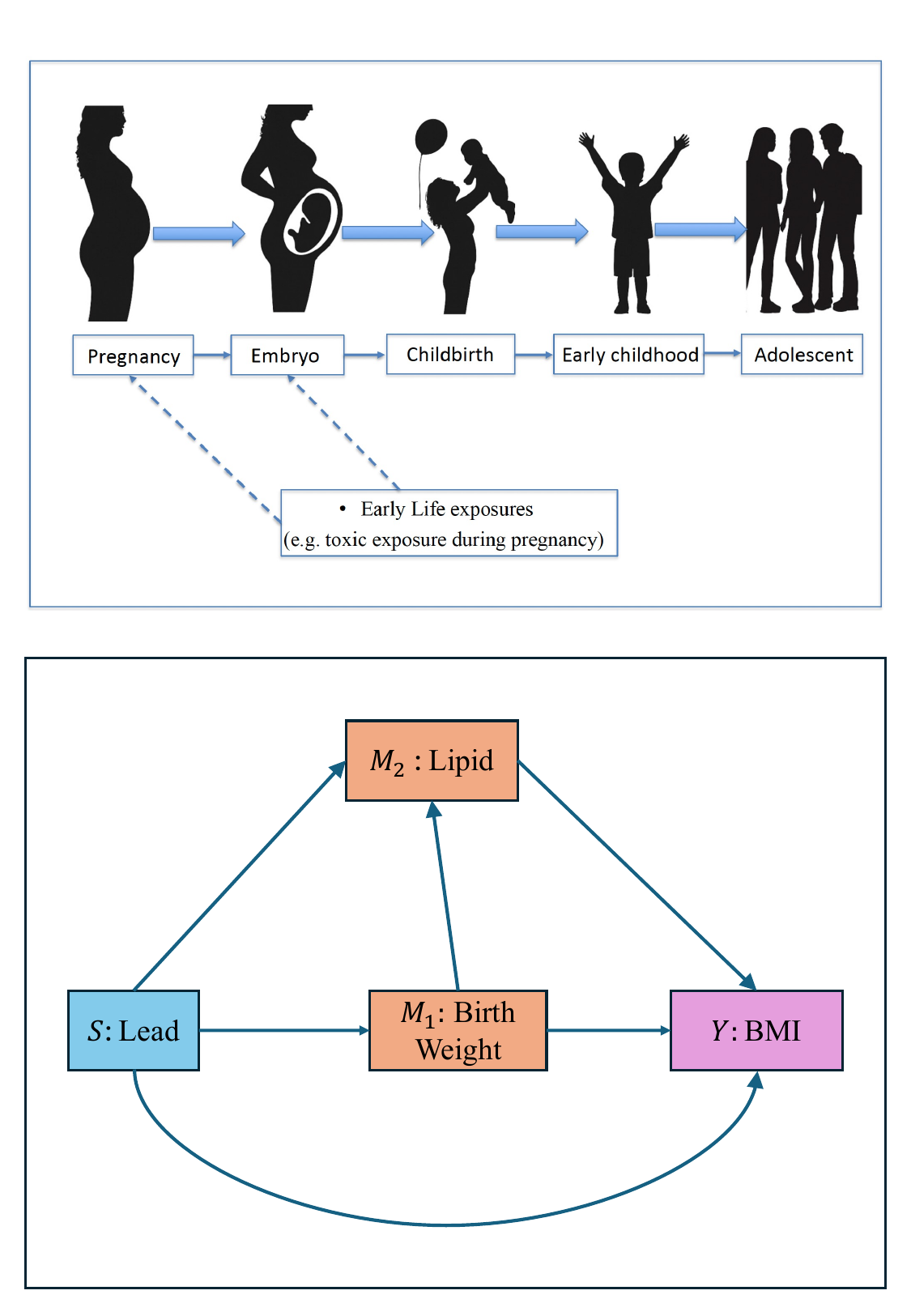}
    \caption{The top panel depicts the Developmental Origins of Health and Disease (DOHaD) hypothesis that involves some key milestones of human growth and development: pregnancy, birth, early childhood, adolescence, and adulthood. During such journey early-life exposures, such as those occurring during pregnancy, potentially shape health trajectories across the lifespan. The bottom panel shows a directed acyclic graph with two sequentially ordered mediators, $M_1$ (birth weight) and $M_2$ (lipids during adolescence) through which prenatal exposure $S$ (lead) influences outcome of BMI $Y$.}
    \label{fig:dohad}
\end{figure}

\begin{figure}[H]
	\centerline{%\renewcommand{\arraystretch}{0.8} %<- modify value to suit your needs
		\begin{tabular}{c}			
       \includegraphics[width=7in]{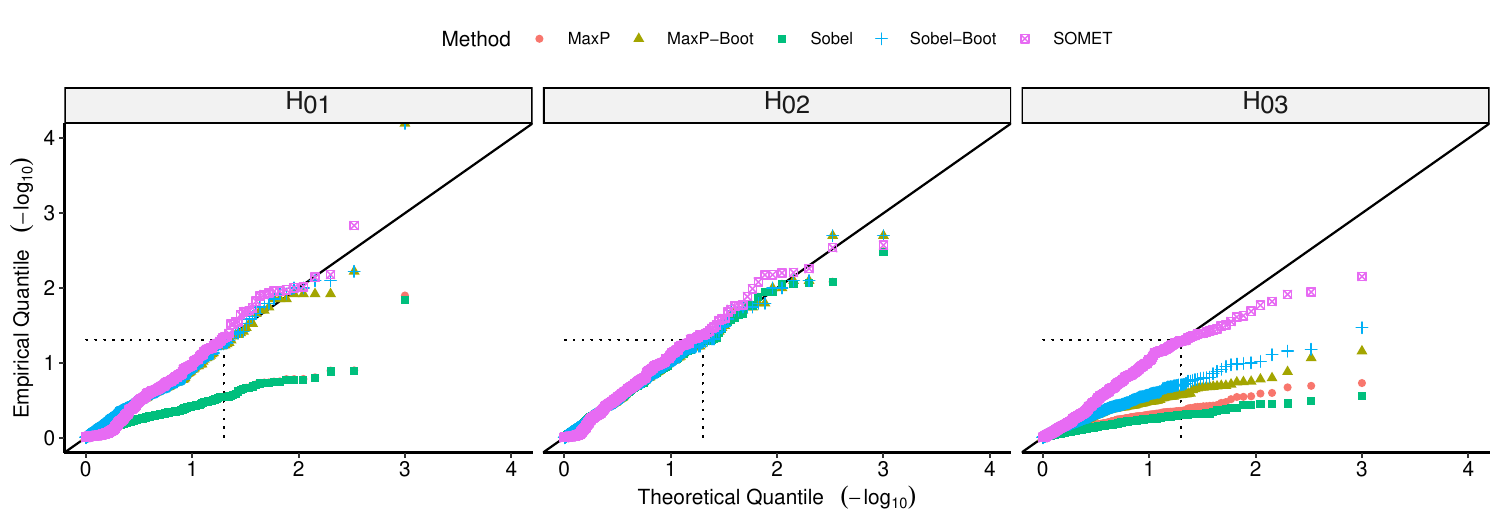}\\
        (a) $S\to M_1\leadsto Y$\\
        \includegraphics[width=7in]{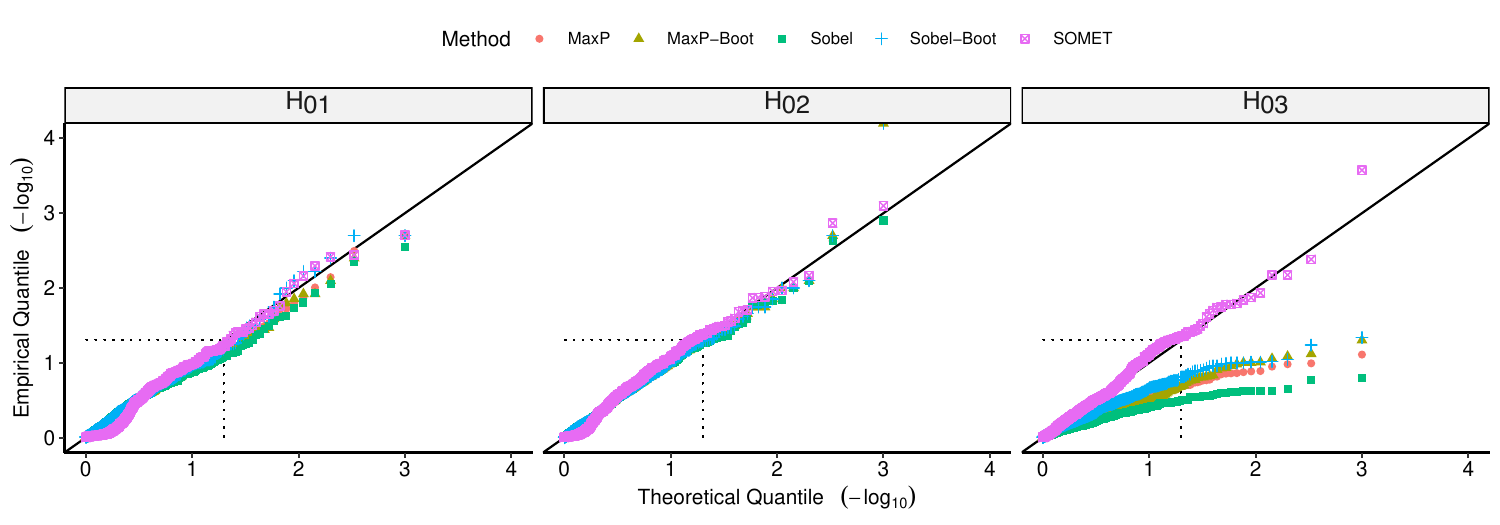}\\
        (b) $S\to M_2\leadsto Y$\\
        \includegraphics[width=7in]{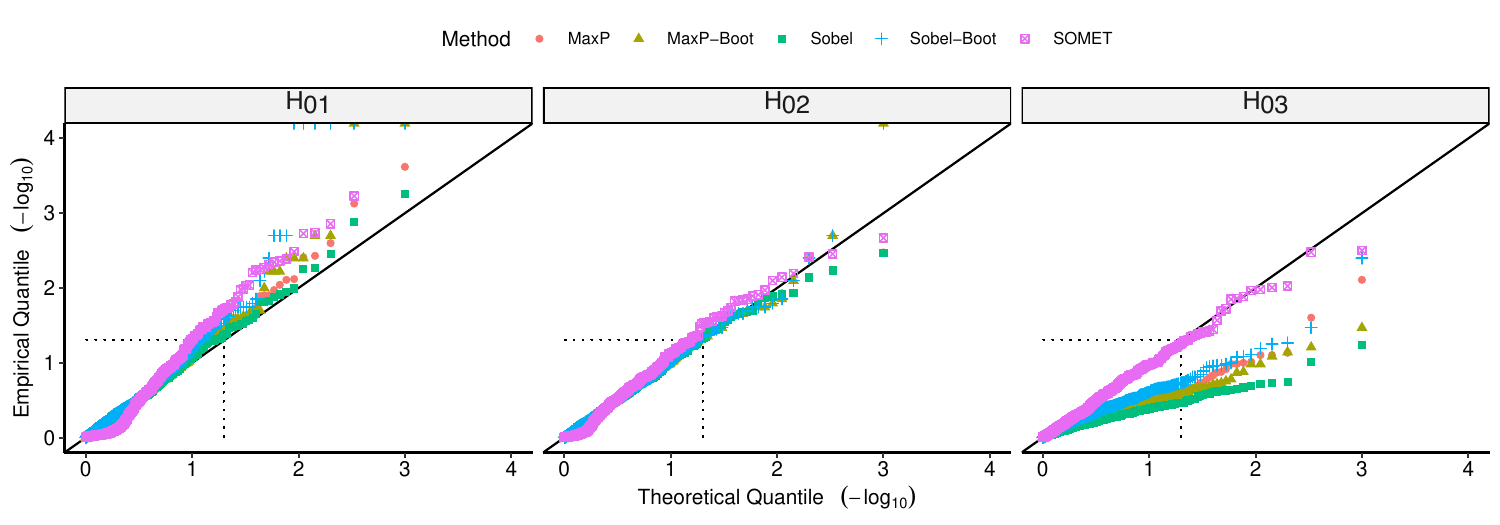}\\
        (c) $S \to M_3\leadsto Y$
		\end{tabular}
	}
	\captionsetup{font = footnotesize}
	\caption{Quantile-quantile plots of $p$-values obtained by five tests over 500 replicates under the three nulls, where data are generated by the linear SCOM(3) model and sample size $n = 2000$. The dotted line marks the $0.05$ nominal level.}
	\label{fig:qqplot_three_mediators}
\end{figure}
\begin{figure}[H]
	\centerline{%\renewcommand{\arraystretch}{0.8} %<- modify value to suit your needs
		\begin{tabular}{c}			
                \includegraphics[width=7in]{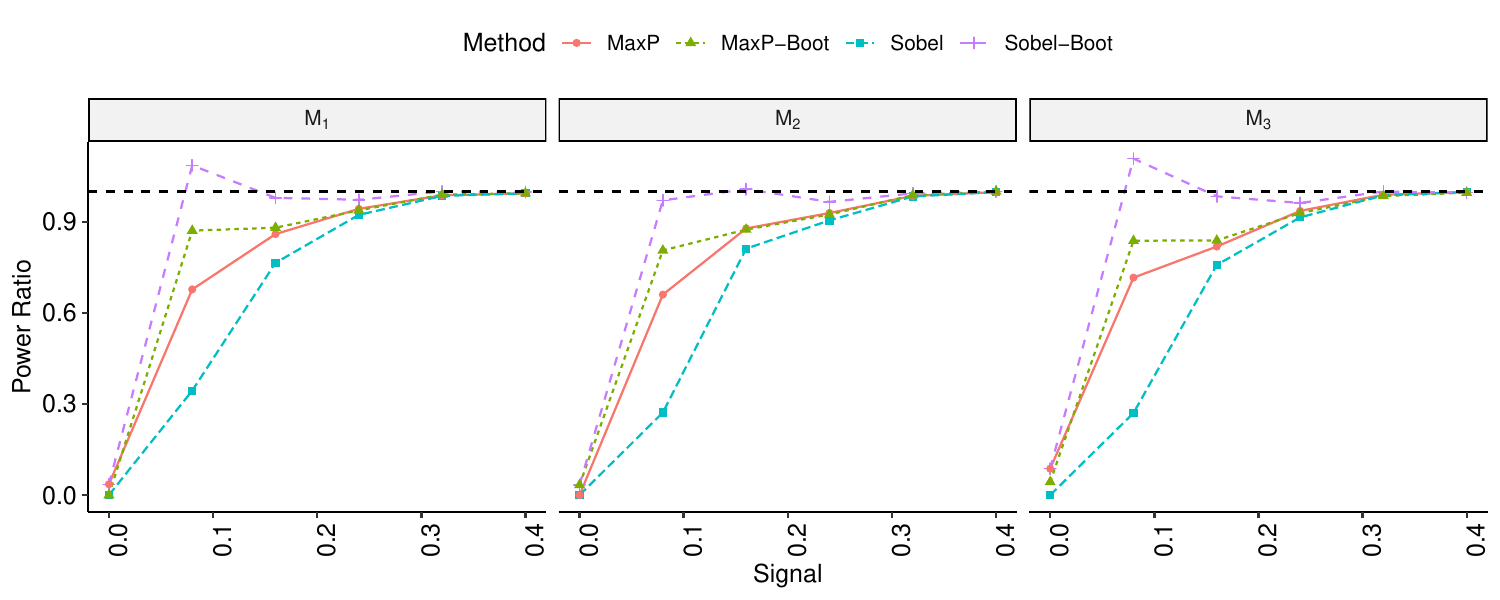}\\
        (a)  Signal strength  $\alpha_k = \beta_k$ \\
                \includegraphics[width=7in]{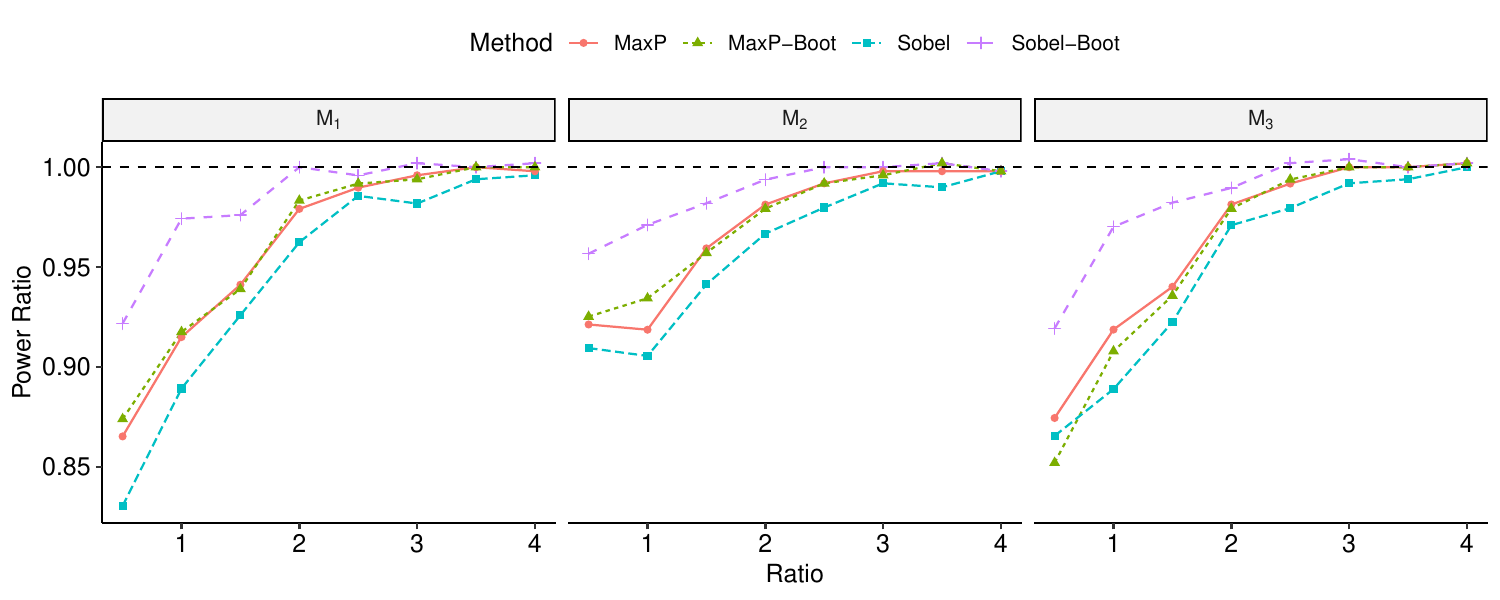}\\
        (b) Ratio $\alpha_k/\beta_k$
		\end{tabular}
	}
	\captionsetup{font = footnotesize}
	\caption{Empirical power ratio of the competing test versus the proposed test SOMET over the signal strength  $\alpha_k = \beta_k$ in panel (a) and over the ratio $\alpha_k/\beta_k$ in panel (b). The composite mediation pathway $S\rightarrow M_k\leadsto Y$, $k=1,2,3$ is tested from the left ($k=1$) to right ($k=3$).}
	\label{fig:power_three_mediators}
\end{figure}

\newpage
\section*{Supplementary Materials}
\appendix
\renewcommand{\thesection}{\Alph{section}}
\renewcommand{\thetheorem}{\Alph{section}.\arabic{theorem}}
\renewcommand{\theequation}{\Alph{section}.\arabic{equation}}
\renewcommand{\thefigure}{\Alph{section}.\arabic{figure}}
\renewcommand{\thetable}{\Alph{section}.\arabic{table}}

\section{Proof of Main Results}
\subsection{Proof of Theorem 1}\label{sec:prthm1}
\begin{proof}
$\Gamma(s_0, \boldsymbol{s}|\boldsymbol W)$ can be written as
   \begin{align*}
       &\mathbb{E}[Y(s_0, \overline{M}_K(\boldsymbol{s}))|\boldsymbol W]=\int_{\text{supp}(\overline{\boldsymbol M}_{K})}\mathbb{E}[Y(s_0, \overline{\boldsymbol m}_{K})| \overline{M}_K(\boldsymbol{s})=\overline{\boldsymbol m}_{K},\boldsymbol W]dF_{\overline{M}_K(\boldsymbol{s})}(\overline{\boldsymbol m}_{K}|\boldsymbol W)
   \end{align*}
Using Assumption 4(v) the above expression simplifies to,
\begin{align*}
    &\int_{\text{supp}(\overline{\boldsymbol M}_{K})}\mathbb{E}[Y(s_0, \overline{\boldsymbol m}_{K})|\boldsymbol W]dF_{\overline{M}_K(\boldsymbol{s})|\boldsymbol W}(m_1,\cdots,m_K)=\\
    & \int_{\text{supp}(M_1)}\left[\cdots\left[\int_{\text{supp}(M_K)}\mathbb{E}[Y(s_0, \overline{\boldsymbol m}_{K})|\boldsymbol W]dF_{M_K(s_K,\overline{\boldsymbol m}_{K-1})}(m_K|M_{K-1}(s_{K-1},\overline{\boldsymbol m}_{K-2}),\cdots,M_1(s_1),\boldsymbol W)\right]\right.\\
    &\hspace{4cm}\left.\cdots\right]dF_{M_1(s_1)}(m_1|\boldsymbol W)  
\end{align*}
Using Assumptions 4(i) and 4(ii) to the first part of the integral which consists of potential outcome distribution and Assumptions 4(iv), then 4(iii) to mediator distributions, the above expressions simplifies to
\begin{align*}
    &\int_{\text{supp}(M_1)}\left[\cdots\left[\int_{\text{supp}(M_K)}\mathbb{E}[Y(s_0, \overline{\boldsymbol m}_{K})|\overline{\boldsymbol M}_{K}=\overline{\boldsymbol m}_{K},S=s_0,\boldsymbol W]\right.\right.\\
    &\quad\left.\left.dF_{M_K(s_K,\overline{\boldsymbol m}_{K-1})}(m_K|\overline{\boldsymbol M}_{K-1}=\overline{\boldsymbol m}_{K-1},S=s_K,\boldsymbol W)\right]\cdots\right]dF_{M_1(s_1)}(m_1|S=s_1,\boldsymbol W)  
\end{align*}
Using the Assumption 3, we obtain the final expression as,
\begin{align*}
    &\int_{\text{supp}(M_1)}\left[\cdots\left[\int_{\text{supp}(M_K)}\mathbb{E}[Y|\overline{\boldsymbol M}_{K}=\overline{\boldsymbol m}_{K},S=s_0,\boldsymbol W]dF_{M_K}(m_K|\overline{\boldsymbol M}_{K-1}=\overline{\boldsymbol m}_{K-1},S=s_K,\boldsymbol W)\right]\cdots\right]\\
    & \hspace{4cm}dF_{M_1}(m_1|S=s_1,\boldsymbol W)\\
    &=\int_{\text{supp}(\overline{\boldsymbol M}_{K})}\mathbb{E}(Y \mid S=s_0, \overline{\boldsymbol M}_{K}=\overline{\boldsymbol m}_{K}, \boldsymbol W)  
\prod_{k=1}^K
dF_{M_k}(m_k\mid S=s_k, \overline{\boldsymbol M}_{k-1}=\overline{\boldsymbol m}_{k-1}, \boldsymbol W)
\end{align*}
This proves Theorem 1.
\end{proof}
\subsection{Proof of Corollary 1}\label{sec:prco1}
\begin{proof}
    First we start with the innermost integral in Theorem 1 which is given by:
\begin{align*}
    &\int_{\text{supp}(M_K)}\mathbb{E}[Y|S=s_0,\overline{\boldsymbol M}_{K}=\overline{\boldsymbol m}_{K},\boldsymbol W]dF_{M_K}(m_K|S=s_K,\overline{\boldsymbol M}_{K-1}=\overline{\boldsymbol m}_{K-1},\boldsymbol W)\\
    &=\int_{\text{supp}(M_k)}\left(\gamma_S s_0+\sum_{k=1}^K\beta_{k}m_k +\boldsymbol\delta_{Y}^T\boldsymbol W\right)dF_{M_K}(m_K|S=s_K,\overline{\boldsymbol M}_{K-1}=\overline{\boldsymbol m}_{K-1},\boldsymbol W)\\
    &=\mathbb{E}_{M_k}\left(\gamma_S s_0+\sum_{k=1}^K\beta_{k}m_k +\boldsymbol\delta_{Y}^T\boldsymbol W|S=s_K,\overline{\boldsymbol M}_{K-1}=\overline{\boldsymbol m}_{K-1},\boldsymbol W\right)\\
    &=\gamma_S s_0+\sum_{k=1}^{K-1}\beta_{k}m_k +\boldsymbol\delta_{Y}^T\boldsymbol W + \beta_{K}(\alpha_K s_K + \sum_{k=1}^{K-1}\xi_{K|k}m_k+\boldsymbol\delta_{M_K}^T\boldsymbol W)\\
    &=(\gamma_S s_0 + \beta_{K}\alpha_K s_K)+\sum_{k=1}^{K-1}\left(\beta_{k} + \beta_{K}\xi_{K|k}\right)m_k+ (\boldsymbol\delta_{Y}^T+\beta_{K}\boldsymbol\delta_{M_K}^T)\boldsymbol W
\end{align*}
To derive the subsequent integrals, we first define \( \eta_{M_k \leadsto Y} \) as the collapsed effect from \( M_k \) to \( Y \), encompassing all possible pathways through which \( M_k \) influences \( Y \) for \( k = 1,2, \dots, K \). Hence $\eta_{M_K \leadsto Y}=\beta_{K}$. Next we derive the expression of $\eta_{M_{K-1} \leadsto Y}$
\begin{align*}
    \eta_{M_{K-1} \leadsto Y}=\beta_{K}+\xi_{K|K-1}\eta_{M_K\leadsto Y}
\end{align*}
Similarly for any general $k<K$, one can write the expression for  $\eta_{M_k \leadsto Y}$ as
\begin{align*}
    \eta_{M_k \leadsto Y}=\beta_{k} + \sum_{j=k+1}^K\xi_{j|k}\eta_{M_j\leadsto Y}
\end{align*}
Hence using this notation and the above technique for subsequent integrals, we finally obtain the expression for $\Gamma(s_0, \boldsymbol{s}|\boldsymbol W)$ as
\begin{align*}
  \Gamma(s_0, \boldsymbol{s}|\boldsymbol W)=\gamma_S s_0+ \sum_{k=1}^K\alpha_k\eta_{M_k\leadsto Y}s_k + \left(\boldsymbol\delta_{Y}^T+\sum_{k=1}^K \eta_{M_k\leadsto Y}\boldsymbol\delta_{M_k}^T\right)\boldsymbol W \cdot
\end{align*}
\end{proof}
\subsection{Proof of Corollary 2}\label{sec:prcr2}
\begin{proof}
 We start with the expression of $\Gamma(s_0, \boldsymbol{s}|\boldsymbol W)$. The innermost integral in $\Gamma(s_0, \boldsymbol{s}|\boldsymbol W)$ is simplified by,
\begin{align*}
    &\int_{\text{supp}(M_K)}\text{P}[Y=1|S=s_0,M_1=m_1,\cdots,M_K=m_K,\boldsymbol W]dF_{M_K}(m_K|S=s_K,\overline{\boldsymbol M}_{K-1}=\overline{\boldsymbol m}_{K-1},\boldsymbol W)\\
    &=\int_{\text{supp}(M_K)}\Phi\left(\gamma_S s_0+\sum_{k=1}^K\beta_{k}m_k +\boldsymbol\delta_{Y}^T\boldsymbol w\right)f_{M_K}(m_K|S=s_K,\overline{\boldsymbol M}_{K-1}=\overline{\boldsymbol m}_{K-1},\boldsymbol W)dm_K
\end{align*}
Let $\mu_{M_k}=\alpha_k s_k + \sum_{j=1}^{k-1}\xi_{k|j}m_j+\boldsymbol\delta_{M_k}^T\boldsymbol W$, for every $k>1$ and $\mu_{1}=\alpha_1 s_1+\boldsymbol\delta_{M_1}^T\boldsymbol W$. Let $\phi(.)$ denote the pdf of a standard normal distribution. Using these notations, the above integral can be written as,
\begin{align*}
    \frac{1}{\sigma_{M_k}}\int_{\text{supp}(M_K)}\Phi\left(\gamma_S s_0+\sum_{k=1}^K\beta_{k}m_k +\boldsymbol\delta_{Y}^T\boldsymbol W\right)\phi\left(\frac{m_K-\mu_{M_K}}{\sigma_{M_K}}\right)dm_K
\end{align*}
Let $z=\frac{m_K-\mu_{M_K}}{\sigma_{M_K}}$. Hence $m_K=\mu_{M_K}+\sigma_{M_K}z$. Using this substitution, one can write the above integral as,
\begin{align*}
\int_{z}\Phi\left(\gamma_S s_0+\sum_{k=1}^{K-1}\beta_{k}m_k + \beta_{K}\mu_{M_K}+\boldsymbol\delta_{Y}^T\boldsymbol W+\beta_{K}\sigma_{M_K}z\right)\phi(z)dz
\end{align*}
By a well known result, we obtain that $\int_{-\infty}^{\infty}\Phi(a+bz)\phi(z)dz=\Phi\left(\frac{a}{\sqrt{1+b^2}}\right)$. Using this result, the above integral equals,
\begin{align*}
    &\Phi\left(\frac{\gamma_S s_0 + \sum_{k=1}^{K-1} \beta_{k} m_k +\beta_{K}\mu_{M_K}+ \boldsymbol{\delta}_Y^T \boldsymbol{W}}{\sqrt{1 + \beta_{K}^2 \sigma_{M_K}^2}}\right)\\
    &=\Phi\left(\frac{\gamma_S s_0 + \sum_{k=1}^{K-1} \beta_{k} m_k +\beta_{K}(\alpha_K s_K + \sum_{k=1}^{K-1}\xi_{K|k}m_k+\boldsymbol\delta_{M_K}^T\boldsymbol W)+ \boldsymbol{\delta}_Y^T \boldsymbol{W}}{\sqrt{1 + \beta_{K}^2 \sigma_{M_K}^2}}\right)\\
    &=\Phi\left(\frac{\gamma_S s_0 +\beta_{K}\alpha_K s_K + \sum_{k=1}^{K-1}(\beta_{k} +\beta_{K}\xi_{K|k})m_k+(\boldsymbol{\delta}_Y^T+\beta_{K}\delta_{M_K}^T)\boldsymbol W}{\sqrt{1 + \beta_{K}^2 \sigma_{M_K}^2}}\right)
\end{align*}
Hence using the same technique for subsequent integrals that we employed for the previous two cases, the final expression for  $\Gamma(s_0, \boldsymbol{s})$ is obtained as,
\begin{align*}
    \Gamma(s_0, \boldsymbol{s})=\Phi\left(\frac{\gamma_S s_0+ \sum_{k=1}^K\alpha_k\eta_{M_k\leadsto Y}s_k + \left(\boldsymbol\delta_{Y}^T+\sum_{k=1}^K \eta_{M_k\leadsto Y}\boldsymbol\delta_{M_k}^T\right)\boldsymbol W }{\sqrt{1+\sum_{k=1}^K\eta_{M_k\leadsto Y}^2\sigma^2_{M_{k}}}}\right)
\end{align*}   
\end{proof}
\subsection{Proof of Corollary 3}\label{sec:prcr3}
\begin{proof}
    To derive the expression of $\Gamma(s_0, \boldsymbol{s}|\boldsymbol W)$, we start simplifying with innermost integral given by,
\begin{align*}
   &\int_{\text{supp}(M_k)}\text{P}[Y=1|S=s_0,\overline{\boldsymbol M}_{K}=\overline{\boldsymbol m}_{K},\boldsymbol W]dF_{M_K}(m_K|S=s_K,\overline{\boldsymbol M}_{K-1}=\overline{\boldsymbol m}_{K-1},\boldsymbol W)\\
   &=\int_{\text{supp}(M_k)}\text{exp}(\gamma_S s_0+\sum_{k=1}^K\beta_{k}m_k +\boldsymbol\delta_{Y}^T\boldsymbol W)f_{M_K}(m_K|S=s_K,\overline{\boldsymbol M}_{K-1}=\overline{\boldsymbol m}_{K-1},\boldsymbol W)dm_K\\
   &=\text{exp}(\gamma_S s_0+\sum_{k=1}^{K-1}\beta_{M_{k}}m_{k} +\boldsymbol\delta_{Y}^T\boldsymbol w)\int_{\text{supp}(M_k)}\text{exp}(\beta_{M_{K}}m_{K})f_{M_K}(m_K|S=s_K,\overline{\boldsymbol M}_{K-1}=\overline{\boldsymbol m}_{K-1},\boldsymbol W)dm_K\\
   &=\text{exp}\left[\gamma_S s_0+\sum_{k=1}^{K-1}\beta_{M_{k}}m_{k} +\boldsymbol\delta_{Y}^T\boldsymbol W+\beta_{K}\alpha_K s_K + \sum_{k=1}^{K-1}\beta_{M_{K}}\xi_{K|k}m_k+\beta_{M_{K}}\boldsymbol\delta_{M_K}^T\boldsymbol W+\frac{1}{2}\beta^2_{M_{K}}\sigma^2_{M_{K}}\right]\\
   &=\text{exp}\left[(\gamma_S s_0 + \beta_{K}\alpha_K s_K) +\sum_{k=1}^{K-1}\left(\beta_{k} + \beta_{K}\xi_{K|k}\right)m_k+ (\boldsymbol\delta_{Y}^T+\beta_{K}\boldsymbol\delta_{M_K}^T)\boldsymbol W+\frac{1}{2}\beta^2_{M_{K}}\sigma^2_{M_{K}}\right]
\end{align*}
Hence using the same technique that we employed for the linear case, the final expression for  $\Gamma(s_0, \boldsymbol{s})$ as
\begin{align*}
     \Gamma(s_0, \boldsymbol{s})=\text{exp}\left[\gamma_S s_0+ \sum_{k=1}^K\alpha_k\eta_{M_k\leadsto Y}s_k + \left(\boldsymbol\delta_{Y}^T+\sum_{k=1}^K \eta_{M_k\leadsto Y}\boldsymbol\delta_{M_k}^T\right)\boldsymbol W +\frac{1}{2}\sum_{k=1}^K\eta_{M_k\leadsto Y}^2\sigma^2_{M_{k}}\right]\cdot
\end{align*}
\end{proof}
\subsection{Proof of Theorem 2}\label{sec:prthm2}
\begin{proof}
    It follows immediately from Equation (14) of the main text that for each \( k = 1, 2, \ldots, K \),
\begin{equation}
    n^{1/2}(\wh \alpha_k - \alpha^*_k)\todistribution\calN(0,\sigma_{\alpha_k}^2) \quad \text{as} \quad n\rightarrow \infty \cdot\label{eq:alpha}
\end{equation}
where $\sigma_{\alpha_k}^2=\text{var}(Z_{\alpha_k})$. Since $\eta^*_{M_K\leadsto Y}=\beta^*_{K}$ and $\widehat{\eta}_{M_K \leadsto Y} = \widehat{\beta}_{K}$, we write, $n^{1/2}(\wh \eta_{M_K\leadsto Y} - \eta^*_{M_K\leadsto Y})  = n^{1/2}(\wh \beta_{K} - \beta^*_{K})\todistribution Z_{\eta_{M_K\leadsto Y}}\sim \calN(0, \sigma_{\beta_{K}}^2)$, according to from Equation (14) of the main text. Since $\eta^*_{M_{K-1} \leadsto Y}=\beta^*_{K-1}+\wh\xi_{K|K-1}\wh\eta_{M_K\leadsto Y}$ and $\eta^*_{M_{K-1} \leadsto Y}=\beta^*_{K-1}+\xi^*_{K|K-1}\eta^*_{M_K\leadsto Y}$, the use of first order delta method leads to
\beqrs
n^{1/2}(\wh \eta_{M_{K-1}\leadsto Y} - \eta^*_{M_{K - 1}\leadsto Y}) \todistribution Z_{\beta_{K - 1}} + \xi^*_{K\mid K - 1} Z_{\eta_{M_K\leadsto Y}} + \eta_{M_K\leadsto Y} Z_{\xi_{K\mid K - 1}}\defn Z_{\eta_{M_{K - 1}\leadsto Y}}\sim \mathcal{N}(0,\sigma_{\eta_{M_{K - 1}\leadsto Y}}^2),
\eeqrs
Repeatedly using recursion for any $k<K$, we can obtain
\beqrs
n^{1/2}(\wh \eta_{M_{k}\leadsto Y} - \eta^*_{M_{k}\leadsto Y}) \todistribution Z_{\beta_k} + \sum_{j = k + 1}^K(\xi^*_{j\mid k} Z_{\eta_{M_j\leadsto Y}} + \eta^*_{M_j\leadsto Y} Z_{\xi_{j\mid k}})\defn Z_{\eta_{M_{k}\leadsto Y}}\sim \mathcal{N}(0,\sigma_{\eta_{M_{k}\leadsto Y}}^2)\cdot
\eeqrs
Next we derive the asymptotic $\sigma_{\eta_{M_k\leadsto Y}}^2$, $k=\{1,2,\cdots,K\}$. First the likelihood orthogonality implies that (i) $Z_{\alpha_k}\independent (Z_{\beta_{M_j}},  Z_{\xi_{j\mid k}})$ for any $j\geq k$, (ii) $Z_{\xi_{j\mid k}}\independent (Z_{\beta_{k}}, \ldots, Z_{\beta_{K}})$ for any $j\geq k$, and (iii) $Z_{\xi_{j\mid k}}\independent Z_{\xi_{j^\prime\mid k}}$ for any $j,j^\prime\geq k$ and $j\neq j^\prime$. For each \( k \in \{1, 2, \ldots, K\} \) and any $j\geq k$, let $ \sigma^2_{\beta_{k}}$ and $\sigma^2_{\xi_{j\mid k}}$ be the non-null variances of $Z_{\beta_{k}}$ and $Z_{\xi_{j\mid k}}$ respectively.
First we obtain that $\sigma_{\eta_{M_K\leadsto Y}}^2 = \sigma_{\beta_{K}}^2$. Next, for $k = K - 1$, 
\beqrs
\sigma_{\eta_{M_{K - 1}\leadsto Y}}^2 = \sigma_{\beta_{K - 1}}^2 + (\xi^*_{K\mid K - 1})^2 \sigma_{\eta_{M_K\leadsto Y}}^2 + (\eta^*_{M_K\leadsto Y})^2\sigma_{\xi_{K\mid K - 1}}^2 + 2\xi^*_{K\mid K - 1}\cov(Z_{\beta_{K-1}}, Z_{\beta_{K}}).
\eeqrs
Likewise we have for any $k<K$,
\beqrs
\sigma_{\eta_{M_{k}\leadsto Y}}^2 = \sigma_{\beta_{k}}^2 + \sum_{j = k + 1}^K\{(\xi^*_{j\mid k})^2 \sigma_{\eta_{M_j\leadsto Y}}^2 + (\eta^*_{M_j\leadsto Y})^2\sigma_{\xi_{j\mid k}}^2 + 2\xi_{j\mid k}\cov(Z_{\beta_{k}}, Z_{\beta_{j}})\}.
\eeqrs
All the limiting variances $\sigma_{\eta_{M_{k}\leadsto Y}}^2$, $k=1,2,\cdots,K$ are positive because the covariance matrix of $(Z_{\alpha_k}, Z_{\xi_{j\mid k}}, Z_{\beta_k})_{k = 1, j\geq k}^K$ is positive definite. Hence we obtain
\begin{equation}
    n^{1/2}
\begin{pmatrix}
    \wh \alpha_k\\
    \wh \eta_{M_k\leadsto Y}
\end{pmatrix}
\todistribution
\calN
\left\{
\begin{pmatrix}
    \alpha^*_k\\
    \eta^*_{M_k\leadsto Y}
\end{pmatrix},
\begin{pmatrix}
    \sigma_{\alpha_k}^2  & 0\\
    0 & \sigma_{\eta_{M_k\leadsto Y}}^2
\end{pmatrix}
\right\}\cdot \label{eq:finasymp}
\end{equation}
\end{proof}

\subsection{Proof of Lemma 1}\label{sec:prlm1}

\begin{proof}[Proof of Lemma 1]
The result in Theorem 2 and the first-order delta method together yields  
\beqrs
n^{1/2} \wh \alpha_k \wh \eta_{M_k\leadsto Y} = n^{1/2} (\wh \alpha_k \wh \eta_{M_k\leadsto Y} -  \alpha_k \eta_{M_k\leadsto Y}) \todistribution \mathcal{N} \left(0, \alpha_k^2 \sigma^2_{\eta_{M_k\leadsto Y}} + \eta_{M_k\leadsto Y}^2 \sigma^2_{\alpha_k} \right).
\eeqrs  
Applying Slutsky’s theorem, it follows that $T_{S\to M_k\leadsto Y} \todistribution \mathcal{N}(0,1)$. \\

\noindent
However, when $(\alpha_k, \eta_{M_k\leadsto Y}) = (0,0)$, the first-order delta method is no longer applicable. Instead, noting that $T_{\alpha_k} \todistribution \mathcal{N}(0,1)$ and $T_{\eta_{M_k\leadsto Y}} \todistribution \mathcal{N}(0,1)$ are asymptotically independent, we apply the continuous mapping theorem to obtain  
\beqrs
T_{S\to M_k\leadsto Y} \todistribution \frac{Z_1 Z_2}{(Z_1^2 + Z_2^2)^{1/2}},
\eeqrs  
where $Z_1$ and $Z_2$ are independent standard normal variables. By the properties of independent normal distributions, the right-hand side follows a $\mathcal{N}(0,1/4)$ distribution.
\end{proof}

% \newpage
\section{Additional Simulations, Tables and Figures}
\subsection{Probit and Logistic SEM-SCOM(3) for Binary Outcomes}\label{sec:binsimu}
The data is simulated from model (18) of the main text, where the binary outcome $Y$ is generated from a probit model. Specifically, we assume $(\epsilon_{M_1},\epsilon_{M_2},\epsilon_{M_3}) \sim \text{MVN}(\mathbf{0}_3,\,0.25^2 \mathcal{I}_{3\times3})$. All parameters are kept the same as in the baseline setup, except that we set $\gamma_S = -2$. In addition, we consider an alternative scenario where the binary outcome $Y$ is generated from a logistic model.\\

\noindent
\textit{Estimation.} Tables \ref{table:estimation:logistic} and \ref{table:estimation:probit} and Figures \ref{fig:estimation_accuracy_logit} and \ref{fig:estimation_accuracy_probit} summarize the estimation results for logistic and probit outcome models respectively. Under both scenarios we observe that bias, MSE, and variance decrease with increasing sample size. The estimators attain a linear convergence rate under \(H_{01}\), \(H_{02}\), and \(H_{1}\) across all three SEMs. However, under \(H_{03}\), the convergence rate deteriorates due to singularity.\\

\noindent
\textit{Type I Error.} Figures~\ref{fig:qqplot_three_mediators_logit} and ~\ref{fig:qqplot_three_mediators_probit} show that  
quantile-quantile plots of $p$-values in $-\log_{10}$ scale under the three null hypotheses for logistic and probit outcome models respectively. These figures demonstrate that SOMET is the only method that can effectively control type I error under all three nulls, $H_{01},H_{02}$ and $H_{03}$.\\

\noindent
\textit{Statistical Power.}  
Figures~\ref{fig:power_three_mediators_logit} and ~\ref{fig:power_three_mediators_probit} display the empirical rejection rates over signal strength $\alpha_k = \beta_k$ and over ratio $\alpha_k / \beta_k$. These empirical power findings confirm that SOMET is superior over the other four methods in terms of statistical power.
\begin{table}[H]
\centering
\caption{The mean squared errors (MSE), bias, and variance of the mediation effect estimates under the linear structural equation model. The MSE, bias, and variance are computed as the relative to those under $n = 200$. The results are averaged over 2000 simulations.}
\label{table:estimation}
\begin{tabular}{cc ccc ccc ccc}
  \hline
    \hline
 % & null\_case & n & Relative\_MSE\_M.1 & Relative\_MSE\_M.2 & Relative\_MSE\_M.3 & Relative\_Bias\_M.1 & Relative\_Bias\_M.2 & Relative\_Bias\_M.3 & Relative\_Var\_M.1 & Relative\_Var\_M.2 & Relative\_Var\_M.3 \\ 
 & $n$ & \multicolumn{3}{c}{MSE} & \multicolumn{3}{c}{Bias} & \multicolumn{3}{c}{Variance} \\
\cline{3-5} \cline{6-8} \cline{9-11}
 & & $M_1$ & $M_2$ & $M_3$ & $M_1$ & $M_2$ & $M_3$ & $M_1$ & $M_2$ & $M_3$ \\
  \hline
 \multirow{4}{*}{$H_{01}$} & 400 & 0.47 & 0.43 & 0.48 & -0.69 & -1.00 & 0.25 & 0.47 & 0.43 & 0.48 \\ 
   & 600 & 0.30 & 0.29 & 0.32 & 0.10 & -0.43 & 1.48 & 0.30 & 0.29 & 0.32 \\ 
   & 800 & 0.23 & 0.21 & 0.24 & -0.16 & 0.12 & -0.78 & 0.23 & 0.21 & 0.24 \\ 
   & 1000 & 0.18 & 0.18 & 0.19 & 0.09 & -0.77 & 0.95 & 0.18 & 0.18 & 0.19 \\ 
   \hline
 \multirow{4}{*}{$H_{02}$} & 400 & 0.51 & 0.47 & 0.52 & 0.30 & 3.63 & 0.45 & 0.51 & 0.47 & 0.52 \\ 
   & 600 & 0.33 & 0.32 & 0.33 & 1.55 & 0.42 & 0.13 & 0.33 & 0.32 & 0.34 \\ 
   & 800 & 0.25 & 0.24 & 0.24 & 0.94 & 1.38 & 0.59 & 0.25 & 0.24 & 0.24 \\ 
   & 1000 & 0.20 & 0.19 & 0.20 & 0.27 & 0.40 & 0.07 & 0.20 & 0.19 & 0.20 \\
      \hline 
  \multirow{4}{*}{$H_{03}$}  & 400 & 0.24 & 0.25 & 0.27 & 0.12 & -1.01 & 1.06 & 0.24 & 0.25 & 0.27 \\ 
   & 600 & 0.11 & 0.10 & 0.12 & 0.06 & 1.23 & 0.18 & 0.11 & 0.10 & 0.12 \\ 
   & 800 & 0.06 & 0.06 & 0.07 & 0.02 & -0.53 & 0.12 & 0.06 & 0.06 & 0.07 \\ 
   & 1000 & 0.04 & 0.04 & 0.05 & 0.06 & -0.25 & 0.06 & 0.04 & 0.04 & 0.05 \\ 
      \hline
   \multirow{4}{*}{$H_{1}$} & 400 & 0.51 & 0.47 & 0.48 & 0.75 & 2.50 & 0.32 & 0.51 & 0.47 & 0.48 \\ 
   & 600 & 0.32 & 0.33 & 0.33 & 2.09 & 0.48 & 0.23 & 0.32 & 0.33 & 0.33 \\ 
   & 800 & 0.25 & 0.24 & 0.24 & 1.36 & 0.90 & 0.52 & 0.25 & 0.24 & 0.24 \\ 
   & 1000 & 0.20 & 0.18 & 0.20 & 0.26 & 0.34 & 0.38 & 0.20 & 0.18 & 0.20 \\ 
   \hline
         \hline
\end{tabular}
\end{table}

\begin{table}[H]
\centering
\caption{The mean squared errors (MSE), bias, and variance of the mediation effect estimates under the logistic structural equation model. The MSE, bias, and variance are computed as the relative to those under $n = 1000$. The results are averaged over 2000 simulations.}
\label{table:estimation:logistic}
\begin{tabular}{cc ccc ccc ccc}
  \hline
    \hline
 % & null\_case & n & Relative\_MSE\_M.1 & Relative\_MSE\_M.2 & Relative\_MSE\_M.3 & Relative\_Bias\_M.1 & Relative\_Bias\_M.2 & Relative\_Bias\_M.3 & Relative\_Var\_M.1 & Relative\_Var\_M.2 & Relative\_Var\_M.3 \\ 
 & $n$ & \multicolumn{3}{c}{MSE} & \multicolumn{3}{c}{Bias} & \multicolumn{3}{c}{Variance} \\
\cline{3-5} \cline{6-8} \cline{9-11}
 & & $M_1$ & $M_2$ & $M_3$ & $M_1$ & $M_2$ & $M_3$ & $M_1$ & $M_2$ & $M_3$ \\
  \hline
   \multirow{4}{*}{$H_{01}$} & 1500 & 0.58 & 0.67 & 0.63 & -4.21 & -0.58 & -0.65 & 0.58 & 0.67 & 0.63 \\ 
   & 2000 & 0.46 & 0.48 & 0.46 & -2.27 & -0.03 & -2.00 & 0.46 & 0.48 & 0.46 \\ 
   & 2500 & 0.38 & 0.40 & 0.37 & -1.53 & 0.16 & 0.35 & 0.38 & 0.40 & 0.37 \\ 
   & 3000 & 0.31 & 0.32 & 0.32 & 1.09 & -0.09 & -0.66 & 0.31 & 0.32 & 0.32 \\ 
         \hline
   \multirow{4}{*}{$H_{02}$} & 1500 & 0.64 & 0.66 & 0.59 & -1.37 & -0.10 & 0.33 & 0.64 & 0.66 & 0.59 \\ 
   & 2000 & 0.48 & 0.48 & 0.45 & 1.62 & -1.25 & 2.87 & 0.48 & 0.48 & 0.45 \\ 
   & 2500 & 0.37 & 0.39 & 0.37 & 1.25 & -0.21 & -1.74 & 0.37 & 0.39 & 0.37 \\ 
   & 3000 & 0.33 & 0.31 & 0.28 & -1.60 & 0.73 & 0.48 & 0.33 & 0.31 & 0.28 \\ 
         \hline
   \multirow{4}{*}{$H_{03}$} & 1500 & 0.40 & 0.43 & 0.39 & 0.39 & -54.36 & 0.04 & 0.40 & 0.43 & 0.39 \\ 
   & 2000 & 0.26 & 0.26 & 0.22 & 0.36 & 26.81 & -0.06 & 0.26 & 0.26 & 0.22 \\ 
   & 2500 & 0.15 & 0.18 & 0.15 & 0.01 & -1.34 & -0.17 & 0.15 & 0.18 & 0.15 \\ 
   & 3000 & 0.11 & 0.11 & 0.09 & -0.12 & -6.86 & 0.20 & 0.11 & 0.11 & 0.09 \\ 
         \hline
   \multirow{4}{*}{$H_{1}$} & 1500 & 0.57 & 0.61 & 0.61 & 0.62 & 0.59 & 0.59 & 0.58 & 0.63 & 0.62 \\ 
   & 2000 & 0.43 & 0.47 & 0.44 & 0.44 & 0.58 & 0.41 & 0.44 & 0.48 & 0.46 \\ 
   & 2500 & 0.34 & 0.38 & 0.38 & 0.34 & 0.40 & 0.42 & 0.35 & 0.39 & 0.39 \\ 
   & 3000 & 0.29 & 0.30 & 0.29 & 0.37 & 0.35 & 0.34 & 0.29 & 0.31 & 0.30 \\ 
   \hline
         \hline
\end{tabular}
\end{table}

\begin{table}[H]
\centering
\caption{The mean squared errors (MSE), bias, and variance of the mediation effect estimates under the probit structural equation model. The MSE, bias, and variance are computed as the relative to those under $n = 1000$. The results are averaged over 2000 simulations.}
\label{table:estimation:probit}
\begin{tabular}{cc ccc ccc ccc}
  \hline
    \hline
 % & null\_case & n & Relative\_MSE\_M.1 & Relative\_MSE\_M.2 & Relative\_MSE\_M.3 & Relative\_Bias\_M.1 & Relative\_Bias\_M.2 & Relative\_Bias\_M.3 & Relative\_Var\_M.1 & Relative\_Var\_M.2 & Relative\_Var\_M.3 \\ 
 & $n$ & \multicolumn{3}{c}{MSE} & \multicolumn{3}{c}{Bias} & \multicolumn{3}{c}{Variance} \\
\cline{3-5} \cline{6-8} \cline{9-11}
 & & $M_1$ & $M_2$ & $M_3$ & $M_1$ & $M_2$ & $M_3$ & $M_1$ & $M_2$ & $M_3$ \\
  \hline
   \multirow{4}{*}{$H_{01}$} & 1500 & 0.65 & 0.62 & 0.66 & -0.13 & 0.28 & 0.25 & 0.65 & 0.63 & 0.66 \\ 
   & 2000 & 0.48 & 0.47 & 0.50 & 1.80 & -0.23 & 0.41 & 0.48 & 0.48 & 0.50 \\ 
   & 2500 & 0.38 & 0.37 & 0.40 & -3.95 & -0.03 & -0.60 & 0.38 & 0.37 & 0.40 \\ 
   & 3000 & 0.31 & 0.30 & 0.31 & -1.46 & -0.12 & -0.04 & 0.31 & 0.30 & 0.31 \\ 
         \hline
   \multirow{4}{*}{$H_{02}$} & 1500 & 0.63 & 0.66 & 0.58 & -0.95 & -0.06 & -0.45 & 0.63 & 0.66 & 0.58 \\ 
   & 2000 & 0.48 & 0.48 & 0.44 & 0.95 & -1.13 & 2.69 & 0.48 & 0.48 & 0.44 \\ 
   & 2500 & 0.36 & 0.39 & 0.36 & 0.86 & -0.24 & -1.99 & 0.36 & 0.39 & 0.36 \\ 
   & 3000 & 0.33 & 0.31 & 0.28 & -0.91 & 0.65 & 0.66 & 0.33 & 0.31 & 0.28 \\ 
         \hline
   \multirow{4}{*}{$H_{03}$} & 1500 & 0.43 & 0.43 & 0.42 & 1.96 & -1.52 & 0.09 & 0.43 & 0.43 & 0.42 \\ 
   & 2000 & 0.24 & 0.23 & 0.23 & -0.70 & 2.01 & 0.19 & 0.24 & 0.23 & 0.23 \\ 
   & 2500 & 0.14 & 0.15 & 0.16 & 1.22 & -0.41 & -0.44 & 0.14 & 0.15 & 0.16 \\ 
   & 3000 & 0.10 & 0.10 & 0.09 & -0.78 & 0.33 & -0.15 & 0.10 & 0.10 & 0.09 \\ 
         \hline
   \multirow{4}{*}{$H_{1}$} & 1500 & 0.57 & 0.61 & 0.62 & 0.62 & 0.68 & 0.62 & 0.59 & 0.62 & 0.62 \\ 
   & 2000 & 0.41 & 0.44 & 0.46 & 0.51 & 0.64 & 0.65 & 0.43 & 0.45 & 0.46 \\ 
   & 2500 & 0.29 & 0.34 & 0.33 & 0.34 & 0.43 & 0.70 & 0.31 & 0.35 & 0.32 \\ 
   & 3000 & 0.24 & 0.28 & 0.29 & 0.29 & 0.35 & 0.40 & 0.26 & 0.28 & 0.29 \\ 
   \hline
         \hline
\end{tabular}
\end{table}

\begin{figure}[H]
	\centerline{%\renewcommand{\arraystretch}{0.8} %<- modify value to suit your needs
		\begin{tabular}{c}			
        \includegraphics[width=4.5in]{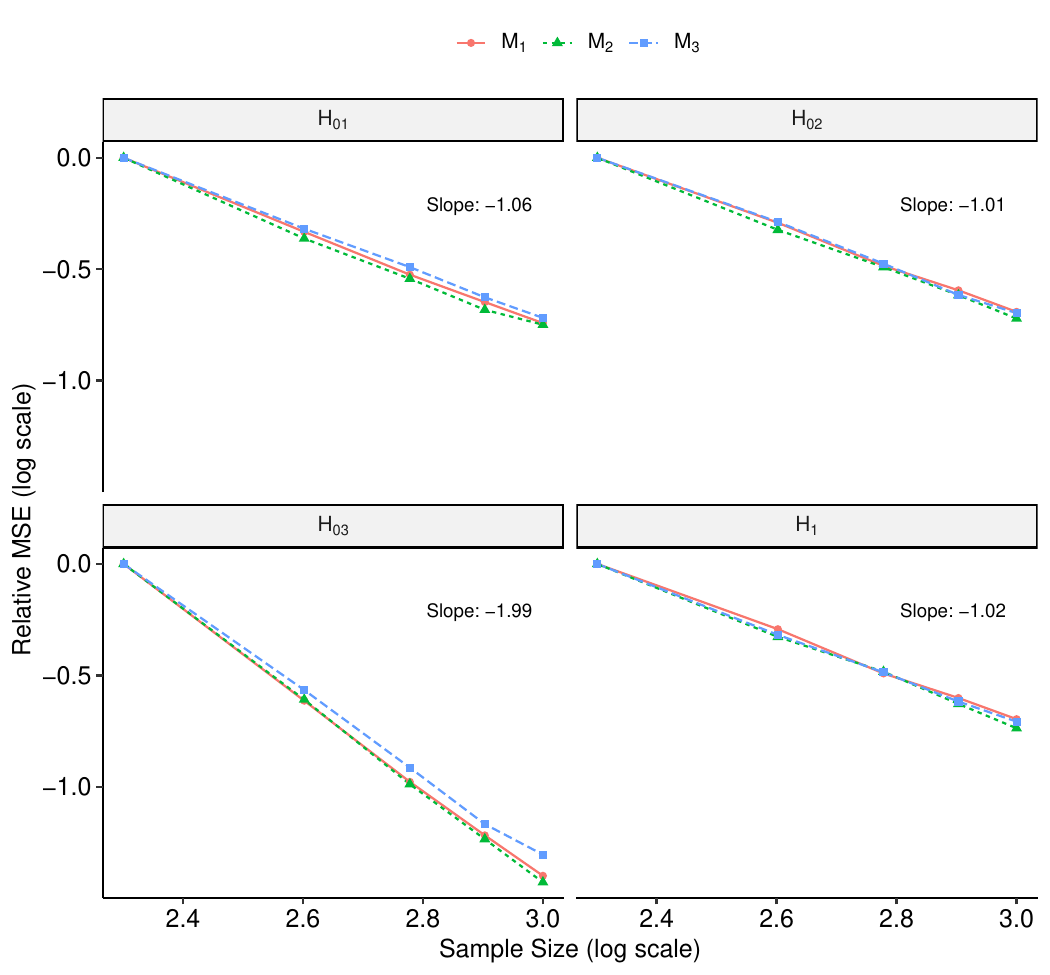}
		\end{tabular}
	}
	\captionsetup{font = footnotesize}
\caption{Mean squared errors (log scale) versus the sample size (log scale) under the linear structural equation model. }
	\label{fig:estimation_accuracy}
\end{figure}

\begin{figure}[H]
	\centerline{%\renewcommand{\arraystretch}{0.8} %<- modify value to suit your needs
		\begin{tabular}{c}			
        \includegraphics[width=4.5in]{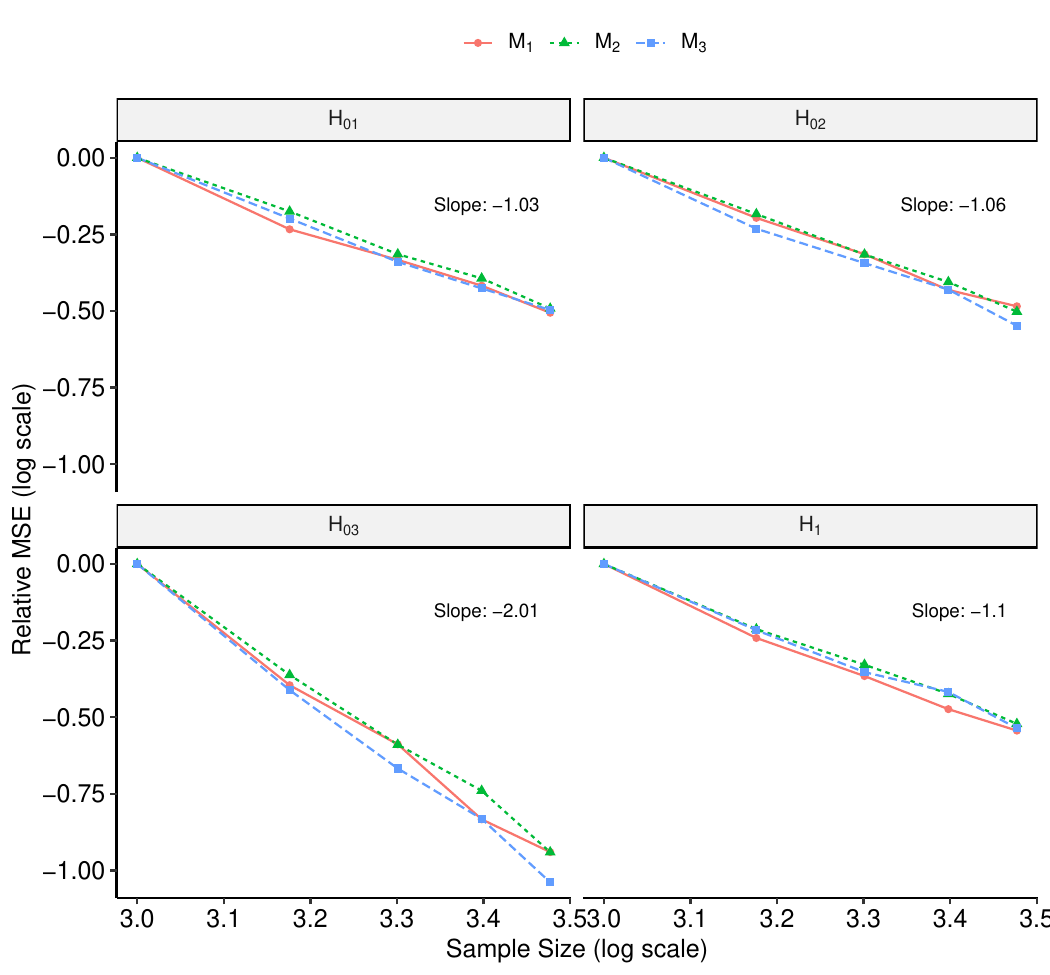}
		\end{tabular}
	}
	\captionsetup{font = footnotesize}
\caption{Mean squared errors (log scale) versus the sample size (log scale) under the logistic structural equation model.}
	\label{fig:estimation_accuracy_logit}
\end{figure}

\begin{figure}[H]
	\centerline{%\renewcommand{\arraystretch}{0.8} %<- modify value to suit your needs
		\begin{tabular}{c}			
        \includegraphics[width=4.5in]{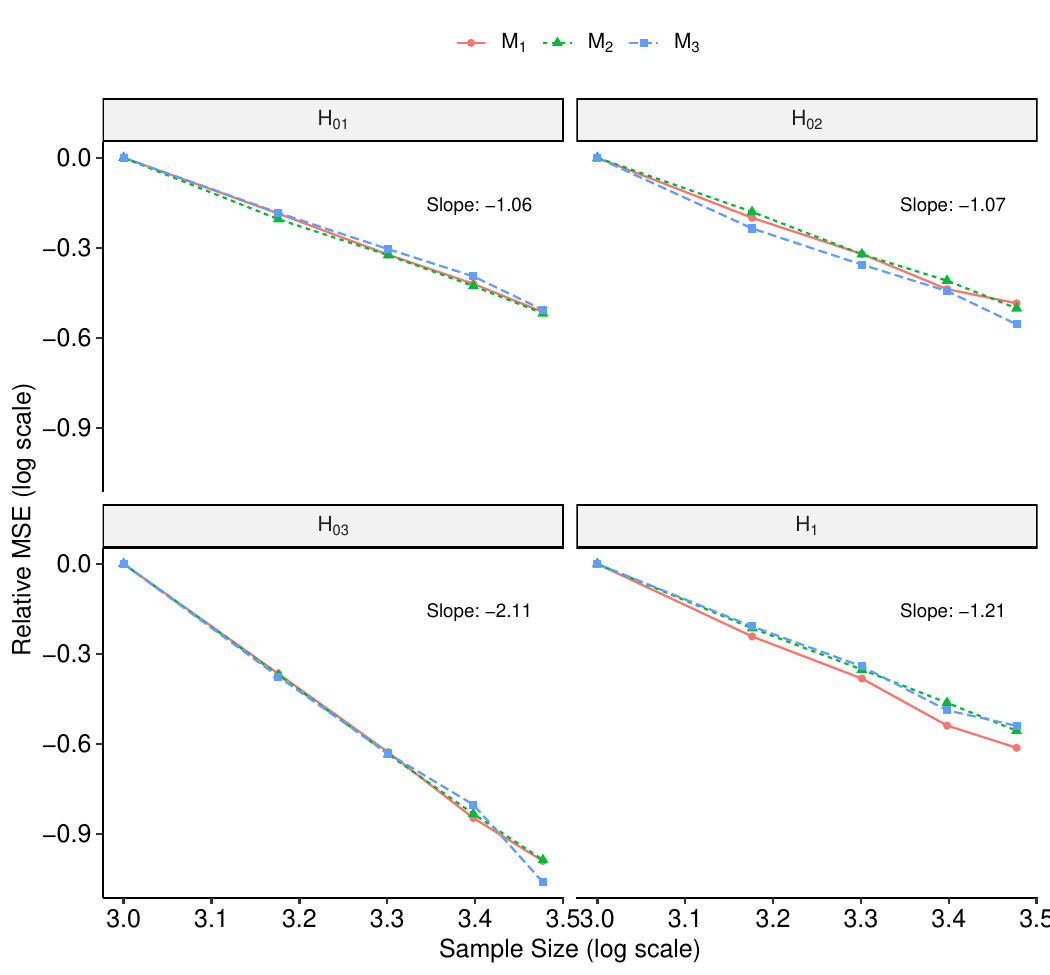}
		\end{tabular}
	}
	\captionsetup{font = footnotesize}
\caption{Mean squared errors (log scale) versus the sample size (log scale) under the probit structural equation model. }
	\label{fig:estimation_accuracy_probit}
\end{figure}
\newpage
\subsection{Additional Figures}\label{appendix:additional_figures}

\begin{figure}[H]
	\centerline{\renewcommand{\arraystretch}{0.8} %<- modify value to suit your needs
		\begin{tabular}{c}			
        \includegraphics[width=5.5in]{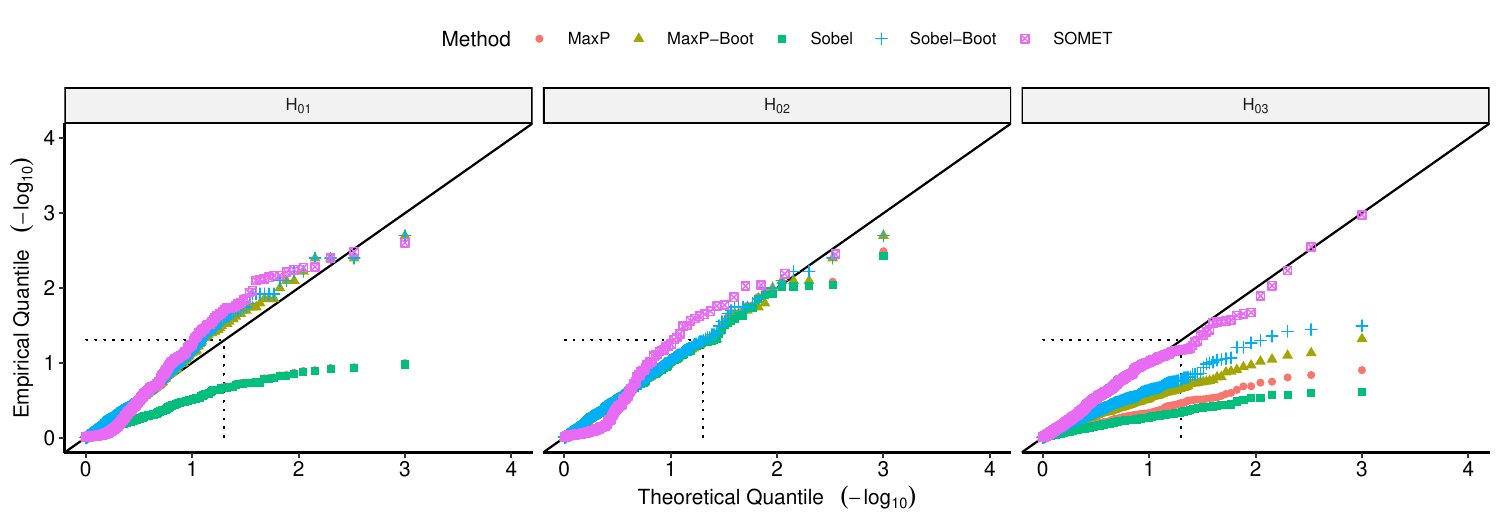}\\
        (a) $S\to M_1\leadsto Y$\\
        \includegraphics[width=5.5in]{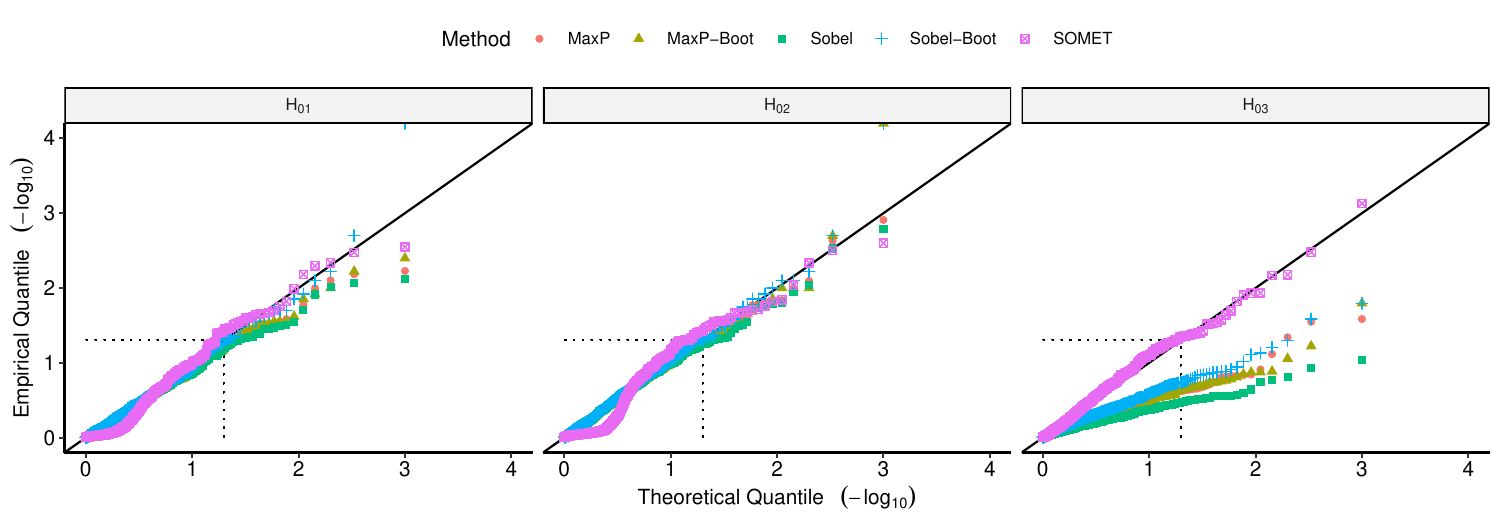}\\
       (b) $S\to M_2\leadsto Y$\\
        \includegraphics[width=5.5in]{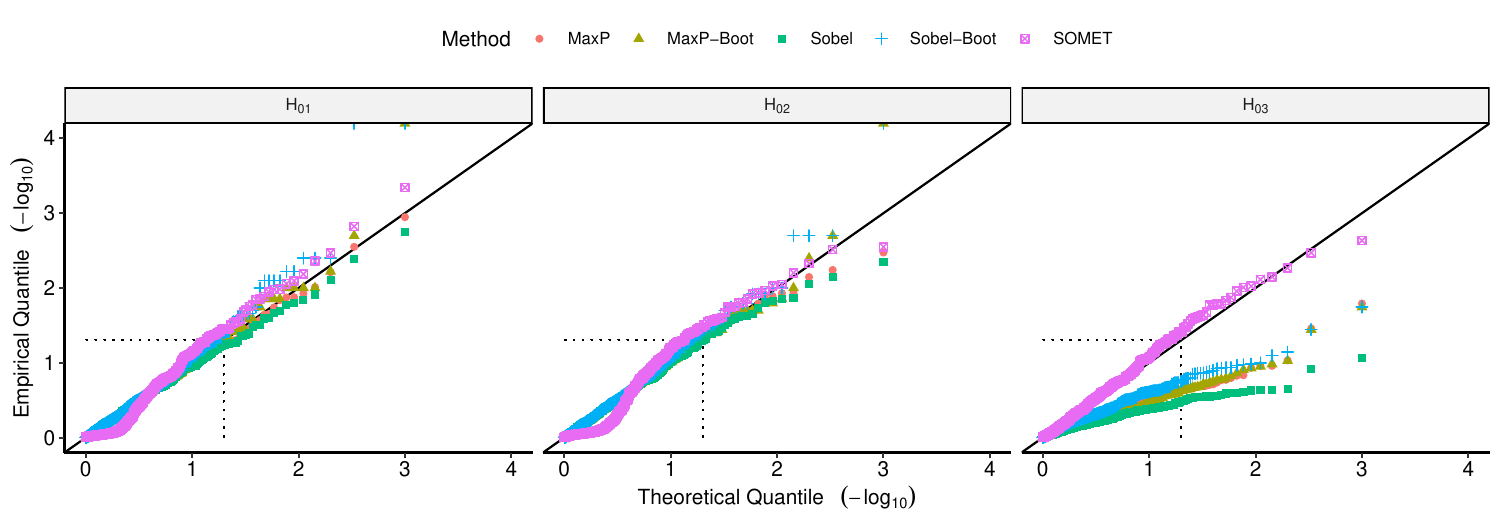}\\
(c)        $S \to M_3\leadsto Y$
		\end{tabular}
	}
	\captionsetup{font = footnotesize}
\caption{Quantile-quantile plots of $p$-values obtained by five tests over 500 replicates under the three nulls, where data are generated by Logistic SCOM(3) model and sample size $n = 2000$. The dotted line marks the $0.05$ nominal level.}
	\label{fig:qqplot_three_mediators_logit}
\end{figure}

\begin{figure}[H]
	\centerline{%\renewcommand{\arraystretch}{0.8} %<- modify value to suit your needs
		\begin{tabular}{c}			
        \includegraphics[width=5.5in]{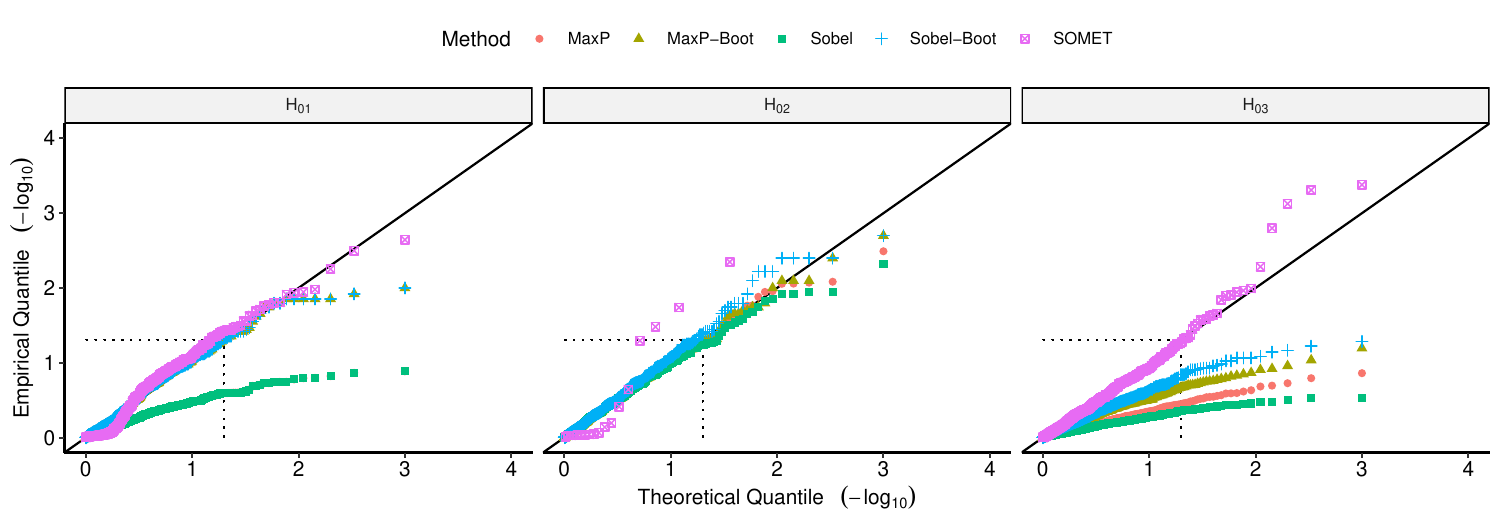}\\
(a)        $S\to M_1\leadsto Y$\\
        \includegraphics[width=5.5in]{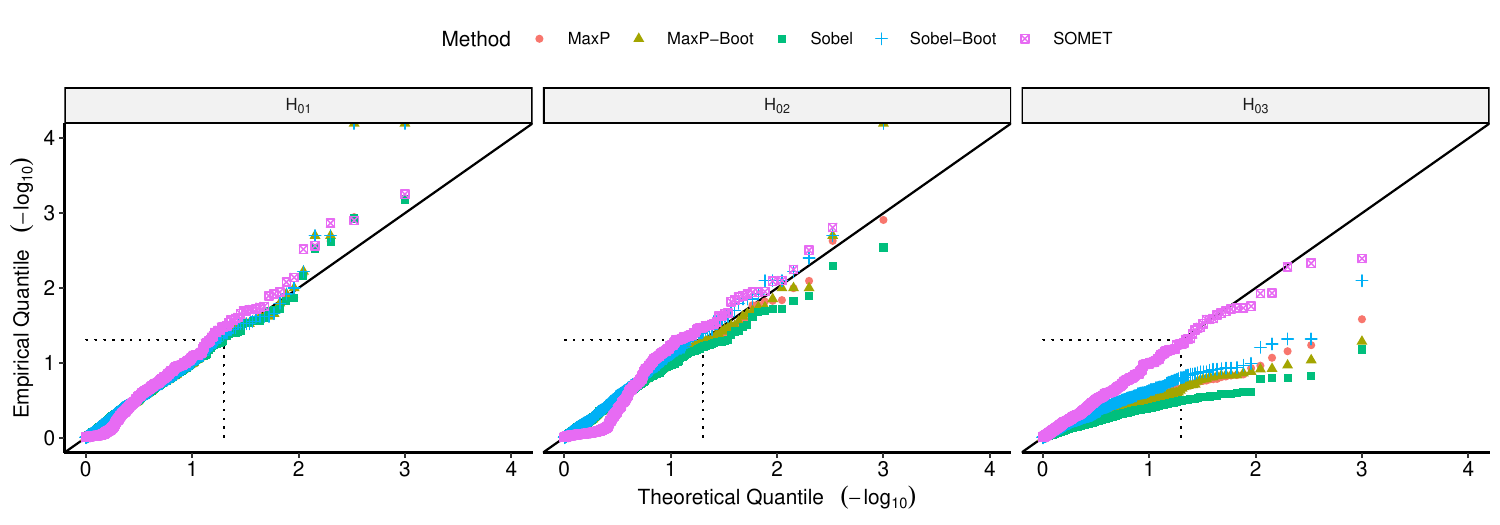}\\
   (b)     $S\to M_2\leadsto Y$\\
        \includegraphics[width=5.5in]{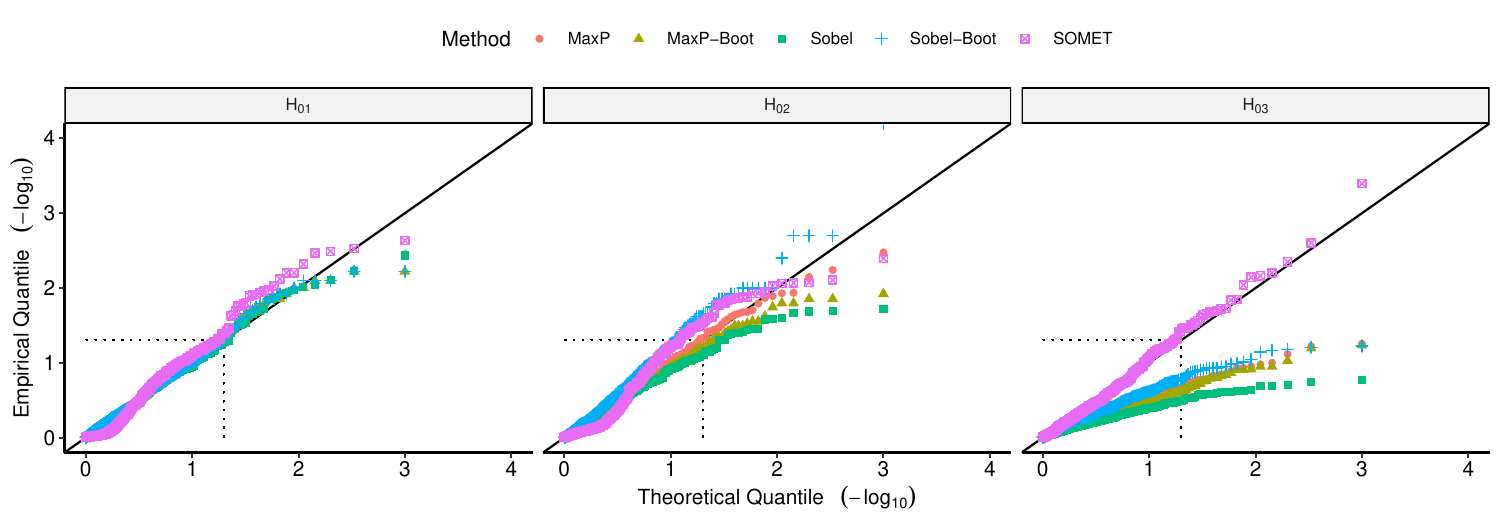}\\
     (c)   $S \to M_3\leadsto Y$
		\end{tabular}
	}
	\captionsetup{font = footnotesize}
\caption{Quantile-quantile plots of $p$-values obtained by five tests over 500 replicates under the three nulls, where data are generated by Probit SCOM(3) model and sample size $n = 2000$. The dotted line marks the $0.05$ nominal level.}
	\label{fig:qqplot_three_mediators_probit}
\end{figure}

\begin{figure}[H]
	\centerline{%\renewcommand{\arraystretch}{0.8} %<- modify value to suit your needs
		\begin{tabular}{c}			
        % \psfig{figure=sim_multiple_mediators_power_logit_qqplot_log_all_1_2000_a,width=5.5in,angle=0}\\
        % (a)  Signal strength  $\alpha_k = \beta_k$ \\
        % \psfig{figure=sim_multiple_mediators_power_logit_qqplot_log_all_1_2000_b,width=5.5in,angle=0}\\
        % (b) Ratio $\alpha_k/\beta_k$
        \includegraphics[width=5.5in]{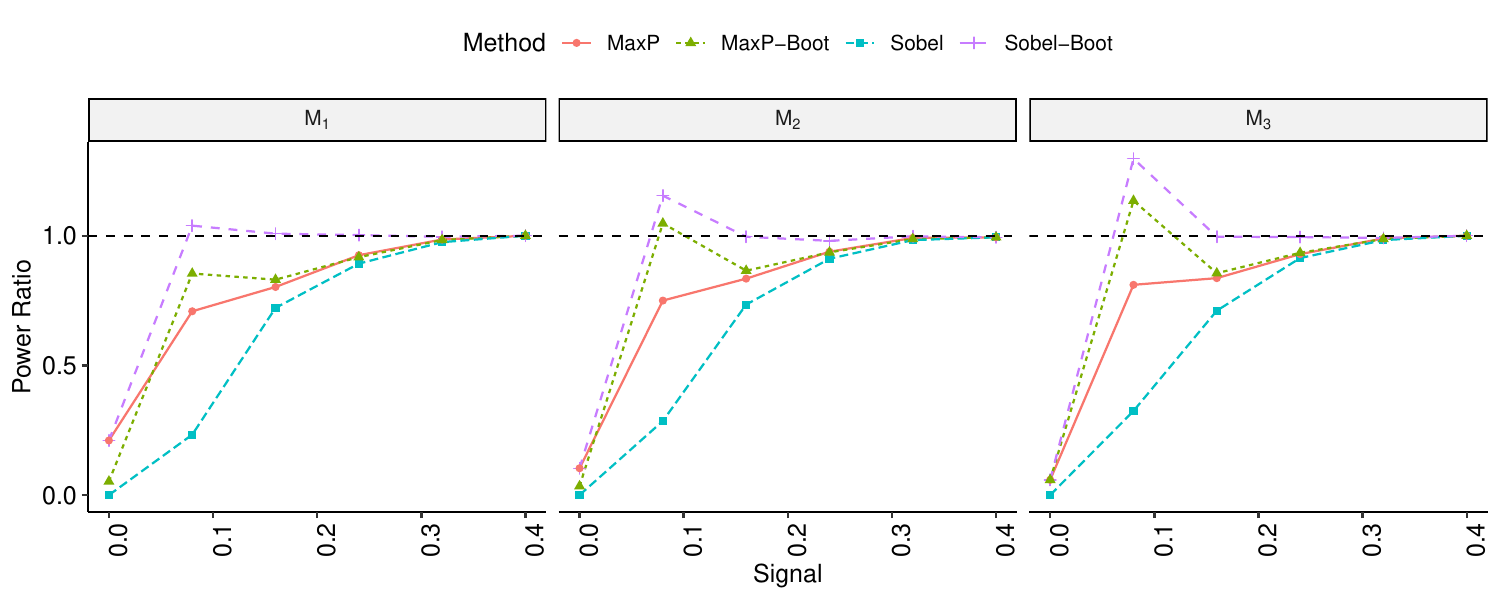}\\
        (a)  Signal strength  $\alpha_k = \beta_k$ \\
        \includegraphics[width=5.5in]{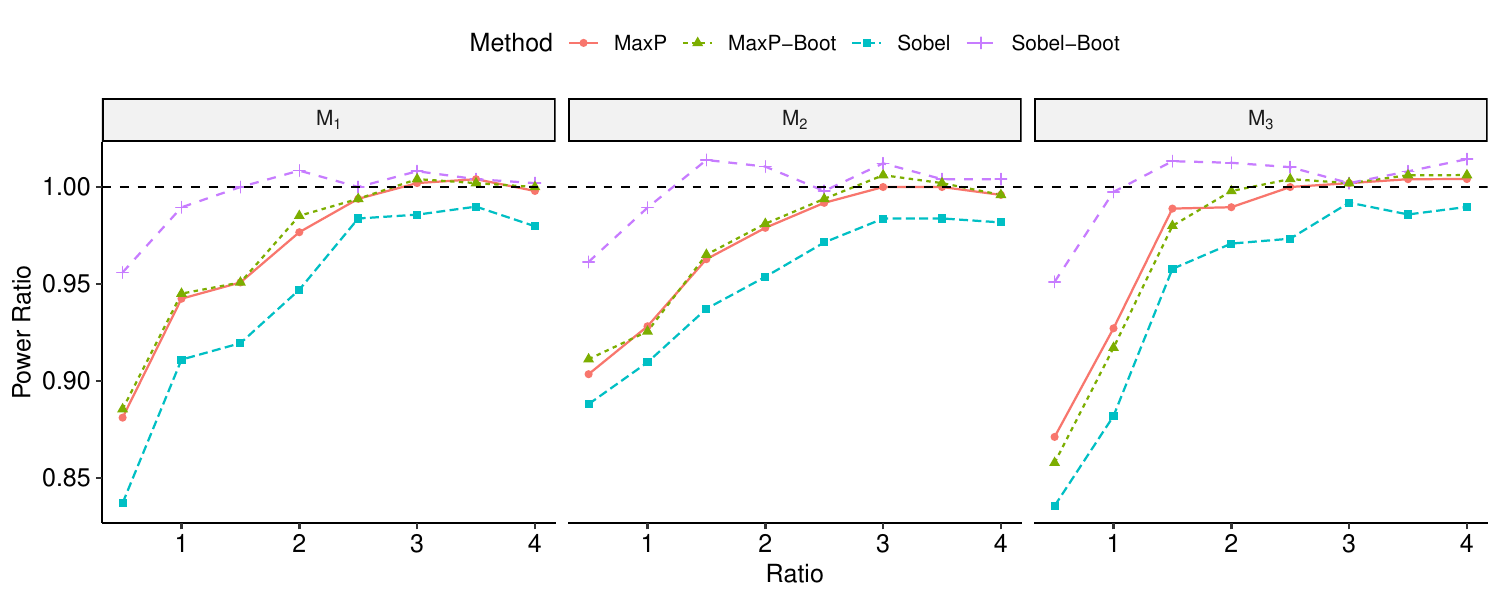}\\
        (b) Ratio $\alpha_k/\beta_k$
		\end{tabular}
	}
	\captionsetup{font = footnotesize}
	\caption{Empirical power ratio of the competing test versus the proposed test  over the signal strength  $\alpha_k = \beta_k$ in panel (a) and over the ratio $\alpha_k/\beta_k$ in panel (b). The composite mediation pathway $S\rightarrow M_k\leadsto Y$, $k=1,2,3$ is tested from the left ($k=1$) to right ($k=3$). The model is logistic SEM.}
	\label{fig:power_three_mediators_logit}
\end{figure}

\begin{figure}[H]
	\centerline{%\renewcommand{\arraystretch}{0.8} %<- modify value to suit your needs
		\begin{tabular}{c}			
        \includegraphics[width=5.5in]{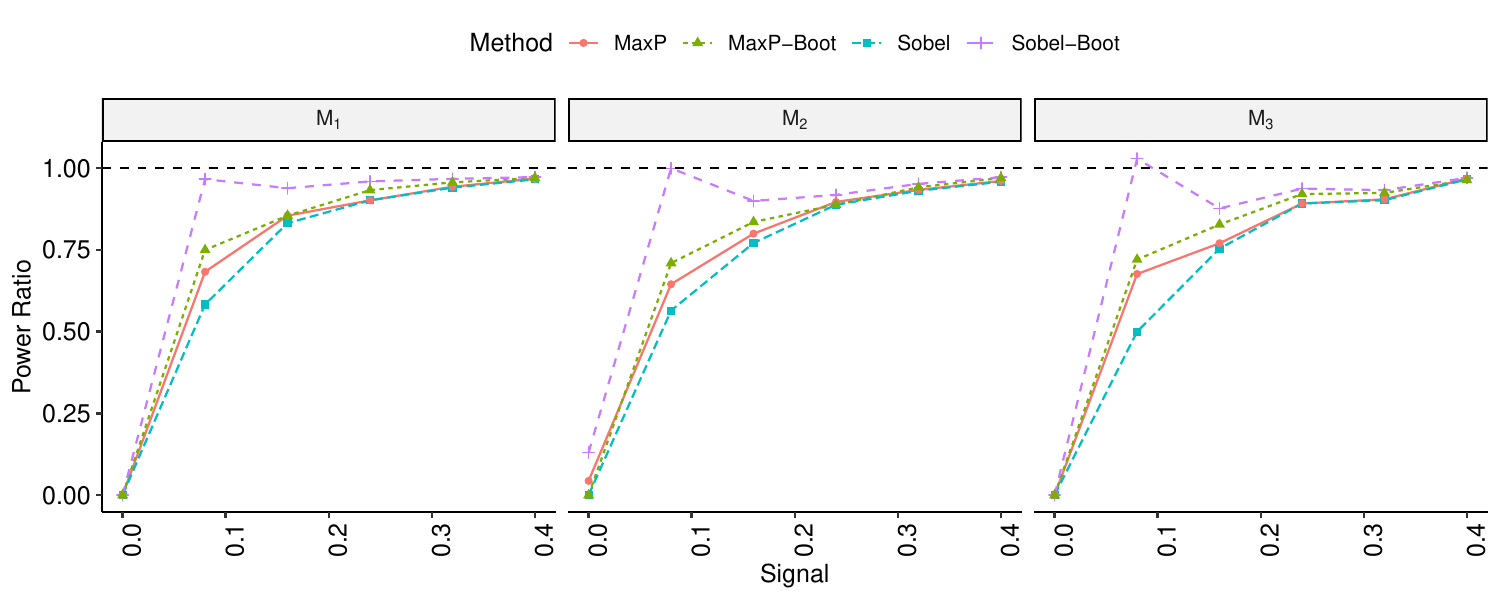}\\
        (a)  Signal strength  $\alpha_k = \beta_k$ \\
        \includegraphics[width=5.5in]{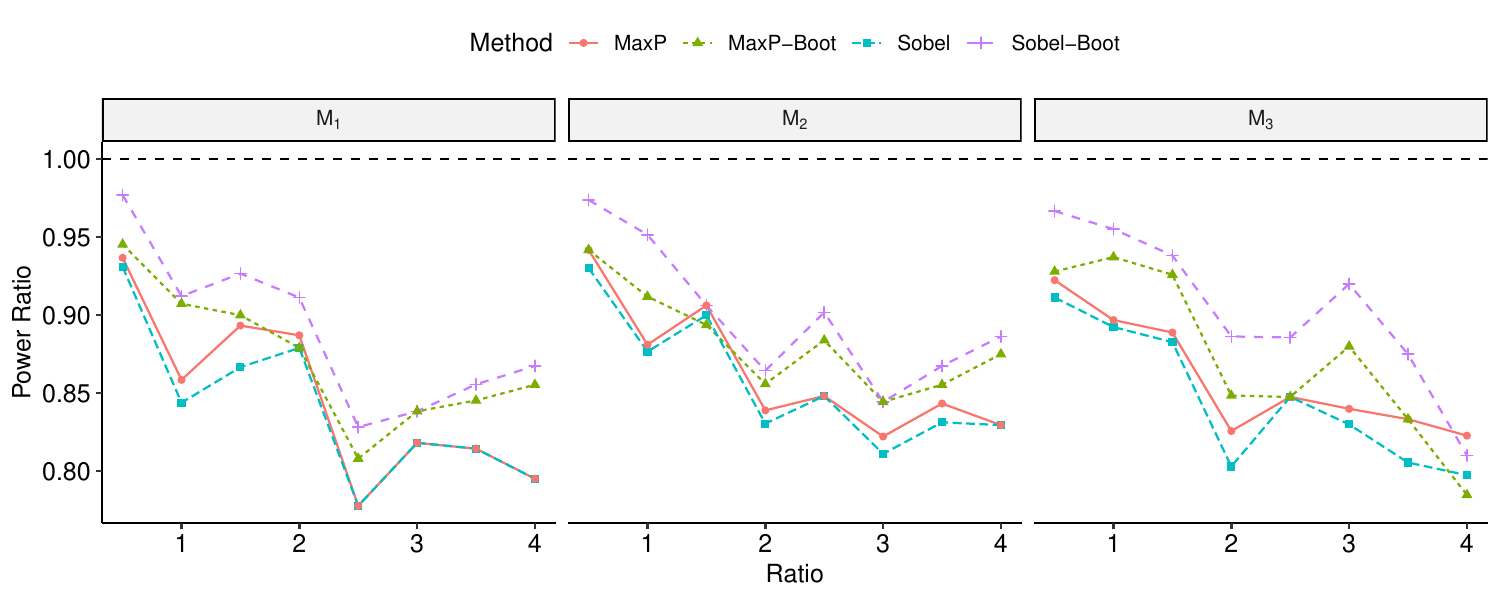}\\
        (b) Ratio $\alpha_k/\beta_k$
		\end{tabular}
	}
	\captionsetup{font = footnotesize}
	\caption{Empirical power ratio of the competing test versus the proposed test  over the signal strength  $\alpha_k = \beta_k$ in panel (a) and over the ratio $\alpha_k/\beta_k$ in panel (b). The composite mediation pathway $S\rightarrow M_k\leadsto Y$, $k=1,2,3$ is tested from the left ($k=1$) to right ($k=3$). The model is probit SEM.}
	\label{fig:power_three_mediators_probit}
\end{figure}
\begin{figure}[H]
    \centering
    \includegraphics[width=0.9\linewidth]{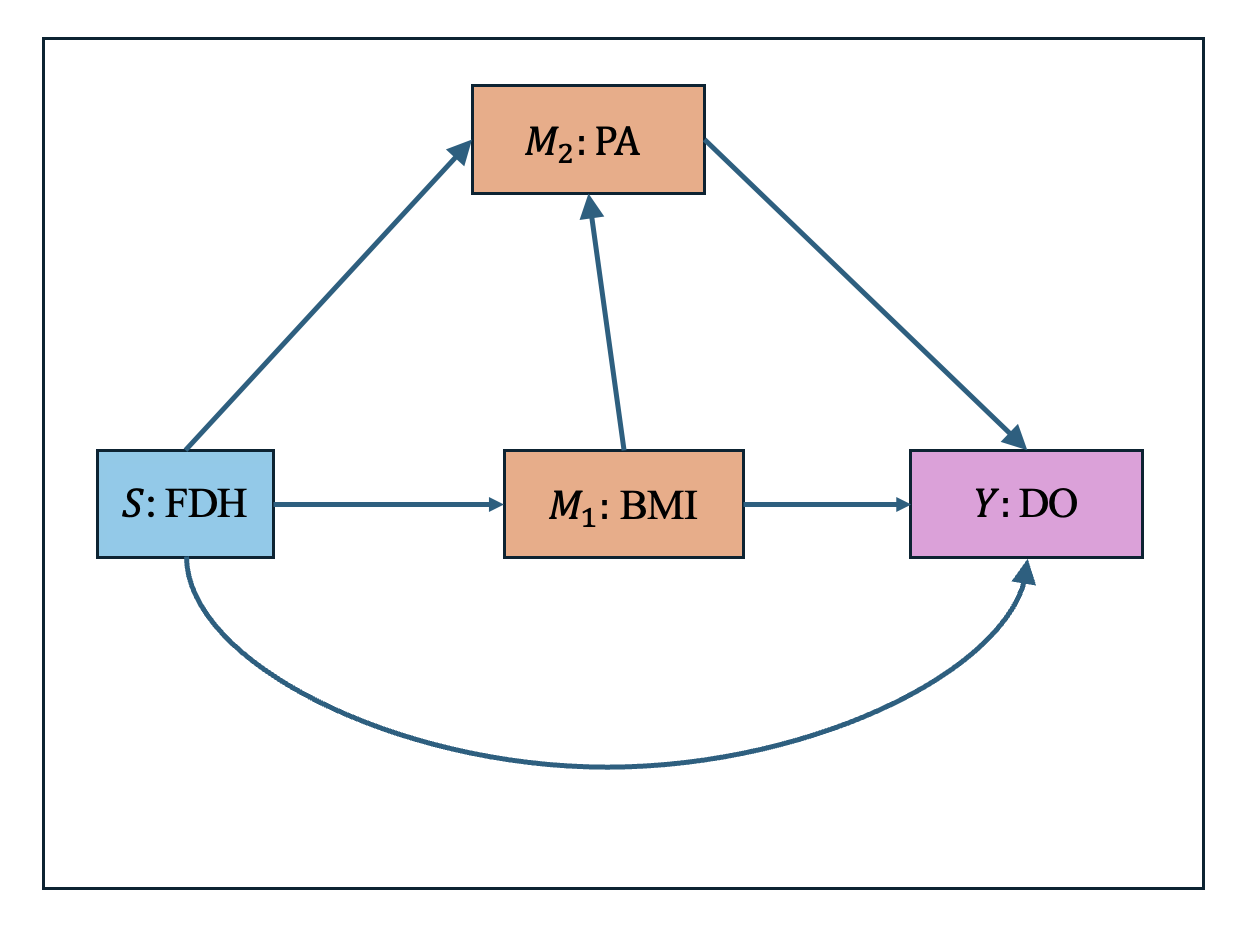}
    \caption{A directed acyclic graph with two sequentially ordered mediators, $M_1$ (BMI: Body Mass Index) and $M_2$ (PA: Physical Activity) through which prenatal exposure $S$ (FDH: Family Diabetes History) influences outcome $Y$ (DO: Diabetes Onset).}
    \label{fig:aou_dag}
\end{figure}

\clearpage
\section{Additional Tables}

\begin{table}[H]
\centering
\caption{Baseline Characteristics of the NIH All of US Fitbit Cohort} 
\begin{tabular}{lll}
  \toprule
{Variable} & {Level} & {Summary} \\ 
  \midrule
Sample Size (N) &  & 20233 \\ 
\hline
 Diabetes (\%) & No & 19898 (98.3) \\ 
   & Yes & 335 (1.7) \\
  \hline
  Family History of Diabetes (\%) & No & 13609 (67.3) \\ 
   & Yes & 6624 (32.7) \\ 
   \hline
  BMI (mean (SD)) &  & 30.59 (27.33) \\ 
  \hline
  Steps (mean (SD)) &  & 6095.17 (2726.47) \\ 
  \hline
  Age (mean (SD)) &  & 55.03 (16.22) \\ 
  \hline
  Sex (\%) & Females & 13995 (69.2) \\ 
   & Males & 6238 (30.8) \\ 
   \hline
  Smoking (\%) & No & 11865 (58.6) \\ 
   & Yes & 8368 (41.4) \\ 
   \hline
  Alcohol (\%) & No & 849 (4.2) \\ 
   & Yes & 19384 (95.8) \\ 
   \hline
  Married (\%) & No & 8970 (44.3) \\ 
   & Yes & 11263 (55.7) \\ 
   \hline
  Race (\%) & Non-NHW & 5970 (29.5) \\ 
   & NHW & 14263 (70.5) \\ 
   \bottomrule
\end{tabular}
\label{tab:baseline}
\end{table}

\begin{table}[ht]
    \centering
    \caption{$P$-values and estimated effects of each pathway for all lipids in the mediation analysis of $\text{maternal blood lead}\to\text{birth weight}\to \text{lipid}\to \text{childhood BMI}$.}
    \label{table:real_data_detailed}
    \resizebox{\textwidth}{!}{
    \begin{tabular}{l cccccccccccccc}
        \toprule
        & \multicolumn{5}{c}{$S\to M_1\leadsto Y$} & \multicolumn{5}{c}{$S\to M_2\to Y$} & \multicolumn{4}{c}{Effects}\\
        Lipid & SOMET & JS & JS-B & PoC & PoC-B & SOMET & JS & JS-B & PoC & PoC-B & NIE ($M_1$) & NIE ($M_2$) & NDE & NTE \\
        \midrule
        FA 18:0 OH-3   & 0.8516 & 0.37 & 0.27 & 0.4456 & 0.285 & 0.0007 & 0.3884 & 0.342 & 0.4139 & 0.221 & 0.0061 & -0.0021 & 0.2614 & 0.2654 \\
        FA 9:0 OH   & 0.7936 & 0.3616 & 0.264 & 0.4474 & 0.287 & 0.0107 & 0.9659 & 0.946 & 0.9661 & 0.946 & 0.006 & 0.0038 & 0.2378 & 0.2476 \\
        FA 14:0 OH   & 0.8543 & 0.3661 & 0.266 & 0.4417 & 0.281 & 0.0216 & 0.0033 & 0.415 & 0.03 & 0.001 & 0.0061 & -0.0071 & 0.2628 & 0.2618 \\
        FA 17:0 DiC   & 0.9281 & 0.3647 & 0.285 & 0.4731 & 0.331 & 0.0392 & 0.8544 & 0.781 & 0.8597 & 0.81 & 0.0058 & -0.0001 & 0.2443 & 0.25 \\
        FA 22:3   & 0.8709 & 0.3704 & 0.266 & 0.4441 & 0.282 & 0.0618 & 0.383 & 0.241 & 0.442 & 0.224 & 0.0058 & 0.0018 & 0.2486 & 0.2563 \\
        FA 13:0 DiC   & 0.8176 & 0.3657 & 0.264 & 0.4319 & 0.281 & 0.0713 & 0.1597 & 0.362 & 0.2575 & 0.082 & 0.0065 & -0.0087 & 0.2505 & 0.2484 \\
        FA 20:2   & 0.8544 & 0.3687 & 0.266 & 0.447 & 0.283 & 0.0871 & 0.7424 & 0.73 & 0.7443 & 0.721 & 0.0063 & 0.0083 & 0.2571 & 0.2717 \\
        FA 4:0 OH   & 0.865 & 0.3707 & 0.287 & 0.4741 & 0.329 & 0.1099 & 0.1107 & 0.05 & 0.1672 & 0.039 & 0.006 & 0.0066 & 0.2382 & 0.2508 \\
        FA 12:0 OH-2   & 0.8614 & 0.3674 & 0.274 & 0.445 & 0.289 & 0.1369 & 0.408 & 0.36 & 0.4229 & 0.277 & 0.0062 & 0.0072 & 0.247 & 0.2605 \\
        FA 8:0 NH2 II (2-aminooctanoate)   & 0.8963 & 0.3707 & 0.284 & 0.4791 & 0.327 & 0.1469 & 0.7845 & 0.69 & 0.7925 & 0.687 & 0.0064 & -0.0093 & 0.2704 & 0.2675 \\
        FA 7:0 OH-1   & 0.975 & 0.3848 & 0.297 & 0.4943 & 0.334 & 0.1484 & 0.1701 & 0.167 & 0.1988 & 0.12 & 0.0055 & 0.001 & 0.2608 & 0.2673 \\
        FA 18:0 OH-2   & 0.8965 & 0.3706 & 0.277 & 0.477 & 0.327 & 0.1489 & 0.9017 & 0.826 & 0.9022 & 0.812 & 0.0065 & -0.001 & 0.249 & 0.2545 \\
        FA 18:1-2   & 0.9181 & 0.3703 & 0.271 & 0.473 & 0.322 & 0.173 & 0.966 & 0.922 & 0.9677 & 0.935 & 0.0059 & 0.0037 & 0.253 & 0.2626 \\
        FA 7:0 OH-2   & 0.9487 & 0.3829 & 0.294 & 0.4868 & 0.326 & 0.1871 & 0.1396 & 0.147 & 0.18 & 0.118 & 0.0053 & 0.0044 & 0.2576 & 0.2674 \\
        FA 10:0 DiC (sebacic acid)   & 0.8602 & 0.3666 & 0.279 & 0.461 & 0.316 & 0.2086 & 0.5699 & 0.523 & 0.5932 & 0.517 & 0.006 & 0.0003 & 0.2641 & 0.2704 \\
        FA 26:0 OH   & 0.8857 & 0.3702 & 0.27 & 0.4714 & 0.314 & 0.2125 & 0.5548 & 0.457 & 0.6191 & 0.503 & 0.0063 & -0.0082 & 0.2637 & 0.2618 \\
        FA 22:2   & 0.8521 & 0.3657 & 0.267 & 0.4559 & 0.301 & 0.2157 & 0.492 & 0.439 & 0.5338 & 0.405 & 0.0065 & -0.0002 & 0.264 & 0.2703 \\
        FA 8:0 (Caprylic acid)   & 0.9163 & 0.3671 & 0.274 & 0.4589 & 0.315 & 0.228 & 0.9661 & 0.939 & 0.9664 & 0.933 & 0.0054 & 0.0005 & 0.2546 & 0.2605 \\
        FA 24:4   & 0.8557 & 0.3711 & 0.277 & 0.4455 & 0.29 & 0.2311 & 0.2574 & 0.236 & 0.3798 & 0.265 & 0.0064 & 0.0231 & 0.2301 & 0.2597 \\
        FA 16:0   & 0.8275 & 0.3685 & 0.266 & 0.4442 & 0.283 & 0.2457 & 0.3376 & 0.25 & 0.3711 & 0.203 & 0.0062 & 0.027 & 0.2362 & 0.2694 \\
        FA 26:1 DiC   & 0.8709 & 0.3682 & 0.272 & 0.4522 & 0.298 & 0.267 & 0.6796 & 0.668 & 0.6828 & 0.641 & 0.0056 & -0.0008 & 0.2633 & 0.2682 \\
        FA 16:0 DiC   & 0.9074 & 0.3679 & 0.27 & 0.4591 & 0.31 & 0.2791 & 0.8009 & 0.713 & 0.8315 & 0.696 & 0.0059 & -0.0062 & 0.259 & 0.2587 \\
        FA 16:2   & 0.8627 & 0.3659 & 0.28 & 0.4806 & 0.336 & 0.2936 & 0.4084 & 0.545 & 0.547 & 0.43 & 0.0057 & 0.0037 & 0.241 & 0.2504 \\
        FA 18:0 OH-4   & 0.8369 & 0.369 & 0.27 & 0.4431 & 0.289 & 0.2954 & 0.0794 & 0.044 & 0.1506 & 0.042 & 0.0064 & 0.0257 & 0.2357 & 0.2678 \\
        FA 16:0 OH-2   & 0.9021 & 0.3644 & 0.259 & 0.4418 & 0.274 & 0.2984 & 0.1732 & 0.507 & 0.3126 & 0.138 & 0.0058 & 0.0006 & 0.2572 & 0.2636 \\
        FA 18:1 3   & 0.7949 & 0.3686 & 0.266 & 0.4498 & 0.281 & 0.3194 & 0.96 & 0.913 & 0.9617 & 0.923 & 0.0064 & 0.0043 & 0.2472 & 0.258 \\
        FA 20:3   & 0.8684 & 0.3694 & 0.266 & 0.448 & 0.286 & 0.3405 & 0.7011 & 0.556 & 0.7626 & 0.605 & 0.0061 & 0.0012 & 0.2689 & 0.2762 \\
        FA 28:1 DiC   & 0.9204 & 0.3673 & 0.27 & 0.4539 & 0.302 & 0.3444 & 0.5715 & 0.434 & 0.6395 & 0.49 & 0.005 & -0.0079 & 0.2753 & 0.2725 \\
        FA 22:1   & 0.8466 & 0.363 & 0.27 & 0.4494 & 0.294 & 0.3596 & 0.4225 & 0.322 & 0.4567 & 0.285 & 0.0065 & -0.0015 & 0.2628 & 0.2678 \\
        FA 16:0 OH-1   & 0.8698 & 0.3658 & 0.263 & 0.4442 & 0.287 & 0.3742 & 0.0241 & 0.384 & 0.0833 & 0.022 & 0.0059 & -0.0213 & 0.2793 & 0.2639 \\
        FA 14:0 OH   & 0.8632 & 0.367 & 0.269 & 0.4439 & 0.286 & 0.378 & 0.0562 & 0.374 & 0.135 & 0.024 & 0.006 & -0.004 & 0.2616 & 0.2635 \\
        FA 18:0 DiC   & 0.8934 & 0.3674 & 0.284 & 0.4771 & 0.33 & 0.4257 & 0.325 & 0.495 & 0.4088 & 0.254 & 0.0064 & -0.0167 & 0.2616 & 0.2513 \\
        FA 22:2 DiC   & 0.8524 & 0.3622 & 0.276 & 0.4647 & 0.32 & 0.4319 & 0.2823 & 0.516 & 0.4021 & 0.241 & 0.0062 & 0.0127 & 0.2334 & 0.2523 \\
        FA 22:0   & 0.8701 & 0.3699 & 0.274 & 0.457 & 0.3 & 0.4546 & 0.9477 & 0.912 & 0.9479 & 0.907 & 0.0069 & 0.0009 & 0.2502 & 0.258 \\
        FA 12:0 OH-1   & 0.8452 & 0.363 & 0.265 & 0.4446 & 0.283 & 0.455 & 0.0052 & 0.415 & 0.0376 & 0.006 & 0.006 & -0.0184 & 0.273 & 0.2606 \\
        FA 18:0 OH-5   & 0.8214 & 0.3679 & 0.268 & 0.4562 & 0.299 & 0.4581 & 0.7572 & 0.746 & 0.7605 & 0.7 & 0.0058 & 0.012 & 0.2649 & 0.2827 \\
        FA 5:0 OH   & 0.9523 & 0.3798 & 0.282 & 0.4741 & 0.32 & 0.4707 & 0.3712 & 0.331 & 0.3793 & 0.3 & 0.0051 & 0.0106 & 0.257 & 0.2727 \\
        FA 26:0   & 0.9114 & 0.3686 & 0.285 & 0.4772 & 0.328 & 0.4847 & 0.908 & 0.841 & 0.9128 & 0.833 & 0.0066 & -0.0044 & 0.2587 & 0.2609 \\
        FA 27:0 OH   & 0.8549 & 0.3695 & 0.313 & 0.4891 & 0.359 & 0.4848 & 0.6911 & 0.575 & 0.7098 & 0.576 & 0.0061 & 0 & 0.2538 & 0.2598 \\
        FA 26:2 DiC   & 0.8658 & 0.3653 & 0.266 & 0.4599 & 0.308 & 0.4851 & 0.4527 & 0.344 & 0.4786 & 0.295 & 0.0059 & 0.0191 & 0.2274 & 0.2524 \\
        FA 18:0   & 0.8571 & 0.3706 & 0.269 & 0.4485 & 0.286 & 0.4868 & 0.5221 & 0.44 & 0.5455 & 0.398 & 0.0065 & 0.016 & 0.2472 & 0.2697 \\
        FA 24:5   & 0.862 & 0.3714 & 0.275 & 0.4517 & 0.292 & 0.5288 & 0.2587 & 0.194 & 0.3673 & 0.205 & 0.0064 & 0.0075 & 0.2528 & 0.2667 \\
        FA 20:0   & 0.8488 & 0.3713 & 0.265 & 0.4528 & 0.285 & 0.5447 & 0.3473 & 0.244 & 0.4282 & 0.239 & 0.0068 & 0.0121 & 0.2458 & 0.2648 \\
        FA 16:2   & 0.8739 & 0.368 & 0.263 & 0.4432 & 0.283 & 0.5678 & 0.4273 & 0.319 & 0.4623 & 0.279 & 0.0058 & 0.0066 & 0.2582 & 0.2707 \\
        FA 16:0 OH-3   & 0.8183 & 0.3639 & 0.264 & 0.4499 & 0.297 & 0.5938 & 0.6499 & 0.522 & 0.6982 & 0.561 & 0.0066 & -0.0023 & 0.2466 & 0.251 \\
        FA 26:0 OH   & 0.8885 & 0.375 & 0.279 & 0.4703 & 0.313 & 0.6423 & 0.6727 & 0.553 & 0.7085 & 0.551 & 0.0059 & 0.0049 & 0.2422 & 0.253 \\
        FA 20:1   & 0.8424 & 0.3681 & 0.267 & 0.4467 & 0.284 & 0.6444 & 0.4655 & 0.338 & 0.5401 & 0.358 & 0.0067 & 0.003 & 0.2533 & 0.2629 \\
        FA 18:1 DiC   & 0.9391 & 0.3729 & 0.28 & 0.4639 & 0.316 & 0.6964 & 0.2476 & 0.195 & 0.3583 & 0.216 & 0.0063 & 0.0064 & 0.242 & 0.2547 \\
        FA 22:3   & 0.7937 & 0.3629 & 0.259 & 0.4484 & 0.286 & 0.7088 & 0.499 & 0.56 & 0.5531 & 0.394 & 0.0062 & 0.0014 & 0.2498 & 0.2573 \\
        FA 22:5   & 0.8716 & 0.37 & 0.263 & 0.4422 & 0.281 & 0.7252 & 0.7501 & 0.708 & 0.7557 & 0.637 & 0.006 & 0.0055 & 0.2618 & 0.2733 \\
        FA 8:0 NH2 (2-aminooctanoate)   & 0.8284 & 0.3606 & 0.256 & 0.4358 & 0.277 & 0.7264 & 0.7986 & 0.694 & 0.8079 & 0.685 & 0.0064 & 0.0012 & 0.2457 & 0.2533 \\
        FA 12:0 NH-2   & 0.8217 & 0.3677 & 0.263 & 0.4494 & 0.286 & 0.734 & 0.4512 & 0.323 & 0.5243 & 0.327 & 0.0056 & -0.0181 & 0.2733 & 0.2608 \\
        FA 28:2 DiC   & 0.8697 & 0.3637 & 0.268 & 0.4576 & 0.308 & 0.7548 & 0.1317 & 0.411 & 0.2226 & 0.106 & 0.0051 & -0.0019 & 0.2451 & 0.2482 \\
        FA 22:4   & 0.8701 & 0.3662 & 0.264 & 0.4408 & 0.279 & 0.763 & 0.8562 & 0.786 & 0.8599 & 0.774 & 0.0064 & 0.0096 & 0.253 & 0.269 \\
        FA 4:0 OH   & 0.8551 & 0.3703 & 0.281 & 0.4705 & 0.325 & 0.8 & 0.8553 & 0.816 & 0.8559 & 0.818 & 0.0063 & -0.0028 & 0.2494 & 0.2529 \\
        FA 20:0 DiC   & 0.9186 & 0.3655 & 0.293 & 0.4786 & 0.335 & 0.8057 & 0.4792 & 0.614 & 0.5131 & 0.431 & 0.006 & -0.0032 & 0.2449 & 0.2476 \\
        FA 16:4   & 0.8851 & 0.3648 & 0.267 & 0.4576 & 0.3 & 0.8502 & 0.9347 & 0.889 & 0.9352 & 0.879 & 0.0058 & 0.051 & 0.189 & 0.2457 \\
        FA 20:0 DiC   & 0.9213 & 0.3614 & 0.285 & 0.4749 & 0.328 & 0.8885 & 0.7124 & 0.629 & 0.7296 & 0.577 & 0.0057 & -0.0134 & 0.2625 & 0.2548 \\
        FA 22:4   & 0.8859 & 0.3695 & 0.261 & 0.4413 & 0.279 & 0.9208 & 0.3759 & 0.249 & 0.446 & 0.243 & 0.006 & 0.0139 & 0.2472 & 0.2671 \\
        FA 27:0 OH   & 0.8394 & 0.3709 & 0.275 & 0.4665 & 0.311 & 0.9291 & 0.7611 & 0.666 & 0.8026 & 0.666 & 0.0063 & -0.0022 & 0.2461 & 0.2503 \\
        FA 18:1-1   & 0.8515 & 0.3681 & 0.27 & 0.4461 & 0.285 & 0.9343 & 0.9046 & 0.882 & 0.9048 & 0.884 & 0.0061 & 0.0203 & 0.2467 & 0.273 \\
        FA 26:1 DiC   & 0.8519 & 0.3631 & 0.27 & 0.4595 & 0.31 & 0.9428 & 0.1145 & 0.332 & 0.1807 & 0.099 & 0.0059 & -0.0092 & 0.2613 & 0.2579 \\
        FA 22:3   & 0.8597 & 0.3697 & 0.268 & 0.4453 & 0.284 & 0.9717 & 0.2745 & 0.226 & 0.4378 & 0.284 & 0.0064 & 0.0124 & 0.2475 & 0.2663 \\
        FA 21:0 DiC   & 0.9253 & 0.3666 & 0.287 & 0.4839 & 0.34 & 0.9825 & 0.7508 & 0.784 & 0.7583 & 0.702 & 0.0059 & -0.0008 & 0.24 & 0.245 \\
        FA 15:0 DiC   & 0.8879 & 0.3692 & 0.272 & 0.4544 & 0.302 & 0.9827 & 0.512 & 0.375 & 0.6197 & 0.426 & 0.0062 & 0.0005 & 0.2456 & 0.2523 \\
        \bottomrule
    \end{tabular}}
\end{table}

\end{document}